\newenvironment{tbox}{\begin{tcolorbox}[
    enlarge top by=5pt,
    enlarge bottom by=5pt,
     breakable,
     boxsep=0pt,
                  left=4pt,
                  right=5pt,
                  top=10pt,
                  arc=0pt,
                  boxrule=1pt,toprule=1pt,
                  colback=white
                  ]
  }
{\end{tcolorbox}}
\tikzset{vertex/.style={circle, black, fill=Yellow, line width=1pt, draw, minimum width=8pt, minimum height=8pt, inner sep=0pt}}
\definecolor{DarkRed}{rgb}{0.5,0.1,0.1}
\definecolor{DarkBlue}{rgb}{0.1,0.1,0.5}
\definecolor{ForestGreen}{rgb}{0.1333,0.5451,0.1333}
\definecolor{Red}{rgb}{0.9,0,0}
 \newcommand{\tvd}[2]{\ensuremath{\norm{#1 - #2}_{\mathrm{tvd}}}}
\newcommand{\Ot}{\ensuremath{\widetilde{O}}}
\newcommand{\eps}{\ensuremath{\varepsilon}}
\newcommand{\Paren}[1]{\Big(#1\Big)}
\newcommand{\Bracket}[1]{\Big[#1\Big]}
\newcommand{\bracket}[1]{\left[#1\right]}
\newcommand{\paren}[1]{\ensuremath{\left(#1\right)}\xspace}
\newcommand{\card}[1]{\left\vert{#1}\right\vert}
 \DeclareMathOperator*{\Exp}{\ensuremath{{\mathbb{E}}}}
 \newcommand{\kl}[2]{\ensuremath{\mathbb{D}(#1~||~#2)}}
\newcommand{\II}{\ensuremath{\mathbb{I}}}
\newcommand{\mireal}[1][]{
  \ifx\relax#1\relax%
    \II(\mione \,; \mitwo)%
  \else%
    \II(\mione \,; \mitwo\mid #1)%
  \fi
}
\newcommand{\R}{\mathbb{R}}
\newtheorem{theorem}{Theorem}
\newtheorem{lemma}{Lemma}[section]
\newtheorem{claim}[lemma]{Claim}
\newtheorem{proposition}[lemma]{Proposition}
\newtheorem{fact}[lemma]{Fact}
\newcommand{\norm}[1]{\left\lVert#1\right\rVert}
\newtheorem{mdresult}{Result}
\newenvironment{result}{\begin{mdframed}[backgroundcolor=lightgray!40,topline=false,rightline=false,leftline=false,bottomline=false,innertopmargin=2pt]\begin{mdresult}}{\end{mdresult}\end{mdframed}}
\newtheorem*{lemma*}{Lemma}
\newtheorem*{proposition*}{Proposition}
\newtheorem{definition}[lemma]{Definition}
\newcommand{\expect}{{\mathbb E}}
\newcommand{\var}{{\mbox{Var}}}
\newcommand{\tr}{\text{tr}}
\newcommand{\poly}{\text{poly}}
\newcommand{\vol}{\text{vol}}
\newcommand{\spn}{\mathrm{span}}
\newcommand{\dmin}{d_{\min}}
\newcommand{\bOne}{\mathbf{1}}
\newcommand{\bA}{\mathbf{A}}
\newcommand{\bD}{\mathbf{D}}
\newcommand{\bM}{\mathbf{M}}
\newcommand{\oM}{\overline{\mathbf{M}}}
\newcommand{\bI}{\mathbf{I}}
\newcommand{\bL}{\mathbf{L}}
\newcommand{\be}{\mathbf{e}}
\newcommand{\smp}{\mathrm{sp}}
\newcommand{\E}{\mathop{\mathbb{E}}}
\newcommand{\fir}{\textrm{first}}
\newcommand{\bU}{\mathbf{U}}
\newcommand{\bSigma}{\mathbf{\Sigma}}
\crefname{equation}{eq}{Eq}
  \gdef\xxxmark{%
    \expandafter\ifx\csname @mpargs\endcsname\relax 
    \expandafter\ifx\csname @captype\endcsname\relax 
    \marginpar{xxx}
    \else
    xxx 
    \fi
    \else
    xxx 
    \fi}
  \gdef\xxx{\@ifnextchar[\xxx@lab\xxx@nolab}
  \long\gdef\xxx@lab[#1]#2{{\bf [\xxxmark #2 ---{\sc #1}]}}
  \long\gdef\xxx@nolab#1{{\bf [\xxxmark #1]}}
  \long\gdef\xxx@lab[#1]#2{}\long\gdef\xxx@nolab#1{}%
\newcommand{\bB}{\mathbf{B}}
\newcommand{\bS}{\mathbf{S}}
\newcommand{\bg}{\mathbf{g}}
\newcommand{\ustar}{u^{*}}
\newcommand{\vstar}{v^{*}}
\newcommand{\estar}{e^{*}}
\newcommand{\bp}{\bm{p}}
\newcommand{\eff}[2]{R_{\textnormal{eff}}^{#1}(#2)}
\newcommand{\KLD}[3]{\mathbb{D}_{#3}(#1\,\,||\,\, #2)}
\newcommand{\dist}{\mathcal{D}}
\newcommand{\dists}{\dist^{\textnormal{\texttt{smpl}}}}
\newcommand{\distspan}{\mu}
\newcommand{\NN}{\mathcal{N}}
\newcommand{\set}[1]{\{#1\}}
\newcommand{\alg}{\ensuremath{\mathcal{A}}}
\newcommand{\bSv}{\mathcal{S}}
\newcommand{\bpv}{\mathcal{P}}
\begin{document}

\title{On Constructing Spanners from Random Gaussian Projections}

\author{
  Sepehr Assadi\footnote{(sepehr.assadi@rutgers.edu) Department of Computer Science, Rutgers University.}
  \and
  Michael Kapralov\footnote{(michael.kapralov@epfl.ch) School of Computer and Communication Sciences, EPFL.}
  \and
  Huacheng Yu\footnote{(yuhch123@gmail.com) Department of Computer Science, Princeton University.} 
  }

\maketitle

\pagenumbering{roman}

\begin{abstract}

\bigskip

Graph sketching is a powerful paradigm for analyzing graph structure via linear measurements introduced by Ahn, Guha, and McGregor (SODA'12) that has since found numerous applications in streaming, distributed computing, and massively parallel algorithms, among others. Graph sketching has proven to be quite successful for various problems such as connectivity, minimum spanning trees, edge or vertex connectivity, and cut or spectral sparsifiers. Yet, the problem 
of approximating shortest path metric of a graph, and specifically computing a spanner, is notably missing from the list of successes. This has turned the status of this fundamental problem into one of the most longstanding open questions in this area. 

\smallskip

We present a partial explanation of this lack of success by proving a {strong lower bound} for a \underline{large family of graph sketching algorithms} that encompasses prior work on spanners and many (but importantly not also all) related cut-based problems mentioned above.  
Our lower bound matches the algorithmic bounds of the recent result of Filtser, Kapralov, and Nouri (SODA'21), up to lower order terms, for constructing spanners via the same graph sketching family. This establishes near-optimality of these bounds, at least restricted to this family
of graph sketching techniques, and makes progress on a conjecture posed in this latter work.

\end{abstract}

\clearpage

\setcounter{tocdepth}{3}
\tableofcontents

\clearpage

\pagenumbering{arabic}
\setcounter{page}{1}

\section{Introduction}

Analyzing structure of different objects via random linear projections, also known as \emph{sketching}, is a fundamental paradigm that arise in various contexts. 
Canonical examples of this approach include dimensionality reduction results such as Johnson-Lindenstrauss lemma~\cite{JohnsonL84}, sparse recovery results in compressed sensing~\cite{Donoho06}, 
approximation algorithms for large matrices~\cite{Sarlos06,Woodruff14}, or various sketches for statistical estimation such as AMS sketch~\cite{AlonMS96}, count sketch~\cite{CharikarCF02}, or count-min sketch~\cite{CormodeM04} in data streams. 

A pioneering work of~\cite{AhnGM12a} initiated \emph{graph sketching} that considers this paradigm for graphs. 
A graph sketching algorithm samples a sketching matrix $A$ from a fixed distribution, independent of the input graph $G$, and compute $A \cdot R(G)$ where $R$ is a suitable representation of $G$ chosen by the algorithm designer, say, its adjacency matrix, Laplacian, or (signed) edge-incidence matrix. The  algorithm then uses $A \cdot R(G)$, referred to as the \emph{sketch}, to (approximately) discover properties of $G$ with no further access to $G$, e.g.,
to determine whether or not $G$ is connected. Assuming one can design a sketching matrix $A$ with ``few'' rows, this approach leads to sketches that can be stored and updated efficiently and be used to recover fundamental properties of $G$. 

The linearity of the sketches and the natural ``composability'' guarantee that comes with it makes graph sketching a versatile tool in many applications. For instance, graph sketching is the de facto method of algorithm design for \emph{dynamic streaming} algorithms that process a graph specified via a sequence of edge insertions and deletions; see, e.g.~\cite{AhnGM12a,AhnGM12b,KapralovLMMS14,AssadiKLY16,KapralovMMMNST20,FiltserKN21}. 
Graph sketching works seamlessly in this model as linearity of sketches allows one to update them easily after each update in the stream. It is even known that this method is {universal} 
for dynamic streams under certain (strong) assumptions on length of the stream~\cite{LiNW14,AiHLW16} (see also~\cite{KallaugherP20} for necessity of these assumptions). Another model 
that has benefited greatly from graph sketching is that of \emph{distributed sketching} (a.k.a. simultaneous communication model or broadcast congested clique) wherein every vertex is a processor that sees only edges incident on the vertex and its task 
is to communicate a small message, simultaneously with other vertices, that allows a referee to solve the problem; see, e.g.~\cite{BeckerMNRST11,AhnGM12a,AhnGM12b,BeckerMRT14,NelsonY19,AssadiKO20,Yu21}. Finally, graph sketching has also been a powerful tool for designing distributed or massively parallel algorithms; see, e.g.~\cite{AhnGM12a,AhnGM12b,HegemanPPSS15,GhaffariP16,PanduranganRS18,JurdzinskiN18,FernandezW020,FiltserKN21}. 

All these considerations have turned graph sketching into a highly attractive solution concept in the last decade since their introduction in~\cite{AhnGM12a}. We now have 
 efficient sketches, that often match existentially optimal bounds up to poly-log factors\footnote{For instance, sketches of size $O(n\log^3{n})$ for spanning forests of $n$-vertex graphs~\cite{AhnGM12a} compared to existential bound of $\Omega(n\log{n})$ bits  to store the spanning forest.},  for various fundamental problems such as connectivity~\cite{AhnGM12a}, minimum spanning trees~\cite{AhnGM12a}, edge connectivity~\cite{AhnGM12a}, vertex connectivity~\cite{GuhaMT15}, cut sparsifiers~\cite{AhnGM12b}, 
spectral sparsifiers~\cite{KapralovLMMS14,KapralovMMMNST20}, graph coloring~\cite{AssadiCK19}, densest subgraph~\cite{McGregorTVV15}, 
and others. 

\paragraph{Graph sketching for spanners.} We study graph sketching for the problem of computing \emph{spanners} that (approximately) preserve the {shortest path metric} of the input graph. 
Formally, 
\begin{definition}\label{def:spanner}
	A subgraph $H$ of a graph $G=(V, E)$ is a $d$-spanner of $G$ for some integer $d \geq 1$, called the \underline{stretch} of the spanner, if for every pair $u, v\in V$ one has 
\[
dist_G(u, v)\leq dist_H(u, v)\leq d \cdot dist_G(u, v),
\]
where $dist_*(\cdot,\cdot)$ stands for the shortest path metric of the corresponding graph. 
\end{definition}
For every integer $k \geq 1$, every $n$-vertex graph $G=(V,E)$ admits a $(2k-1)$-spanner with only $O(n^{1+1/k})$ edges which is also existentially optimal under the widely-believed {Erd\H{o}s Girth Conjecture}. 
For instance, every graph admits an $O(\log{n})$-spanner on $O(n)$ edges. 

Spanners are notably absent from the list of successes in graph sketching. Indeed, despite the significant attention given to sketching spanners, see, e.g.,~\cite{AhnGM12b,KapralovW14,FernandezW020,FiltserKN21,ElkinT21}, until very recently, it was not even known whether an $o(n)$-spanner can be recovered via sketches of $O(n)$ size. 
The work of~\cite{FiltserKN21} made the first progress on this 
problem in nearly a decade by presenting an $O(n^{2/3})$-spanner using sketches of $\Ot(n)$ size\footnote{We use $\Ot(\cdot)$ and $\widetilde{\Omega}(\cdot)$ notation to hide poly-log factors.}, or more generally a $d$-spanner using sketches of size $\Ot(n^2/d^{3/2})$.
But such bounds are still quite far from existential bounds on spanners dictated by the girth conjecture. Yet, no non-trivial lower bounds are known for this problem\footnote{This state-of-affairs is in sharp contrast with another widely-studied problem of finding 
large matchings which is also absent from the list of successes in graph sketching; for the matching problem, \emph{asymptotically tight} lower bounds which are much stronger than existential bounds are known; see~\cite{AssadiKLY16,DarkK20,AssadiS22}.}, 
beside the work of~\cite{NelsonY19} (see also~\cite{Yu21}) that proves that finding \emph{any} spanning tree requires sketches of size $\Omega(n\log^3{n})$ bits (namely, a lower bound for any spanner of finite stretch). 

The lack of progress on understanding graph sketching for spanners have also been consequential in other computing models that use graph sketching as their primary tool, most notably, the dynamic streaming model. Indeed, complexity of spanners has been a tantalizing open question in the dynamic streaming model already since its introduction in~\cite{AhnGM12a} (for insertion-only streams, 
optimal algorithms that essentially match existential bounds under Erd\H{o}s girth conjecture have been known since the introduction of the model itself in~\cite{FeigenbaumKMSZ04}; see, also~\cite{BaswanaS07,FeigenbaumKMSZ08}). 

This state-of-affairs raises the following question: \emph{What is the best stretch-vs-size tradeoff possible for constructing spanners via graph sketching?} We make progress on this longstanding open question 
by proving a nearly-tight lower bound for a large family of graph sketching algorithms that encompasses prior  work on spanners in~\cite{FiltserKN21} and most other closely related problems. 

\subsection{Our Contribution}\label{sec:results}

We prove \textbf{a lower bound on the size of a special case yet general family of sketches for graph spanners}. 
This family, that shall be defined shortly, contains the prior sketching algorithm of~\cite{FiltserKN21} for graph spanners -- our lower bound matches their bound up to lower order terms and is thus \textbf{nearly-optimal}. 
In addition, this family also contains many prior sketching algorithms for  ``cut-based'' problems such as connectivity~\cite{AhnGM12a}, vertex connectivity~\cite{GuhaMT15}, and
spectral sparsifiers~\cite{KapralovLMMS14} (and thus also cut sparsifiers). We now elaborate more on our results, starting with the definition of our sketches, which we call \emph{random Gaussian sketches}. 

\paragraph{Random Gaussian sketches.} To date, the main success of graph sketching has been for \emph{cut-based} problems~\cite{AhnGM12a,AhnGM12b,KapralovLMMS14,GuhaMT15}. These sketches all work by encoding a graph $G$ as its ${n \choose 2} \times n$ \emph{signed edge-incidence matrix} $B(G)$ (see~\Cref{sec:model}) and then apply a sketching matrix $A$ with few rows on the left to obtain the sketch $A \cdot B(G)$. The power of these sketches comes from surprisingly powerful cancellations that the use of the signed edge incidence matrix enables. In addition, the sketching matrix $A$ of in these approaches implements 
a \emph{sparse recovery} scheme on carefully chosen random subgraphs of the input graphs (e.g. uniformly random subgraphs of the input graph in the case of connectivity~\cite{AhnGM12a}, cut sparsifiers~\cite{AhnGM12b}, or spectral sparsifiers~\cite{KapralovLMMS14}, and sampled vertex induced subgraphs in the case of spanners~\cite{FiltserKN21}). 

To give a concrete example, let us consider the AGM sketches~\cite{AhnGM12a} for finding spanning forests. For any graph $G=(V,E)$ and any set of vertices $S \subseteq V$, adding up the columns of $B(G)$ corresponding to vertices in $S$, i.e., $\sum_{v \in S} B(G)^v$ gives 
us a vector with non-zero entries corresponding to edges of the cut $(S, V\setminus S)$. The linearity of matrix $A$ then allows us to obtain 
\[
A \cdot \paren{\sum_{v \in S}B(G)^v} = \sum_{v \in S} A \cdot B(G)^v,
\]
for a cut $S$ specified in the recovery phase. The sketching matrix $A$ itself is an $\ell_0$-sampler sketch that samples a non-zero entry of a vector $v$ given $A \cdot v$ (see~\cite{JowhariST11,KapralovNPWWY17}). 
An $\ell_0$-sampler sketch is typically implemented
via a simple sparse recovery sketch combined with a sampling matrix that samples the edges of the graph at $O(\log{n})$ geometrically decreasing rates. Combined with the above approach, we can thus sample an edge from any cut of the graph specified in 
the recovery phase. The algorithm of~\cite{AhnGM12a}  heavily builds on this subroutine by implementing Borůvka's algorithm for growing connected components via using these sketches to find an outgoing edge
from each component in each step. 

In this paper, we focus on this family of sketches where the sparse recovery scheme is implemented using {\em random Gaussian projections}. This means that each \emph{row}
of the sketching matrix is of the type $g \cdot S$ where $S$ is an ${{n}\choose{2}} \times {{n}\choose{2}}$-dimensional diagonal sampling matrix---where $S_{(u,v),(u,v)} = 1$ iff $(u,v)$ is sampled---and $g$ is an ${{n}\choose{2}}$-dimensional vector of independent Gaussian variables:
\[
	\begin{bmatrix}
		& & & & g & & & & \\
	\end{bmatrix}_{1 \times {{n}\choose{2}}} \times 
	\begin{bmatrix}
		& & & & & &  \\
		& & & & & &  \\
		& & & S  & &  \\
		& & & & & &  \\
		& & & & & & 
	\end{bmatrix}_{{{n}\choose{2}} \times {{n}\choose{2}}} \times
	\begin{bmatrix}
		&  &   \\
		&  &   \\
		& B(G) &   \\
		&  &   \\
		&  &  
	\end{bmatrix}_{{{n}\choose{2}} \times n} = 
	\begin{bmatrix}
		& g \cdot S \cdot B(G) &  \\
	\end{bmatrix}_{1 \times {n}}.
\]
The entire sketch is obtained by taking $s$ such rows where sampling matrices can be correlated but Gaussian vectors are independent. The recovery algorithm is given sampling matrices and the sketch but \emph{not} Gaussian variables. 
We refer to $s$ as the \emph{dimension} of the sketch (thus size of the sketch is $O(s \cdot n)$). See~\Cref{sec:model} for formal definitions. 

\emph{General ``power'' of random Gaussian sketches?} In~\Cref{app:implement}, we show this family of sketches can implement many (but importantly \emph{not} all) prior cut-based sketching algorithms in~\cite{AhnGM12a,KapralovLMMS14,GuhaMT15}, and most importantly the spanner sketch of~\cite{FiltserKN21}. But we also point out that these sketches are \emph{not} universal and one can easily construct problems where the power of these sketches does not match general sketching algorithms\footnote{Consider recovering the induced subgraph of the input on the first $\sqrt{n}$ vertices. A sparse recovery algorithm 
that spends $O(\sqrt{n})$ bits per each of these $\sqrt{n}$ vertices gives a sketch of size $O(n)$ for this problem. However, any random Gaussian sketch requires a dimension of $\Theta(\sqrt{n})$ that \emph{cannot} be amortized over all vertices, leading to a sketch of size $O(n^{3/2})$  instead.\label{fn:not-universal}}. Perhaps more importantly, we assume that the recovery algorithm of these sketches is \emph{oblivious} to the Gaussian vectors used in the sketching matrix which means that the recovery algorithm has a \emph{partial} knowledge of the sketching matrix. A particular shortcoming of this is that while these sketches handle the ``main'' source of cancelations enabled by edge-incidence matrix, they do \emph{not} handle a ``secondary'' source of cancelation: to obtain sketches of subgraphs of the input by generating the sketching matrix again at the recovery phase, apply it on some recovered
part of the input, and subtract it from the original sketch (this approach is used in the edge connectivity and cut sparsifier sketch of~\cite{AhnGM12b} -- although we  note that random Gaussian sketches can recover a cut
sparsifier by instead implementing the algorithm of~\cite{KapralovLMMS14}). We thus see the merit of study of this family as arguably the ``most natural'' candidate for finding spanners, given their past successes for closely related problems.  

\paragraph{Our result.} We prove a near-optimal lower bound on the dimension of random Gaussian sketches for constructing spanners, or even returning the distance of two fixed vertices (see also~\Cref{thm:main}). 
\begin{result}\label{res:main}
	Any random Gaussian sketch for constructing a $d$-spanner with constant probability of success requires dimension $\Omega(n^{1-o(1)}/d^{3/2})$, or put differently, any random Gaussian sketch of dimension $s$ 
	can only achieve a stretch of $\Omega((n/s)^{2/3-o(1)})$. The lower bound applies even to the problem of approximating the distance of two fixed vertices. 
\end{result}

Our lower bounds in~\Cref{res:main} matches algorithmic bounds of~\cite{FiltserKN21} up to the $n^{o(1)}$ term for computing spanners via graph sketching (whose sketches fit 
the framework of random Gaussian sketches) for all stretch $d$. This establishes the optimality of these bounds at least among this popular family of graph sketching algorithms. We note that~\cite{FiltserKN21} conjectured optimality of their algorithmic 
bounds among \emph{all} graph sketching techniques; our bounds in~\Cref{res:main} makes partial progress towards settling this conjecture. 

Before moving on, we note that  for the case when dimension $s=\poly\!\log\!{(n)}$, corresponding
to sketches of size $\Ot(n)$,~\Cref{res:main} implies a lower bound of $n^{2/3-o(1)}$ on the stretch; this should be contrasted with the $O(\log{n})$ bound of existential results on the stretch of spanners with $O(n)$ edges, suggesting that computing spanner 
is much harder using graph sketching (specifically via random Gaussian sketches) compared to existential bounds and arbitrary algorithms. Finally, the lower bound holds even for the algorithmically easier problem 
of simply estimating distance of two fixed vertices in the graph, as opposed to recovering the entire shortest path metric via a spanner. 

\paragraph{Our techniques.} We consider a  family of hard instances that form a random chain of cliques of size $(n/d)$ with diameter $d$, and a single edge $\estar$ that connects two vertices
at distance $\Theta(d)$  together (see~\Cref{fig:dist}). It is easy to see that such $\estar$ should belong to every $o(d)$-spanner of the graph and we prove that no random Gaussian sketch of 
``small'' dimension can recover $\estar$. The proof  is through analyzing how much a \emph{single} random Gaussian projection can reveal information about $\estar$,
or a bit more formally, the KL-divergence between the resulting sketches of two neighboring graphs that only differ on $\estar$. The rest follows by summing up this information across the $s$ projections.

To prove the bound for a single projection, we use properties of Gaussian variables and KL-divergence to bound the information revealed about the edge $\estar$ by the \emph{effective resistance} of the \emph{sampled} subgraph of the input after applying the sampling matrix. We prove that the distribution of our input, combined with a \emph{hierarchical expander decomposition} of all edges of sampling matrix, implies that the sampled subgraph of the input form a chain of \emph{expanders} (with proper lower bounds on both expansion and minimum degree). This step requires analyzing expansion of vertex-sampled subgraphs of an expander which can be of independent interest. Lastly, we bound the effective resistance of the edge $\estar$ in 
this chain of expanders by exhibiting a proper \emph{electrical flow} in the graph using properties of expanders. 

\paragraph{Related work.} In a recent independent work Chen, Khanna and Li~\cite{ChenKL} showed, similarly to our work, a lower bound matching the sketching dimension of~\cite{FiltserKN21} for linear sketches that can support {\em continuous} weight updates (as opposed to sketches that are only required to work for unweighted graphs).  Thus, from the perspective of the ultimate result, the lower bound of~\cite{ChenKL} is incomparable to ours. Their lower bound works for more general sketches than ours (although still not universal), but assumes that these sketches work in the continuous weight update model; our lower bound assumes a special sketch structure, but works in the mode standard setting of unweighted graphs. There is quite a bit of overlap in techniques: both papers use expander decompositions and prove that expanders are preserved under vertex sampling (but the actual proofs of the corresponding lemmas are different).


\section{Preliminaries}\label{sec:prelim}

\paragraph{Notation.} 
We use $\NN(\mu,\sigma)$ to denote the Gaussian distribution with mean $\mu$ and variance $\sigma^2$. 
For any distributions $P$ and $Q$, $\KLD{P}{Q}{}$ denotes the \emph{KL-divergence} of $P$ from $Q$ and 
$\tvd{P}{Q}$ is the \emph{total variation distance} between $P$ and $Q$. See~\Cref{sec:info} for the complete definitions. 

For a graph $G=(V,E)$ on $n$ vertices, we use $d_1,\ldots,d_n$ to denote the degrees of vertices in $G$. 
For any sets of vertices $S,T \subseteq V$, $E(S,T)$ denotes the set of edges between $S$ and $T$ and $\vol_G(S) := \sum_{v \in S} d_v$ denotes the \emph{volume} of $S$ (we drop the subscript when clear). 
The \emph{conductance} of $G$ is defined as 
\[
	\varphi(G) := \min_{S\subseteq V}\frac{\left|E(S,V\setminus S)\right|}{\min\{\vol(S), \vol(V\setminus S)\}}. 
\]
We say that $G$ is a \emph{$\varphi$-expander} if its conductance is at least $\varphi$. 

For a graph $G$, $\bA$ is the \emph{adjacency matrix}, $\bD$ is the \emph{degree diagonal matrix}, $\bB$ is the \emph{signed edge-incidence matrix}, $\bL$ is the \emph{Laplacian matrix}, and $\widetilde{\bL}$ is the \emph{normalized Laplacian matrix}. 
The \emph{spectral gap} of $G$ is defined as the second smallest eigenvalue of $\widetilde{\bL}$ which is related to the conductance via Cheeger's inequality (\Cref{lem:conduct_spgap}). Finally, $\eff{G}{u,v}$ 
denotes the \emph{effective resistance} between $u, v$ when treating edges of $G$ as resistors with unit resistance. See~\Cref{sec:spectral} for  definitions. 

We also use the following (variant of) expander decomposition that bounds the minimum degree of resulting expanders. The proof is a simple modification of standard decompositions, e.g. in~\cite{VempalaV00,SaranurakW19},
and is provided in~\Cref{sec:spectral} for completeness.

\begin{proposition}\label{prop:expander-decomposition}
	Let $G=(V, E)$ be any  graph on $n$ vertices and $m$ edges, and $\eps \in (0,1/2)$ and $\dmin \geq 1$ be parameters. The vertices of $G$ can be partitioned into subgraphs $H_1,\ldots,H_k$ such that: 
	\begin{enumerate}[label=($\roman*)$]
		\item Each $H_i$ is an $\eps$-expander with minimum degree $\dmin$; 
		\item At most $8\eps \cdot m\log{n} + n \cdot \dmin$ edges $E_0$ of $G$ do not belong to any subgraph $\set{H_i}_{i \in [k]}$. 
	\end{enumerate}
\end{proposition}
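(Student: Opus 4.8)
The plan is to run the usual recursive expander decomposition on $G$ and to \emph{interleave} it with a ``peeling'' step that deletes any vertex whose degree in the current subgraph has fallen below $\dmin$. Concretely, I would process a subgraph $H$ (starting from $H=G$) using the following three rules, applied in this priority order: (1) if $H$ has a vertex $v$ with $\deg_H(v)<\dmin$, delete $v$ (recording the at most $\dmin-1$ edges incident to $v$ in $H$) and recurse on $H-v$; (2) otherwise, if $H$ is an $\eps$-expander, output $H$ as one of the pieces; (3) otherwise $\varphi(H)<\eps$, so there is a set $\emptyset\neq S\subsetneq V(H)$ with $|E_H(S,V(H)\setminus S)|<\eps\cdot\min\{\vol_H(S),\vol_H(V(H)\setminus S)\}$, and I cut along such an $S$ (recording the cut edges) and recurse on $H[S]$ and $H[V(H)\setminus S]$. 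Rules (1) and (3) strictly decrease the vertex count of the current subgraph, so the recursion is well-founded; the subgraphs $H_1,\dots,H_k$ are those emitted by rule (2), the vertices deleted by rule (1) form an exceptional set $V_0$ (so that $V=V(H_1)\sqcup\cdots\sqcup V(H_k)\sqcup V_0$), and $E_0$ is the set of all recorded edges, which is precisely the set of edges of $G$ lying in no $H_i$.

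Property $(i)$ is then essentially forced by the priority order: a subgraph is emitted as some $H_i$ only after rule (1) has \emph{failed} on it --- so every vertex has degree at least $\dmin$ within $H_i$ --- and after the test in rule (2) has \emph{succeeded} --- so $H_i$ is an $\eps$-expander. For property $(ii)$ I would split $E_0$ by which rule recorded each edge. Rule (1) deletes each vertex at most once and records at most $\dmin-1$ edges per deletion, contributing at most $n\cdot\dmin$ edges overall. For the edges recorded by rule (3) I would invoke the standard charging argument for expander decompositions: when cutting along the smaller side $S$, distribute the fewer than $\eps\,\vol_H(S)$ cut edges over the vertices of $S$ in proportion to their degrees, so that each $u\in S$ absorbs charge less than $\eps\,d_u$; since $S$ is the smaller side, the volume of the piece containing $u$ at least halves after the cut, so $u$ is charged in at most $\log_2(2m)=O(\log n)$ cuts, and summing over all vertices bounds the rule-(3) edges by $8\eps m\log n$ (I would double-check that the constant indeed comes out to $8$, which it does with room to spare). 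Rule (1) only deletes edges and hence only shrinks volumes, so it never interferes with this halving argument.

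The one place that needs care --- and the reason the decomposition and the peeling must be interleaved rather than performed one after the other --- is that deleting a low-degree vertex can genuinely destroy the conductance of the current subgraph, so one cannot first decompose $G$ and then clean up the resulting pieces. Interleaving the rules, with rule (1) taking priority over rule (2), sidesteps this: expansion is only ever certified at a leaf of the recursion, where by construction the minimum-degree condition holds at the same time. The remaining items --- well-foundedness of the recursion, the fact that the $H_i$ together with $V_0$ partition $V$, and pinning down the constant in the charging bound --- are routine, so I expect no further obstacle; the whole argument is a direct adaptation of the expander decompositions of~\cite{VempalaV00,SaranurakW19}.
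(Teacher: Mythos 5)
Your proposal is correct and follows essentially the same route as the paper: interleave the removal of vertices of degree below $\dmin$ with the recursive sparse-cut decomposition, bound the peeled edges by $n\cdot\dmin$, and bound the cut edges by a charging argument in which the smaller-volume side at least halves, giving $O(\eps\, m\log n)$. The only (cosmetic) difference is that you charge cut edges to the vertices of $S$ in proportion to their degrees, while the paper charges them to the edges with both endpoints in $S$; both yield the stated $8\eps\, m\log n$ bound.
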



\section{Main Result}\label{sec:high-level}

We formalize~\Cref{res:main} in this section. We start by defining the sketching model, using random Gaussian projections, that we study. We then present our lower bound for constructing spanners (and in general preserving shortest
path metric) using these sketches. Finally, we give the proof outline of this result here and postpone the proof of its  main ingredients to the subsequent sections.  

\subsection{Random Gaussian Projections and Sketches}\label{sec:model}

For an $n$-vertex graph $G=(V,E)$, its \textbf{signed edge-incidence} matrix is an ${{n}\choose{2}} \times n$-dimensional matrix $\bB = \bB(G)$ defined as follows:
\begin{itemize}
	\item Each column corresponds to a vertex $v$  and each row corresponds to a pair of vertices $(u,w)$;
	\item The entry $\bB_{(u,w),v}$ is either $+1$ if $(u,w)$ is an edge in $G$ and $v=u$, $-1$ if $(u,w)$ is an edge in $G$ and $v=w$, and $0$ otherwise. 
\end{itemize}
Note that for any edge $e=(u,v)$ of $G$, the corresponding row $(u,v)$ in $\bB$ has exactly one $+1$ at column $u$, one $-1$ at column $v$, and is otherwise $0$. A row $(u,v)$ of $\bB$ which does not have a corresponding edge in $G$ is the all-$0$ row. 

Our sketches are based on taking random Gaussian projections of matrix $\bB$, which roughly speaking correspond to sampling edge of $G$ (using any sampling scheme oblivious to the graph), and multiply a Gaussian vector 
with signed edge-incidence matrix of the resulting graph. Formally,  

\begin{definition}\label{def:g-projection}
	Let $G=(V,E)$ be an $n$-vertex graph and consider the following: 
	\begin{enumerate}[label=$(\roman*)$]
		\item \textbf{\emph{Sampling matrix}}: Let $\bS$ be a ${{n}\choose{2}} \times {{n}\choose{2}}$-dimensional \emph{diagonal} matrix with $0$-$1$-values on the diagonal. 
		Notice that the matrix $\bS \cdot \bB(G)$ is the edge-incidence matrix of the subgraph of $G$ obtained by picking only those edges of $G$ that their corresponding (diagonal) value in $\bS$ is $1$. 
		\item \textbf{\emph{Gaussian projection}}: Let $\bg$ be a ${{n}\choose{2}}$-dimensional vector of Gaussian random variables, where each entry is sampled independently from $\NN(0,1)$.
	\end{enumerate}
	A \textbf{\emph{random Gaussian projection}} of $G$ with respect to $\bS$ is an $n$-dimensional vector obtained by sampling $\bg \sim \NN(0,1)^{{n}\choose{2}}$, 
	and returning $\bp := \bg \cdot \bS \cdot \bB(G)$. 
\end{definition}

Using~\Cref{def:g-projection}, we can define the sketches we focus on as follows. 

\begin{definition}\label{def:g-sketches}
	Let $\Pi$ be a problem defined on $n$-vertex graphs $G=(V,E)$. A \textbf{random Gaussian sketch} for $\Pi$ is defined via the following pair: 
	\begin{enumerate}[label=$(\roman*)$]
		\item \textbf{\emph{Sketching matrices}}: A distribution $\dists$  on $s$-tuples of sampling matrices for some $s \geq 1$. 
		\item \textbf{\emph{Recovery algorithm}}: An algorithm that given $s$ sampling matrices $\bSv=(\bS_1,\ldots,\bS_s) \sim \dists$ and $s$ random Gaussian projection $\bpv=(\bp_1,\ldots,\bp_s)$  of any graph $G$ with respect to these sampling 
		matrices, returns a solution to $\Pi(G)$. 
	\end{enumerate}
	We refer to $s$ as the \underline{dimension} of the sketch (note that a sketch of dimension $s$ has size $O(s \cdot n)$). 
	
	\smallskip
	A random Gaussian sketch for a graph $G$ then consists of sampling the sketching matrices $\bSv$ from $\dists$ (independent of $G$), receiving random Gaussian projections $\bpv$, 
	and running the recovery algorithm on $(\bSv,\bpv)$  to return the solution. 
\end{definition} 

We emphasize that in~\Cref{def:g-sketches}, the recovery algorithm is given the sketching matrices used for random Gaussian projections explicitly, but is \emph{not} given  the Gaussian vectors themselves.

We note that our formalization of random Gaussian sketches is new to this paper, albeit it has been used implicitly in prior algorithmic results for in graph sketching literature. In~\Cref{app:implement}, we 
elaborate more on this connection and point out that how these sketches can be used to solve many of the canonical problems in graph sketching literature such as connectivity, minimum spanning tree, cut or spectral sparsifiers, and most closely related to ours, spanners. But we also emphasize that these sketches are \emph{not} universal -- see the discussion on the power of these sketches in~\Cref{sec:results}. 

\subsection{The Lower Bound}

The following is the formalization of~\Cref{res:main} that we prove. 

\begin{theorem}\label{thm:main}
	For any absolute constant $\delta \in (0,1)$, and integers $n \geq 1$ and $1 \leq d \leq n^{2/3-\delta}$, any random Gaussian sketch (\Cref{def:g-sketches}) that outputs a $d$-spanner 
	of every given $n$-vertex graph $G$ with probability at least $2/3$ has dimension (i.e., number of rows) 
	\[
	\Omega(\dfrac{n^{1-\delta}}{d^{3/2}}).
	\]
	Moreover, the lower bound continues to hold even if the algorithm is only required to answer the shortest path \emph{distance} 
	between two \emph{prespecified} vertices up to a factor of $d$.  
\end{theorem}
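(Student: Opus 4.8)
The plan is to build a hard distribution over $n$-vertex graphs on which no low-dimension random Gaussian sketch can recover the ``witness'' edge $\estar$, and then argue that any $d$-spanner must contain $\estar$. Concretely, I would take a random chain of $\Theta(d)$ cliques, each on $\Theta(n/d)$ vertices, glued along a path so the resulting graph has diameter $\Theta(d)$; then add a single extra edge $\estar$ between a fixed pair of vertices $(\ustar,\vstar)$ sitting at the two ends of a length-$\Theta(d)$ segment of the chain. Let $G_0$ be the graph without $\estar$ and $G_1$ the graph with it (with the randomization over which cliques/vertices are where baked into both). In $G_1$, $\mathrm{dist}(\ustar,\vstar)=1$, while in $G_0$ it is $\Theta(d)$, so any $d'$-spanner with $d' = o(d)$ that also answers the $(\ustar,\vstar)$-distance correctly must keep $\estar$ — hence distinguishing $G_0$ from $G_1$. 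By Pinsker's inequality and the chain rule for KL-divergence, it suffices to show that $\sum_{i=1}^s \KLD{\text{sketch}_i(G_1)}{\text{sketch}_i(G_0)}{} = o(1)$ unless $s = \Omega(n^{1-\delta}/d^{3/2})$; since the Gaussian vectors are independent across rows, this reduces to bounding a \emph{single} projection's contribution by $O(d^{3/2}/n^{1-o(1)})$ in expectation over the sampling matrix $\bS$ (which is shared but oblivious to the input).

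For a single row, the projection is $\bp = \bg \cdot \bS \cdot \bB(G)$. Conditioned on the sampling matrix $\bS$, $\bp$ is a Gaussian vector whose covariance is exactly the Laplacian $\bL_H$ of the \emph{sampled} subgraph $H = \bS\bB(G)$. The graphs $G_0$ and $G_1$ differ only in whether $\estar$ is sampled, so the two Gaussians have Laplacians differing by a single rank-one term $\bL_{\estar}$ (present only when $\bS$ selects $\estar$, which happens with some probability $p^\star$ we get to choose by thinning). A standard computation of KL-divergence between two centered Gaussians whose covariances differ by adding one edge shows the divergence is $O(\text{(effective resistance of }\estar\text{ in }H))$ — up to a logarithmic factor from the degenerate kernel directions, which I would handle by restricting to the relevant connected component. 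So the whole argument comes down to: with high probability over the random input graph and over $\bS$, the effective resistance $\eff{H}{\ustar,\vstar}$ in the sampled subgraph is at most $\wt O(d/(n/d)) = \wt O(d^2/n)$, so that after the thinning factor $p^\star$ and the sum over $s$ rows we get the claimed bound.

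The main obstacle — and the technical heart — is controlling the structure of the sampled subgraph $H$. The sampling matrix $\bS$ is adversarial (worst case over $\dists$), so $H = \bS \bB(G)$ is an arbitrary vertex-subsampling-compatible subgraph of a random chain of cliques. The plan here is: first apply the variant expander decomposition of \Cref{prop:expander-decomposition} to the \emph{support} of $\bS$ (the set of pairs it may retain) with parameters $\eps$ a small $n^{-o(1)}$-ish quantity and $\dmin$ chosen so the leftover edge set $E_0$ is negligible; this writes the sampled pairs, within each clique, as a union of $\eps$-expanders of min-degree $\dmin$ plus few leftover edges. Then use the randomness of the input graph (the random identification of clique-vertices with the $n/d$ slots) to show that the subgraph of $G$ actually realized inside each such expander piece is itself, whp, an expander with good min-degree — this is the ``expansion of vertex-sampled subgraphs of an expander'' lemma flagged in the techniques paragraph, and the crux is a Chernoff + union-bound argument over cuts, exactly as in standard vertex-sampling-preserves-expansion proofs but with the min-degree bookkeeping. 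Having established that $H$ looks like a chain of $\Theta(d)$ expanders each on $\Theta(n/d)$ vertices with conductance $n^{-o(1)}$ and min-degree $n^{1-o(1)}/d$, I would finish by exhibiting an explicit electrical flow from $\ustar$ to $\vstar$: route unit flow through the $\Theta(d)$ expanders in series, and within each expander spread it using the fact that an $\varphi$-expander on $N$ vertices with min-degree $\Delta$ has all pairwise effective resistances $O(\log N/(\varphi^2 \Delta))$ (via Cheeger / spectral gap, \Cref{lem:conduct_spgap}); series composition then gives $\eff{H}{\ustar,\vstar} = O(d \cdot \log n/(\varphi^2 \Delta)) = \wt O(d^2/n) \cdot n^{o(1)}$, which plugged back through the KL bound yields the theorem. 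The delicate points I expect to spend the most effort on are (a) ensuring the expander-decomposition leftover $E_0$ and the thinning interact correctly so the ``few bad edges'' never disconnect the relevant path, and (b) handling the kernel/degenerate directions of the Gaussian covariance so the single-projection KL bound is genuinely $O(R_{\mathrm{eff}})$ and not something worse.
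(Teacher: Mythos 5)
There is a genuine gap, and it starts with your hard distribution. You place $\estar$ between a \emph{fixed, known} pair $(\ustar,\vstar)$, but then a random Gaussian sketch of dimension $1$ already distinguishes $G_0$ from $G_1$: take a single row whose sampling matrix $\bS$ selects only the pair $(\ustar,\vstar)$; the projection $\bg\cdot\bS\cdot\bB(G)$ is the zero vector when $\estar$ is absent and almost surely nonzero when it is present, so the recovery algorithm (which knows $\bS$) detects $\estar$ with one measurement. The same objection kills your treatment of the ``two prespecified vertices'' variant, where you make the prespecified pair coincide with $\estar$. Relatedly, your phrase ``which happens with some probability $p^\star$ we get to choose by thinning'' inverts who controls what: the sampling matrices are chosen by the algorithm, so you cannot impose a small sampling probability on $\estar$. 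The paper's distribution instead draws $\estar$ \emph{uniformly} from $\binom{V}{2}$, and the entire per-row accounting hinges on the resulting $1/\binom{n}{2}$ weight on each sampled pair (and, for the distance variant, on attaching two fresh pendant vertices $a,b$ to the random $\ustar,\vstar$ and handing the Gaussians on the pendant edges to the recovery algorithm, so that the prespecified pair is $(a,b)$ rather than $\estar$ itself).

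Even after randomizing $\estar$, two further steps of your plan fail quantitatively. First, your structural claim that ``whp the sampled graph is a chain of expanders with min-degree $n^{1-o(1)}/d$'' cannot hold for worst-case $\bS$, which may be sparse or concentrated; one must trade off how many pairs $\bS$ samples (hence the chance of catching the random $\estar$) against the quality of the resistance bound. A single application of \Cref{prop:expander-decomposition} with one $\dmin$ ``chosen so the leftover is negligible'' cannot do this: the resistance bound needs $\dmin$ large (at least $\mathrm{poly}(\eps^{-1})n^{\delta}d$), while making the leftover $8\eps m\log n+n\dmin$ small against $\binom{n}{2}$ needs $\dmin$ small, and these pull in opposite directions. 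The paper resolves this with a \emph{hierarchical} decomposition in which the level-$i$ min-degree scales as $m_i(\bS)/n$, plus a cutoff at $m\approx n^{1+\delta}d^{3/2}$ below which the trivial KL bound of $1$ is charged against the $m/\binom{n}{2}$ probability. Second, your resistance estimate is too lossy: composing $d$ expanders in series with per-expander resistance $O(\log n/(\varphi^2\Delta))$, $\Delta\approx n/d$, gives only $\widetilde{O}(d^2/n)$, whereas the theorem's $d^{3/2}$ exponent requires the sharper bound of \Cref{lm:res-diam}, in which the chain term is $d/(\varphi^2\vol(U_1))$ — the \emph{volume} (number of edges) per layer in the denominator, not the min-degree — yielding $O(\eps^{-4})(d/D+d^3/D^2)$ as in \Cref{lem:effective-large}. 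With your $\widetilde O(d^2/n)$ bound the best dimension lower bound reachable is $\widetilde\Omega(n/d^2)$, strictly weaker than the claimed $\Omega(n^{1-\delta}/d^{3/2})$. So the proposal needs the random location of $\estar$, the hierarchical (multi-level, degree-vs-density balanced) decomposition, and the volume-based path-of-expanders resistance bound to actually prove the stated theorem.
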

\Cref{thm:main} can alternatively be seen as proving that any random Gaussian sketch of dimension $s$ can only achieve a stretch of 
\[	
	\Omega((\dfrac{n}{s})^{2/3-\delta}),
\]
for any constant $\delta > 0$. In light of the result of~\cite{FiltserKN21}, the bounds obtained in~\Cref{thm:main} are optimal, up to $n^{o(1)}$-factors, for the entire range of dimension $s$ or stretch $d$.  
In particular,~\Cref{thm:main} implies that to obtain a $n^{2/3-\Omega(1)}$-spanner, one needs random Gaussian sketches of dimension $n^{\Omega(1)}$. This makes progress on a conjecture of~\cite{FiltserKN21} that stated 
the same bounds for \emph{arbitrary} sketches. 

Finally, we also mention that~\Cref{thm:main} works even for the problem wherein we are given two vertices $a$ and $b$ of the graph, and our goal is to simply determine the distance of $a$ and $b$ 
in the graph using the sketches. This problem is algorithmically easier than finding a spanner of the graph in that firstly, we do not need to pick subset of edges of the graph $G$ and can preserve the shortest path metric in any desired way, 
and secondly that we only need to maintain the distance between two vertices and not all pairs. Yet, effectively the entirety of our effort is to prove the result for spanners already and we get this stronger lower bound almost for free using standard ideas. 

In the rest of this section, we first present a hard input distribution used to establish~\Cref{thm:main}. We then state our main technical lemma that bounds the information revealed by a single random Gaussian projection 
on the graphs sampled from this distribution and show how this lemma easily implies the theorem. The next subsection then includes the proof outline of this technical lemma, whose main ingredients are postponed to the next sections. 

\subsection{A Hard Input Distribution}\label{sec:dist} 

For any sufficiently large $n,d > 0$, we define a hard  distribution $\distspan = \distspan(n,d)$ over $n$-vertex graphs. For simplicity, we prove the lower bound for $(d/2)$-spanners instead -- re-parameterizing $d$ then implies the same asymptotic lower bound for exact $d$-spanners as well (see~\Cref{fig:dist}). 

\begin{tbox}
\textbf{Distribution $\distspan(n,d)$.} A hard input distribution for $(d/2)$-spanners of $n$-vertex graphs. 
\begin{enumerate}[label=\emph{\arabic*.}, leftmargin=15pt]
	\item Partition the vertices $V$ into $d$ groups $V_1,\ldots,V_{d}$: each $v \in V$ is sent to one of the groups chosen uniformly at random. 
	\item Let $G$ be a graph obtained by placing a clique on each $V_{i} \cup V_{i+1}$ for $i \in [d-1]$. 
	\item  Sample a pair of vertices $(\ustar,\vstar) \in {{V}\choose{2}}$ independently and return the graph $G+ \estar$ for $\estar=(\ustar,\vstar)$. 
\end{enumerate}
\end{tbox}

\begin{figure}[h!]
	\centering

\begin{tikzpicture}
	
\tikzset{choose/.style={rectangle, draw, rounded corners=5pt, line width=1pt, inner xsep=5pt, inner ysep=2pt]}}
\tikzset{layer/.style={rectangle, rounded corners=4pt, draw, black!10, fill=black!10, opaque=10, line width=0.5pt, inner sep=4pt]}}
\tikzset{vertex/.style={circle, draw, fill=black, line width=1pt, inner sep=1pt]}}

\node[vertex](v11){};
\node[vertex](v12)[above left=10pt and 15pt of v11]{};
\node[vertex](v13)[below left=10pt and 15pt of v11]{};
\begin{scope}[on background layer]
\node[layer, fit=(v11) (v12) (v13)](v1){};
\end{scope}
\node(n1)[below=5pt of v1] {$V_{1}$};

\foreach \i in {2,...,8}
{
	\pgfmathtruncatemacro{\ip}{\i-1};
	\pgfmathtruncatemacro{\xx}{mod(\i,2)};
	\ifthenelse{\xx=0}
	{
	\node[vertex](v\i1)[right=35pt of v\ip1]{};
	\node[vertex](v\i2)[above right=10pt and 15pt of v\i1]{};
	\node[vertex](v\i3)[below right=10pt and 15pt of v\i1]{};
	}
	{
	\node[vertex](v\i1)[right=70pt of v\ip1]{};
	\node[vertex](v\i2)[above left=10pt and 15pt of v\i1]{};
	\node[vertex](v\i3)[below left=10pt and 15pt of v\i1]{};
	}
	\begin{scope}[on background layer]
	\node[layer, fit=(v\i1) (v\i2) (v\i3)](v\i){};
	\end{scope}
	\node(n\i)[below=1pt of v\i] {$V_{\i}$};
}
\foreach \i in {1,...,8}
{
	\foreach \j in {1,2,3}
	{
		\foreach \k in {1,2,3}
		{
			\draw[line width=0.5pt] (v\i\j) to (v\i\k);
		}
	}
}

\foreach \i in {1,...,7}
{
	\pgfmathtruncatemacro{\ip}{\i+1};
	\foreach \j in {1,2,3}
	{
		\foreach \k in {1,2,3}
		{
			\draw[line width=0.5pt] (v\i\j) to (v\ip\k);
		}
	}
}

\node[vertex,red](v22)[above right=10pt and 15pt of v21]{};
\node(n22)[above=5pt of v22]{\textcolor{red}{$\ustar$}};
\node[vertex,red](v73)[below left=10pt and 15pt of v71]{};
\node(n73)[below=5pt of v73]{\textcolor{red}{$\vstar$}};

\draw[line width=1.5pt, dashed, red, bend left=15pt] (v22) to (v73);
\node(n)[above left=5pt and 10pt of v5]{\textcolor{red}{$\estar$}};

\end{tikzpicture}
	\caption{An illustration of $\distspan=\distspan(n,d)$ for $n=24$ and $d=8$. Any $4$-spanner of $G$ contains $\estar$.}\label{fig:dist}
\end{figure}
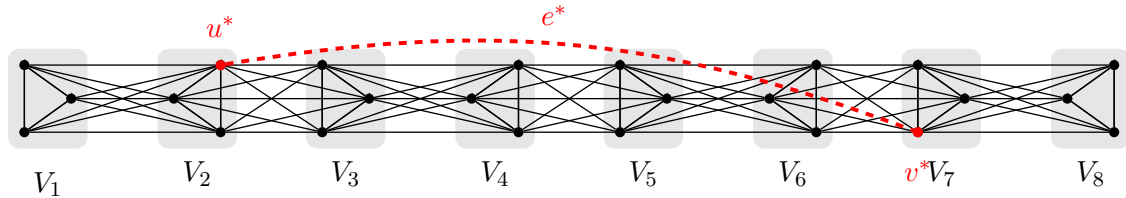

In the following, we use $\bB=\bB(G)$ to the denote the signed edge-incidence matrix of $G$; we also use $\bB(\estar)$ 
as the edge-incidence matrix of the $n$-vertex graph consisting of the single edge $\estar=(\ustar,\vstar)$. We emphasize that the final
graph output by the distribution is $G+\estar$ (this notation will be make the latter parts of the proof cleaner).

We first establish a straightforward property of graphs sampled from $\distspan$ in context of spanners. 

\begin{lemma}\label{lem:special-edge}
	With constant probability over the choice of $(G,\estar) \sim \distspan(n,d)$, every $(d/2)$-spanner of $G+\estar$ contains the edge $\estar=(\ustar,\vstar)$. 
\end{lemma}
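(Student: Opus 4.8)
The plan is to show that, with constant probability over $\distspan(n,d)$, the two endpoints of the special edge land in chain-groups that are more than $d/2$ apart, which makes $\estar$ a shortcut that no $(d/2)$-spanner can afford to drop. Concretely, writing $\ustar\in V_a$ and $\vstar\in V_b$, I would prove $\Pr[\,|a-b|>d/2\,]=\Omega(1)$ and then observe that on this event every $(d/2)$-spanner of $G+\estar$ must contain $\estar$.

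For the deterministic step, the key structural fact is that every edge of $G$ lies inside one of the cliques $V_i\cup V_{i+1}$, so the group indices of its two endpoints differ by at most $1$. Hence along any path in $G$ the group index moves by at most one per edge, which gives $dist_G(\ustar,\vstar)\ge |a-b|$ (no matching upper bound is needed). Now, since $\estar=(\ustar,\vstar)$ is an edge of $G+\estar$, we have $dist_{G+\estar}(\ustar,\vstar)=1$; so if a subgraph $H$ of $G+\estar$ omits $\estar$, then $H\subseteq G$ and therefore $dist_H(\ustar,\vstar)\ge dist_G(\ustar,\vstar)\ge |a-b|$. When $|a-b|>d/2$ this is strictly larger than $(d/2)\cdot dist_{G+\estar}(\ustar,\vstar)$, so $H$ violates the stretch bound and is not a $(d/2)$-spanner; i.e., every $(d/2)$-spanner contains $\estar$.

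It remains to lower bound $\Pr[\,|a-b|>d/2\,]$. Since each vertex is assigned to a uniformly random group independently of the others, and $(\ustar,\vstar)$ is drawn independently and uniformly from $\binom{V}{2}$, the indices $a$ and $b$ are independent and uniformly distributed on $[d]$. A direct count gives $\Pr[\,|a-b|\ge k\,]=(d-k+1)(d-k)/d^2$, so taking $k=\lfloor d/2\rfloor+1$ yields $\Pr[\,|a-b|>d/2\,]=1/4-o(1)$, which is at least an absolute constant (say $1/8$) once $d$ is large enough; $d$ being sufficiently large is part of the standing hypothesis on $\distspan$. Combining this with the previous paragraph proves the lemma.

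There is no genuine obstacle here; the only points needing a line of care are (i) that the group indices of $\ustar$ and $\vstar$ are truly independent and uniform despite $\ustar\neq\vstar$, which holds because the group assignment is i.i.d.\ across vertices and independent of the choice of the pair, and (ii) that the constant in $\Pr[\,|a-b|>d/2\,]$ stays bounded away from zero, which only requires excluding the trivial small-$d$ regime already ruled out by the ``sufficiently large'' assumption.
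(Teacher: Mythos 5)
Your proposal is correct and follows essentially the same route as the paper: lower-bound the probability that $\ustar$ and $\vstar$ land in groups more than $d/2$ apart (equivalently, that $dist_G(\ustar,\vstar)>d/2$), then observe that any $(d/2)$-spanner omitting $\estar$ would have stretch exceeding $d/2$ on the pair $(\ustar,\vstar)$ since $dist_{G+\estar}(\ustar,\vstar)=1$. Your counting via independence of the two group indices is just a slightly more explicit version of the paper's calculation (which arrives at a constant around $1/4$ as well), so there is nothing missing.
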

\begin{proof}
	For a graph $G$ and pairs $(\ustar,\vstar)$ sampled from $\distspan(n,d)$, 
	\begin{align*}
		\Pr_{\ustar,\vstar}\paren{dist_G(\ustar,\vstar) > d/2} &= \frac{1}{d^2} \cdot \paren{O(d)+\sum_{i=1}^{d} \card{d/2-i}} > \frac{1}{5}, 
	\end{align*}
	where the equality holds since when $\ustar \in V_1$, $\vstar$ can be in $V_{d/2+2},\ldots,V_d$, when $\ustar \in V_2$, 
	$\vstar$ can be in $V_{d/2+3},\ldots,V_d$, and so on ($O(d)$ handles the  differences of even or odd choices of $d$ and $d/2$).

	Moreover, whenever the distance of $\ustar,\vstar$ in $G$ is more than $d/2$, any $(d/2)$-spanner of $G+\estar$ should contain the edge $\estar$, as otherwise 
	the distance between $\ustar$ and $\vstar$ in the spanner will be more than $d/2$ times their distance in $G+\estar$, violating the bound on the stretch of the spanner. 
\end{proof}

The following lemma is the main technical contribution of our work. Roughly speaking, this lemma bounds the ``information'' that can be learned about the edge $\estar$ in $\distspan$ using a \emph{single} sub-sampled Gaussian 
projection of a graph sampled from $\distspan$. 

\begin{lemma}\label{lem:main}
	Let $\bS$ be any sampling matrix and consider a single random Gaussian projection with respect to $\bS$. For $(G,\estar)$ sampled from $\distspan = \distspan(n,d)$, 
	\[
		\Exp_{G,\estar}\Bracket{\min\set{1,\KLD{\bg \cdot \bS \cdot \bB(G)}{\bg \cdot \bS \cdot (\bB(G)+\bB(\estar))}{\bg}}} = O(\frac{d^{3/2}}{n^{1-\delta}}),
	\]
	for any constant $\delta > 0$, where the KL-divergence is taken only over the Gaussian variables. 
\end{lemma}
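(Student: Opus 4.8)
The plan is to compute the KL-divergence between two Gaussian random vectors explicitly and then bound it by an effective-resistance quantity of the sampled graph. Writing $\bB = \bB(G)$, $\bestar = \bB(\estar)$, and letting $\bM := \bS\bB$ and $\bM' := \bS(\bB + \bestar)$, the projection $\bg\cdot\bM$ is a Gaussian row vector with mean $0$ and covariance $\bM^{\top}\bM$, which is exactly the (weighted) Laplacian $\bL$ of the \emph{sampled} subgraph $G_S := \bS\bB$ of $G$; similarly $\bg\cdot\bM'$ has covariance equal to the Laplacian $\bL' = \bL + \bL_{\estar}$ of $G_S$ together with the edge $\estar$ (if $\estar$ is sampled; if not, the two distributions are identical and the divergence is $0$, so we may assume $\estar$ is sampled). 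Both distributions are degenerate Gaussians supported on the same subspace (the orthogonal complement of the all-ones vectors on each connected component), so the KL-divergence is computed on that subspace using pseudoinverses. The standard formula gives
\[
\KLD{\NN(0,\bL)}{\NN(0,\bL')}{} = \tfrac12\paren{\tr(\bL'^{+}\bL) - \mathrm{rank} + \log\frac{\det^{+}\bL'}{\det^{+}\bL}},
\]
and after simplification using $\bL' = \bL + \bestar^{\top}\bestar$ (a rank-one update) together with the matrix-determinant lemma and Sherman–Morrison, this collapses to a clean closed form in terms of the effective resistance $r := \eff{G_S}{\ustar,\vstar}$: roughly $\KLD{\cdot}{\cdot}{} = \tfrac12\paren{\log(1+r) - \tfrac{r}{1+r}} = O(r^2)$ when $r$ is small, and is trivially $O(1)$ always (so the $\min\{1,\cdot\}$ is harmless). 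So it suffices to show $\Exp_{G,\estar}[\min\{1, O(\eff{G_S}{\ustar,\vstar}^2)\}] = O(d^{3/2}/n^{1-\delta})$, which will follow from $\Exp[\eff{G_S}{\ustar,\vstar}] = O(d^{3/4}/n^{(1-\delta)/2})$ or an appropriate high-probability bound on the effective resistance.

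**Bounding the effective resistance of $\estar$ in the sampled graph.** This is where the real work lies and where the paper's outline points. The sampled graph $G_S$ is obtained from $G$ — a chain of cliques $V_1 \cup V_2, V_2\cup V_3,\dots$ — by keeping only the edges selected by the \emph{fixed} (adversarial, but graph-oblivious) sampling matrix $\bS$. Because the partition of $V$ into $V_1,\dots,V_d$ is uniformly random and independent of $\bS$, I would first apply the expander decomposition of \Cref{prop:expander-decomposition} to the \emph{support graph} of $\bS$ (the graph on vertex set $V$ whose edges are the diagonal $1$-entries of $\bS$), with suitable parameters $\eps = n^{-\Omega(1)}$ and $\dmin = n^{\Omega(1)}$ chosen to make the $\delta$ bookkeeping work. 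This writes almost all of the $\bS$-edges as a disjoint union of $\eps$-expanders $H_1,\dots,H_k$ of min-degree $\dmin$. The crucial geometric fact then needed — stated as of independent interest in the outline — is that when one restricts an $\eps$-expander to a uniformly random subset of its vertices (of density $\approx 2/d$, the density of any two consecutive groups $V_i\cup V_{i+1}$), the induced subgraph remains an expander with comparable (up to polylog/$n^{o(1)}$ losses) conductance and still-large min-degree, as long as the original min-degree is large enough. Applying this to each $H_j$ intersected with each $V_i\cup V_{i+1}$ shows that $G_S$, within each clique-block of $G$, contains a chain of expanders spanning most vertices; chaining these across consecutive blocks yields that $\ustar$ and $\vstar$ are connected in $G_S$ through a path of $O(d)$ expanders.

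**From a chain of expanders to an effective-resistance bound.** Finally I would exhibit an explicit electrical (unit) flow from $\ustar$ to $\vstar$ in this chain of expanders and use Thomson's principle ($\eff{}{\ustar,\vstar} \le$ energy of any unit flow). In a single $\phi$-expander of min-degree $D$ on $N$ vertices, any two vertices have effective resistance $O(\tfrac{1}{\phi^2 D} + \tfrac{\log N}{D})$ — this is the standard bound via the spectral gap (Cheeger, \Cref{lem:conduct_spgap}) and the fact that random walks mix in $O(\phi^{-2}\log N)$ steps, equivalently by routing flow along short paths. With $D = n^{\Omega(1)}$ and $\phi = n^{-o(1)}$, each expander contributes effective resistance $n^{-1+o(1)}$, and a path through $O(d)$ of them contributes, by series composition (or by splicing the individual flows and summing energies with a triangle inequality / Cauchy–Schwarz over the $O(d)$ segments), resistance $O(d\cdot n^{-1+o(1)}) = d\cdot n^{-1+\delta/2}$ — small enough after squaring and multiplying by the $O(1)$ probability that $\estar$ lands appropriately. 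The main obstacle, by far, is the middle step: proving that random \emph{vertex}-subsampling preserves expansion with only $n^{o(1)}$ loss and, simultaneously, keeps the minimum degree polynomially large (a naive Chernoff argument controls cut sizes and volumes but must be combined carefully with a union bound over cuts — handled by restricting attention to "bucketed" cuts and using the min-degree lower bound to beat the union bound — and degree concentration needs the original degrees to be at least $n^{\Omega(1)}$, which is why $\dmin$ must be chosen polynomially large even though that throws away $n\cdot\dmin$ edges in the decomposition). The remaining pieces — the Gaussian KL computation and the electrical-flow energy bound — are essentially standard once the chain-of-expanders structure of $G_S$ is in hand.
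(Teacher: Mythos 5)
Your first ingredient is right and matches the paper: the sketch is a degenerate Gaussian with covariance equal to the Laplacian of the sampled graph, the rank-one update gives exactly $\tfrac12[\ln(1+r)-\tfrac{r}{1+r}]=O(r^2)$ in terms of the effective resistance, and the paper's \Cref{lem:kl-eff-needed} is this computation. The chain-of-expanders picture and the electrical-flow bound are also the paper's route (\Cref{lem:effective-large}, \Cref{lem:sample_expander}, \Cref{lm:res-diam}). One small slip along the way: bounding $\Exp[\min\{1,O(r^2)\}]$ by squaring a bound on $\Exp[r]$ is Jensen in the wrong direction; this is rescuable (e.g.\ $\min\{1,r^2\}\le \min\{1,r\}$, which is what the paper uses), but as written it is not justified.

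The genuine gap is in the decomposition step, and it is not a bookkeeping detail --- it is where the $d^{3/2}$ comes from. You apply \Cref{prop:expander-decomposition} \emph{once} to $\graph(\bS)$ with $\eps=n^{-\Omega(1)}$ and $\dmin=n^{\Omega(1)}$, and later assert that the sampled pieces have conductance $\varphi=n^{-o(1)}$ and per-expander resistance $n^{-1+o(1)}$. These parameter choices are mutually inconsistent. If $\eps=n^{-\Omega(1)}$, then vertex sampling cannot give conductance $n^{-o(1)}$ (the paper's \Cref{lem:sample_expander} gives only $\Omega(\eps^2\delta^2)$ and even requires $\dmin/d\geq \eps^{-8}\,\mathrm{polylog}\,n$, which is impossible for polynomially small $\eps$), and the resulting $\eps^{-4}$ factor in the resistance bound of \Cref{lem:effective-large} destroys the target. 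Conversely, per-expander resistance $n^{-1+o(1)}$ needs min-degree $D=n^{1-o(1)}$, but then the low-degree pruning discards $n\cdot\dmin=n^{2-o(1)}$ edges, and since $\estar$ is uniform over $\binom{n}{2}$ pairs, those edges can only be charged the trivial $1$, contributing $n^{-o(1)}\gg d^{3/2}/n^{1-\delta}$. With the only affordable choice $\eps=\Theta(1/\log n)$, a single decomposition leaves a constant fraction of the edges of $\graph(\bS)$ uncovered (the leftover bound is $8\eps m\log n+n\dmin\approx m/4$), so the one-shot plan cannot close. The paper resolves exactly this tension with the \emph{hierarchical} decomposition of \Cref{lem:hierarchical}: recurse on the leftover edges with $\eps=\Theta(1/\log n)$ and level-dependent min-degree $D_i=\Theta(m_i(\bS)/n)$, so the uncovered mass decays geometrically, and stop at the largest $t$ with $m_{t-1}(\bS)\geq n^{1+\delta}d^{3/2}$; summing the per-level charges $m_i\cdot O(\mathrm{polylog})\cdot(d/D_i+d^3/D_i^2)/\binom{n}{2}$ plus the trivial charge for $E_t(\bS)$ is precisely what yields $O(d^{3/2}/n^{1-\delta})$. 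Your outline has no mechanism producing this exponent, and as parameterized the middle of the argument does not go through.
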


Before getting to the proof of~\Cref{lem:main}, we show that it implies~\Cref{thm:main} immediately. 

\begin{proof}[Proof of~\Cref{thm:main} (assuming~\Cref{lem:main})]
	Let $(\dists,\alg)$ be any sub-sampled Gaussian sketch of dimension $s \geq 1$ for recovering a $(d/2)$-spanner. Consider a distribution $\distspan'$ on $n$-vertex graphs defined as follows: 
	\begin{itemize}
		\item Distribution $\distspan'$: Sample $(G,\estar)$ from $\distspan$ and $\theta \in \set{0,1}$ uniformly at random; if $\theta=0$, return $G$, otherwise return $G+\estar$. 
	\end{itemize}
	Let $G'$ be a graph sampled from $\distspan'$. 
	Suppose we sample $\bSv = (\bS_1,\ldots,\bS_s)$ from $\dists$ and receive sub-sampled Gaussian projections $\bpv = (\bp_1,\ldots,\bp_s)$ where for every $i \in [s]$, 
	$\bp_i = \bg_i \cdot \bS_i \cdot \bB(G')$ for a Gaussian vector $\bg_i$. Additionally, suppose we are even given $(\bSv,G,\estar)$, and thus the only unknown information is whether or not $\estar \in G'$ also, i.e., whether $\theta=1$ or not.  
	This way, we can run $\alg$, using $\bSv,\bpv$ as input, to obtain a $(d/2)$-spanner of $G'$: if $\estar$ belongs to this spanner, we declare $\estar$ is in $G'$ and otherwise we say it is not. 
	By~\Cref{lem:special-edge}, we are going to be able to determine the value of $\theta$ with probability $1/2+\Theta(1)$. This implies that over the distribution $\distspan'$, 
	\begin{align}
		\tvd{\bracket{(\bSv,G,\estar,\bpv) \mid \theta=0}}{\bracket{(\bSv,G,\estar,\bpv) \mid \theta=1}} = \Omega(1), \label{eq:LHS-sample}
	\end{align}
	as otherwise, by~\Cref{fact:tvd-sample}, we cannot estimate the value of $\theta$ with probability better than $1/2+o(1)$ given our input $(\bSv,G,\estar,\bpv)$ which is sampled from either $\mu' \mid \theta=0$ or $\mu' \mid \theta=1$. 
	We now have, 
	\begin{align*}
		\textnormal{LHS of~\Cref{eq:LHS-sample}} &\leq \Exp_{(\bSv,G,\estar)}  \tvd{\bracket{\bpv \mid \theta=0, \bSv,G,\estar}}{\bracket{\bpv \mid \theta=1, \bSv,G,\estar}} 
		\tag{by~\Cref{fact:tvd-chain-rule}, as the distribution of $(\bSv,G,\estar)$ is the same under both $\theta=0$ and $\theta=1$} \\
		&\leq \sqrt{\Exp_{(\bSv,G,\estar)} \min\set{1,\kl{{\bpv \mid \theta=0, \bSv,G,\estar}}{{\bpv \mid \theta=1, \bSv,G,\estar}}}} \tag{by Pinsker's inequality (\Cref{fact:pinsker}), the fact that TVD is bounded
		by $1$, and concavity of $\sqrt{\cdot}$} \\
		&= \sqrt{\Exp_{(\bSv,G,\estar)} \bracket{\min\set{1,\sum_{i=1}^{s}\kl{{\bp_i \mid \theta=0, \bSv,G,\estar}}{{\bp_i \mid \theta=1, \bSv,G,\estar}}}}} \tag{by chain rule of KL-divergence (\Cref{fact:kl-chain-rule}) as 
		$\bp_i$'s are now only function of $\bg_i$'s and so are independent} \\
		&\leq \sqrt{\sum_{i=1}^{s}\Exp_{(\bSv,G,\estar)} \Bracket{\min\set{1,\kl{{\bp_i \mid \theta=0, \bSv,G,\estar}}{{\bp_i \mid \theta=1, \bSv,G,\estar}}}}} \tag{we can take $\min$ inside 
		the summation to get an upper bound} \\
		&= \sqrt{\sum_{i=1}^{s} \Exp_{(\bS_i,G,\estar)} \Bracket{\min\set{1,\kl{{\bp_i \mid \theta=0, \bS_i,G,\estar}}{{\bp_i \mid \theta=1, \bS_i,G,\estar}}}}} \tag{as $\bp_i$ is only a function of $\bS_i$ conditioned on $G,\estar$} \\
		&=  \sqrt{\sum_{i=1}^{s} \Exp_{(\bS_i,G,\estar) \sim \distspan}\min\set{1,\KLD{\bg_i \cdot \bS_i \cdot \bB(G)}{\bg_i \cdot \bS_i \cdot (\bB(G)+\bB(\estar))}{\bg_i}}},
	\end{align*}
	where the last equality is because input graph $G'$ in $\distspan'$ is $G$ when $\theta=0$ and $G+\estar$ when $\theta=1$, and distribution of $(\bS_i,G,\estar,\bg_i)$ is the same under $\distspan$ and $\distspan'$. 
	
	Now given that $\bS_i \perp (G,\estar)$ in $\distspan$, each term in the RHS above is the same quantity upper bounded in~\Cref{lem:main}. Thus, combining~\Cref{eq:LHS-sample}, the above equation, and~\Cref{lem:main}, we get that 
	\[
			\Omega(1) = \tvd{\bracket{(\bSv,G,\estar,\bpv) \mid \theta=0}}{\bracket{(\bSv,G,\estar,\bpv) \mid \theta=1}} \leq \sqrt{s \cdot O(\frac{d^{3/2}}{n^{1-\delta}})},
	\]
	which implies that $s = \Omega(n^{1-\delta}/d^{3/2})$ as desired. This implies the first part of~\Cref{thm:main}. 
	
	The proof of the second part follows almost immediately from the above argument as follows. Consider the following distribution: 
	\begin{itemize}
		\item Distribution $\distspan''$: Sample $(G',\estar,\theta)$ from $\distspan'$. Add two new vertices $a$ and $b$ to the graph and add edges $(a,\ustar)$ and $(\vstar,b)$ to the graph as well. 
	\end{itemize}
	Let $G''$ be a graph sampled from $\distspan''$. Consider the distance between $a$ and $b$ in $G''$: if $\theta=1$ in the sampled $G'$, distance of $a$ and $b$ is $3$, otherwise, if $\theta=0$, 
	by the same argument as~\Cref{lem:special-edge}, the distance between $a$ and $b$ is more than $(d/2)$ with constant probability. This means that if our algorithm could simply estimate the distance of $a$ and $b$ 
	to within a factor of $(d/6)$, it can determine the value of $\theta$ with probability $1/2+\Theta(1)$. 
	
	Now if we further give $\ustar$, $\vstar$, and the Gaussian variables on all edges incident to $a$ or $b$ to the recovery algorithm, what the algorithm knows becomes the sketches of $G''\setminus \{a,b\}$ (by simply subtracting the corresponding Gaussians).
	Since the Gaussians revealed are independent of the sketch of $G''\setminus \{a,b\}$, the same exact argument as the first part now implies that the same lower bound of $s=\Omega(d^{3/2}/n^{1-\delta})$ on the sketch dimension. Given that the number of vertices in $G''$ is $n+2$, and by re-parameterizing $d$ with a constant factor, we obtain the desired lower bound. This concludes the proof of~\Cref{thm:main}. 
\end{proof}

\subsection{Proof Outline of~\Cref{lem:main}} 

We now present the proof outline of~\Cref{lem:main}, postponing the proof of its two main ingredients to the next two sections. 
For convenience, we restate~\Cref{lem:main} below. 

\begin{lemma*}[Restatement of~\Cref{lem:main}]
	Let $\bS$ be any sampling matrix and consider a single random Gaussian projection with respect to $\bS$. For $(G,\estar)$ sampled from $\distspan = \distspan(n,d)$, 
	\[
		\Exp_{G,\estar}\Bracket{\min\set{1,\KLD{\bg \cdot \bS \cdot \bB(G)}{\bg \cdot \bS \cdot (\bB(G)+\bB(\estar))}{\bg}}} = O(\frac{d^{3/2}}{n^{1-\delta}}),
	\]
	for any constant $\delta > 0$, where the KL-divergence is taken only over the Gaussian variables. 
\end{lemma*}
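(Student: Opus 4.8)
The plan is to compute the KL-divergence between the two Gaussian distributions exactly and then show that in expectation over $(G,\estar)\sim\distspan$, this quantity is small. Since $\bg\cdot\bS\cdot\bB(G)$ is a centered Gaussian vector in $\R^n$, and adding $\bB(\estar)$ merely shifts the generating matrix, both distributions are centered Gaussians with related covariances. More precisely, writing $M = \bS\cdot\bB(G)$ and $M' = \bS\cdot(\bB(G)+\bB(\estar))$, the sketches are distributed as $\NN(0, M^\top M)$ and $\NN(0, (M')^\top M')$. The difference $(M')^\top M' - M^\top M$ is a rank-$O(1)$ perturbation supported on the coordinates $\ustar,\vstar$, so the KL-divergence between the two Gaussians should reduce, after diagonalizing, to an expression governed by the \emph{effective resistance} $\eff{G_{\bS}}{\ustar,\vstar}$ in the \emph{sampled} subgraph $G_{\bS}$ (the subgraph kept by $\bS$), together with the indicator that $\estar$ itself is sampled. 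The first main step, then, is this exact algebraic reduction: bound $\min\{1,\KLD{\cdot}{\cdot}{}\}$ by something like $O(\min\{1, \eff{G_{\bS}}{\ustar,\vstar}\})$ (perhaps times an indicator about whether $\estar\in\bS$, but more robustly just via the effective resistance of the endpoints in the sampled graph, since adding a unit-resistance edge between $\ustar,\vstar$ changes the quadratic form by a factor controlled by $\eff{G_\bS}{\ustar,\vstar}$).

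\textbf{Controlling the sampled subgraph.} The second, and I expect harder, step is to understand the structure of $G_{\bS}$, the subgraph of $G$ retained by the \emph{adversarially chosen but graph-oblivious} sampling matrix $\bS$. Here I would use \Cref{prop:expander-decomposition}: apply a hierarchical expander decomposition to the \emph{sampling pattern} (viewed as a graph on $\binom{V}{2}$ pairs, or rather directly to understand which edges of the complete graph $\bS$ selects), so that the selected pairs break into expanders of controlled minimum degree plus a small remainder, recursively at $O(\log n)$ levels. Since $G$ itself is a random chain of cliques — each $V_i\cup V_{i+1}$ is a clique on roughly $2n/d$ uniformly random vertices — the key claim is that restricting an expander on the complete graph to the random vertex subset that happens to land in $V_i\cup V_{i+1}$ still yields an expander (with expansion and minimum degree degraded only by a polylog or $n^{o(1)}$ factor). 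This is the ``expanders are preserved under vertex sampling'' lemma the authors flag as of independent interest. Combining this across the $d$ blocks, the sampled graph $G_{\bS}$ (restricted to the relevant block structure) looks like a chain of $\Theta(d)$ expanders each on $\Theta(n/d)$ vertices with decent minimum degree.

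\textbf{Bounding effective resistance.} The third step is to bound $\eff{G_{\bS}}{\ustar,\vstar}$ when $G_{\bS}$ is such a chain of expanders. Since $\ustar$ and $\vstar$ are (conditioned on falling in particular blocks) at block-distance $\Theta(d)$, I would route an electrical flow from $\ustar$ to $\vstar$ along the chain: within each expander, an expander has effective resistance $O(1/\dmin)$ between any two vertices (or $O(\polylog/\dmin)$), and one pays an additive cost per block, so the total effective resistance along the chain is $O(d\cdot \polylog(n)/\dmin)$. With $\dmin$ roughly $n/d$ (the clique degree, up to the sampling-degradation factor), this gives $\eff{G_{\bS}}{\ustar,\vstar} = O(d^2\cdot n^{o(1)}/n)$. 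Taking expectation over the random placement of $\ustar,\vstar$ into groups: the effective resistance is $\Theta$ of (block-distance)${}^2/\dmin$, and the expected squared block-distance over a uniform random pair of the $d$ groups is $\Theta(d^2)$ — but one also weights by the probability the sampled subgraph actually connects them at the right scale, and crucially by where in the sampling-rate hierarchy we are; summing the contribution $\min\{1,\eff{}{}\}$ geometrically over the $O(\log n)$ decomposition levels should collapse to $O(d^{3/2}/n^{1-\delta})$, the extra $d^{-1/2}$ improvement over the naive $d^2/n$ coming from the fact that only an $O(1/\sqrt d)$ fraction of sampling rates produce a sampled graph that is simultaneously well-connected and has $\ustar,\vstar$ far apart — this is exactly the regime where the $\Ot(n^2/d^{3/2})$ algorithmic bound is tight.

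\textbf{Main obstacle.} I expect the crux to be the vertex-sampling-preserves-expansion claim combined with getting the quantitative dependence exactly right: expander decomposition of the sampling pattern loses $\polylog$ factors per level and there are $O(\log n)$ levels, and vertex sampling to a set of expected size $2n/d$ loses more, so keeping the cumulative loss to $n^{o(1)}=n^{\delta}$ rather than $n^{\Omega(1)}$, while still extracting the sharp $d^{3/2}$ (not $d^2$) exponent, requires the delicate balancing across the hierarchy. The effective-resistance flow argument and the Gaussian KL computation are comparatively routine once the sampled graph's expander-chain structure and minimum degree are established.
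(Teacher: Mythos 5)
Your high-level architecture does match the paper's: reduce the KL term to the effective resistance of $\estar$ in the sampled graph (\Cref{lem:kl-eff-needed}), hierarchically decompose $graph(\bS)$ into expanders with controlled minimum degree (\Cref{lem:hierarchical}), show that the random partition into groups preserves expansion so that $G\cap H$ is a chain (``balanced path'') of expanders (\Cref{lem:sample_expander,lem:high-prob-path}), bound the resistance in that chain, and sum over the hierarchy. The gap is in the quantitative core: your effective-resistance bound for the chain, and consequently your explanation of where the exponent $3/2$ comes from, do not work.

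Concretely, in your third step you charge $O(\polylog(n)/\dmin')$ per block, i.e.\ $\eff{}{\ustar,\vstar}\approx d/\dmin'\approx d^2/D$, where $D$ is the minimum degree of the expander of the sampling pattern containing $\estar$ (the chain's minimum degree is only $\Theta(D/d)$, a loss of a factor $d$, not polylog). Feed this into the hierarchy: the probability that $\estar$ lands in a level with $m_i$ edges is $m_i/\binom{n}{2}$ and $D\gtrsim m_i/n$, so each level contributes roughly $\frac{m_i}{n^2}\cdot\frac{d^2 n}{m_i}=\frac{d^2}{n}$ --- independent of $m_i$. Summing over $O(\log n)$ levels gives only $\Ot(d^2/n)$, i.e.\ a lower bound $s=\widetilde{\Omega}(n/d^2)$ and stretch $(n/s)^{1/2}$, not the claimed $d^{3/2}$. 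Your proposed rescue --- that only an $O(1/\sqrt{d})$ fraction of ``sampling rates'' are simultaneously well-connected and separating --- is asserted rather than argued, and since your per-level contribution does not decay with the level's density there is nothing to sum geometrically. The paper's mechanism is different: in \Cref{lm:res-diam} the cost of crossing a block of the chain is $O(1/(\varphi^2\vol(U_1)))$, i.e.\ inverse block \emph{volume} ($\approx d^2/m_i$), because the unit electrical flow spreads over all $\Theta(m_i/d^2)$ edges between consecutive blocks; only escaping the two endpoints costs $O(1/(\varphi^2\dmin'))$. This gives $\Exp_G\bracket{\eff{G(\bS)+\estar}{\ustar,\vstar}}=\Ot(dn/m_i+d^3n^2/m_i^2)$ (\Cref{lem:effective-large}), so a level contributes $\Ot(d/n+d^3/m_i)$, which does decay with density; balancing the $d^3/m$ term against the trivial bound $\min\{1,\cdot\}\le 1$ applied to the sparse residual level (the cutoff in \Cref{eq:choice-t}, $m\approx n^{1+\delta}d^{3/2}$, contributing $m/\binom{n}{2}$) is exactly what yields $O(d^{3/2}/n^{1-\delta})$. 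Relatedly, because the minimum degree drops by a factor $d$ under the vertex partition, \Cref{lem:sample_expander} only applies when $D_i\gtrsim n^{\delta}d$; your proposal never says what to do with the sampled pairs living in expanders too sparse for this (or not in any expander at all), which is precisely where the paper invokes the trivial bound and where the threshold $nd^{3/2}$ is chosen.
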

\noindent
To continue, we define the following notation for the sampling matrix $\bS$: 
\begin{itemize}
	\item $graph(\bS)$: the graph on $V$ containing all edges $e \in {{V}\choose{2}}$ where $\bS_{e,e} = 1$. 
	\item $m(\bS)$: the number of \emph{edges} in $graph(\bS)$. 
	\item $G(\bS)$: the subgraph of $G$ on edges that belong to $graph(\bS)$, i.e., $G(\bS) := G \cap graph(\bS)$. Note that this way we have, $\bB(G(\bS)) = \bS \cdot \bB(G)$ and $\bB(G(\bS)+\estar) = \bS \cdot (\bB(G)+\bB(\estar))$. 
\end{itemize}

\paragraph{Ingredient one: from KL-divergence to effective resistances.} The first key step of the proof of~\Cref{lem:main} is to relate
the KL-divergence term of~\Cref{lem:main} to \emph{effective resistance} of the edge $\estar$ in the underlying sampled graph. Formally, 

\begin{lemma}\label{lem:kl-eff-needed}
	For any sampling matrix $\bS$, any fixed $G$, and any pair of vertices $e=(u,v) \in {{V}\choose{2}}$, 
	\[
		\min\set{1,\KLD{\bg \cdot \bS \cdot \bB(G)}{\bg \cdot \bS \cdot (\bB(G)+\bB(e))}{\bg}} \leq 2 \cdot {\eff{G(\bS)+e}{u,v}}.
	\] 
\end{lemma}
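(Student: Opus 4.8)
The plan is to reduce the KL-divergence between the two Gaussian projections to a ratio involving the quadratic form of the edge $e$ with respect to the Laplacian of the sampled graph, and then recognize that ratio as an effective resistance. Recall that if $\bg \sim \NN(0,1)^{\binom{n}{2}}$, then for a fixed matrix $M$ (here $\bS \cdot \bB(G)$, an $\binom n2 \times n$ matrix), the vector $\bg \cdot M$ is a centered Gaussian in $\R^n$ with covariance $M^\top M$. For $M_0 := \bS\cdot\bB(G)$ we have $M_0^\top M_0 = \bL(G(\bS))$, the Laplacian of the sampled subgraph, and for $M_1 := \bS\cdot(\bB(G)+\bB(e))$ we have $M_1^\top M_1 = \bL(G(\bS)+e)$. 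So~\Cref{lem:main}'s KL term is exactly the KL-divergence between two centered multivariate Gaussians with covariances $\bL(G(\bS))$ and $\bL(G(\bS)+e)$ respectively. Note that both Gaussians live on the subspace orthogonal to $\bOne$ (and more generally orthogonal to indicator vectors of connected components), so the KL-divergence should be interpreted via pseudo-inverses on that common subspace.

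\textbf{Key steps.} First I would write the closed form for KL-divergence between two centered Gaussians $\NN(0,\Sigma_0)$ and $\NN(0,\Sigma_1)$, namely $\frac12\paren{\tr(\Sigma_1^{-1}\Sigma_0) - k + \ln\frac{\det\Sigma_1}{\det\Sigma_0}}$ (restricted to the common support of dimension $k$, using pseudo-determinants and pseudo-inverses). Second, I would use the rank-one structure: $\bL(G(\bS)+e) = \bL(G(\bS)) + \be_e\be_e^\top$ where $\be_e := \be_u - \be_v$ is the signed incidence vector of $e$. This is a rank-one update, so both $\det$ and $\Sigma_1^{-1}\Sigma_0$ can be handled by the matrix determinant lemma and the Sherman–Morrison formula. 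Writing $r := \be_e^\top \bL(G(\bS)+e)^+ \be_e = \eff{G(\bS)+e}{u,v}$ for the effective resistance of $e$ in $G(\bS)+e$ (this is well-defined since adding $e$ puts $u,v$ in the same component), the determinant ratio becomes $\ln\frac{1}{1-r}$ — or rather, since $e$ is present in $G(\bS)+e$, one gets $\det\bL(G(\bS)+e) = \det\bL(G(\bS))\cdot(1 + \be_e^\top \bL(G(\bS))^+\be_e)$ when $u,v$ already connected, and a separate (larger-support) case when they are not; I would handle these two cases and in both find that the log-det term is at most $\ln\frac{1}{1-r} \le \frac{r}{1-r}$. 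Third, the trace term $\tr(\Sigma_1^{-1}\Sigma_0) - k$ similarly reduces via Sherman–Morrison to something like $-\frac{r\,(1-r)}{1}$ or is nonpositive (intuitively $\Sigma_1 \succeq \Sigma_0$ so $\Sigma_1^{-1}\Sigma_0 \preceq I$ on the common support, giving trace $\le k$). Putting these together the KL is at most $\frac12\cdot\frac{r}{1-r}$ when the supports agree. Finally, since we also take a minimum with $1$, whenever $r \ge 1/2$ the bound $\min\{1,\text{KL}\}\le 1 \le 2r$ is immediate, and whenever $r < 1/2$ we get $\frac12\cdot\frac{r}{1-r} \le r \le 2r$; either way $\min\{1,\text{KL}\} \le 2r = 2\eff{G(\bS)+e}{u,v}$, which is the claim.

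\textbf{Main obstacle.} The delicate point is the degeneracy of the covariances: $\bL(G(\bS))$ is never full rank (it has $\bOne$ in its kernel, and more if $G(\bS)$ is disconnected), so one cannot literally apply the textbook Gaussian-KL formula, and worse, the two covariances $\bL(G(\bS))$ and $\bL(G(\bS)+e)$ may have \emph{different} kernels — precisely when $u$ and $v$ lie in different connected components of $G(\bS)$, adding $e$ merges two components and the support of the second Gaussian is strictly larger. When the supports genuinely differ, the "forward" KL $\KLD{\text{first}}{\text{second}}{}$ is the relevant direction and remains finite (the first distribution's support is contained in the second's), but the formula needs care: one should restrict to the span of $\bL(G(\bS))$, on which $\bL(G(\bS)+e)^+$ restricted is well-behaved, and the determinant bookkeeping must track that the second distribution "spreads out" in one extra dimension — this only helps (makes KL smaller in that direction) but must be argued rather than waved. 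I would handle the connected case as the main computation and then argue the disconnected case separately, observing that $\eff{G(\bS)+e}{u,v} = 1$ exactly when $u,v$ were in distinct components (all of $e$'s current flow must route through $e$ itself), so the bound $\min\{1,\text{KL}\}\le 1 = \eff{}{} \le 2\eff{}{}$ holds trivially there too. A secondary technical nuisance is being careful that the minimum with $1$ is genuinely needed — for $r$ close to $1$ the raw bound $\frac12\frac{r}{1-r}$ blows up, so the truncation is essential and the case split on whether $r \ge 1/2$ is the clean way to dispatch it.
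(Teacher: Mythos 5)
Your proposal is correct and takes essentially the same route as the paper: both express the two sketches as centered degenerate Gaussians whose covariances are the Laplacians $\bL(G(\bS))$ and $\bL(G(\bS))+b_eb_e^{\top}$, invoke the closed-form Gaussian KL on their common span, reduce the log-determinant and trace terms via the rank-one structure to the quantity $r=\eff{G(\bS)+e}{u,v}$, and dispatch the regime $r\geq 1/2$ (in particular the bridge case, where the spans differ) through the truncation at $1$; the paper's only cosmetic difference is that it changes basis by $(\bA^+)^{\top}$ and applies $\ln\det(\bI-ww^{\top})\geq -\tr(ww^{\top})-\tr((ww^{\top})^2)$ instead of your Sherman--Morrison/matrix-determinant-lemma bookkeeping, yielding the slightly sharper interim bound $r^2/2$. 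One harmless misstatement in your aside: when $u,v$ lie in different components of $G(\bS)$ the forward KL is actually $+\infty$ (the first Gaussian is supported on a strictly lower-dimensional subspace and hence is not absolutely continuous with respect to the second), not finite, but this is irrelevant since, as you observe, that case is covered entirely by the $\min$ with $1$ together with $\eff{G(\bS)+e}{u,v}=1$.
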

We will apply~\Cref{lem:kl-eff-needed} to the choice of edge $\estar=e$ whenever $\estar$ belongs to $graph(\bS)$, i.e., when $\estar$ is sampled by the sampling matrix $\bS$. 
To prove~\Cref{lem:kl-eff-needed}, we first calculate the KL-divergence between two high-dimensional Gaussians in terms of their covariance matrices.
Then we observe that the covariance matrix of $\bg \cdot \bS \cdot \bB(G)$ is simply the \emph{Laplacian matrix} of $G(\bS)$.
The lemma is proved by plugging in the Laplacian matrices of $G(\bS)$ and $G(\bS)+e$, and applying the connection between effective resistance and Laplacian matrix.
The proof is provided in~\Cref{sec:kl-bounds}.

\paragraph{Ingredient two: bounding effective resistances via expanders.} Our strategy is now to bound the effective resistance of 
the edge $\estar$ in $G(\bS)+\estar$. To do so, we will identify a ``good''-\emph{expander} subgraph $H$ of the $graph(\bS)$ that contains the edge $\estar$, and then primarily focus 
on the edges of $H$ that appear in $G(\bS)$ to bound the effective resistance of $\estar$ also. The following lemma is the heart of the proof.

\begin{lemma}\label{lem:effective-large}
	For any sampling matrix $\bS$, suppose $H$ is any subgraph of $graph(\bS)$ which is an $\eps$-expander with min-degree $D$ for some $\eps > 0$ and $D \geq(\eps^{-8} \cdot n^{\delta}) \cdot d$ for a constant $\delta > 0$.
	For any edge $e=(u,v) \in H$,   
	\[
	\Exp_G\Bracket{\eff{G(\bS)+e}{u,v}} = O(\eps^{-4}) \cdot \Paren{\frac{d}{D} + \frac{d^3}{D^2}}.
	\]
\end{lemma}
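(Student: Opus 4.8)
The plan is to fix $\bS$ and the expander $H \subseteq graph(\bS)$ and to work entirely inside the subgraph $G \cap H$ of $G$. Since $G \cap H \subseteq G(\bS) \subseteq G(\bS)+e$, Rayleigh monotonicity gives $\eff{G(\bS)+e}{u,v} \le \eff{G \cap H}{u,v}$, so it suffices to bound $\Exp_G\bracket{\eff{G \cap H}{u,v}}$. The only randomness left is the partition $V = V_1 \cup \cdots \cup V_d$ defining $G$; write $W_k := V_k \cap V(H)$. Because $G$ contains a clique on every $V_i \cup V_{i+1}$, an edge of $H$ survives in $G$ precisely when its endpoints lie in a single $W_k$ or in two consecutive groups, so $G \cap H = \bigcup_k H[W_k]\ \cup\ \bigcup_k E_H(W_k, W_{k+1})$ --- a ``chain of expanders.'' I will (a) isolate a structural good event $\mathcal{E}$ on the partition holding with probability $1 - n^{-\omega(1)}$; (b) on $\mathcal{E}$, exhibit an explicit unit $u$-to-$v$ flow in $G \cap H$ of energy $O(\eps^{-4})\Paren{\frac{d}{D} + \frac{d^3}{D^2}}$ and apply Thomson's principle; (c) on $\neg\mathcal{E}$, use the trivial bound $\eff{G(\bS)+e}{u,v} \le 1$, valid since $e$ itself is an edge of unit resistance between $u$ and $v$. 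Since $D \le |V(H)| \le n$, we have $\frac{d}{D} \ge \frac{1}{n}$, so $\Pr[\neg\mathcal{E}] \cdot 1 = n^{-\omega(1)}$ is dominated by the main term and the lemma follows.

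The event $\mathcal{E}$ bundles three facts, each failing with probability $n^{-\omega(1)}$ after a union bound over the $\le n$ vertices and $\le d \le n$ groups; all three rest on the observation that the groups of a vertex's (at least $D$) neighbours in $H$ are i.i.d.\ uniform over $[d]$ and that $D/d \ge \eps^{-8} n^{\delta} \gg \log n$, so Chernoff bounds give exponentially small failure. (i) Every vertex has at least $D/(2d)$ neighbours in $H$ inside its own group and inside each of its at most two neighbouring groups; in particular every $H[W_k]$ has minimum degree at least $D/(2d)$ and every $|W_k| \ge |V(H)|/(2d)$. (ii) Every pair of consecutive groups satisfies $|E_H(W_k, W_{k+1})| \ge D^2/(4d^2)$. (iii) --- the only substantial fact --- every $H[W_k]$ is an $\Omega(\eps^2)$-expander. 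This is an instance of the statement that a random rate-$p$ vertex sample of an $\eps$-expander is again an $\Omega(\eps^2)$-expander, provided the minimum degree $D$ satisfies $pD \gtrsim \log n / \eps^{O(1)}$, applied with $p = 1/d$; I would prove this as a standalone lemma, and it is exactly this step that forces the hypothesis $D \ge \eps^{-8} n^{\delta} d$ (a polynomial loss in the conductance under vertex sampling is unavoidable).

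On $\mathcal{E}$, suppose $u \in W_a$ and $v \in W_b$ with $a \le b$, and set $\ell := b - a \le d$. I build the unit $u \to v$ flow as a concatenation supported on the pairwise edge-disjoint pieces $C_a, B_{a+1}, C_{a+1}, B_{a+2}, \dots, B_b, C_b$, where $C_k$ denotes the edge set of $H[W_k]$ and $B_k := E_H(W_{k-1}, W_k)$: first redistribute the unit of flow inside $W_a$ from $\chi_u$ to the boundary distribution $w \mapsto \deg_{B_{a+1}}(w)/|B_{a+1}|$, then cross $B_{a+1}$ into $W_{a+1}$, redistribute inside $W_{a+1}$ to the distribution $w \mapsto \deg_{B_{a+2}}(w)/|B_{a+2}|$, and so on, finally collecting one unit at $v$ inside $W_b$. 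Each crossing piece is the ``uniform'' flow putting $1/|B_k|$ on every edge of $B_k$; this exactly realises the prescribed boundary distributions on both sides of $B_k$ and has energy $|B_k| \cdot (1/|B_k|)^2 = 1/|B_k| = O(d^2/D^2)$ by (ii), for a total of $O(d^3/D^2)$. Each redistribution piece inside $W_k$ routes between two probability distributions on $W_k$; their difference $x$ is orthogonal to the all-ones vector and has $\|x\|_2 \le 2$, so the minimum energy of that piece is at most $4/\lambda_2(\bL_{H[W_k]})$, and by Cheeger's inequality together with (i) and (iii) this is $O(\eps^{-4} \cdot d/D)$ (using $\lambda_2(\bL_{H[W_k]}) \gtrsim \varphi(H[W_k])^2 \cdot \dmin(H[W_k]) \gtrsim \eps^4 \cdot D/d$); the two endpoint pieces, where one distribution is $\chi_u$ or $\chi_v$, obey the same bound. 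Summing the at most $\ell+1 \le d+1$ redistribution pieces gives $O(\eps^{-4} \cdot d^2/D)$, and since $d^2/D \le \tfrac12(d/D + d^3/D^2)$ by AM--GM, the total flow energy is $O(\eps^{-4})\Paren{\frac{d}{D} + \frac{d^3}{D^2}}$; by Thomson's principle $\eff{G \cap H}{u,v}$ is at most this, establishing (b). (When $a = b$ the flow is a single redistribution inside $W_a$ from $\chi_u$ to $\chi_v$, of energy $O(\eps^{-4} d/D)$.)

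The step I expect to be the main obstacle is fact (iii): showing that a Bernoulli-$\tfrac{1}{d}$ vertex sample of an $\eps$-expander with minimum degree $D$ is again an $\Omega(\eps^2)$-expander of minimum degree $\Omega(D/d)$ once $D/d$ is polynomially large. This requires controlling all cuts of the sampled graph simultaneously --- for instance via a matrix-Chernoff / spectral argument or a Karger-style cut-counting union bound --- and it is where the slack $\eps^{-8} n^{\delta}$ in the hypothesis on $D$ is spent; I would isolate it into its own lemma (cf.\ the remark in the introduction on vertex-sampled expanders). A secondary but essential point is that one cannot simply bound $\eff{G \cap H}{u,v}$ by $O\big(1/(\lambda_2(G\cap H)\cdot\dmin(G\cap H))\big)$: the chain has diameter $\Theta(d)$, hence $\lambda_2(G\cap H) = O(1/d^2)$, and that crude estimate only gives $O(d^3/D)$ --- getting the correct $O(d^3/D^2)$ genuinely requires the per-group / per-crossing accounting in the explicit flow above.
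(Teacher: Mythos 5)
Your overall architecture coincides with the paper's: reduce to $G\cap H$ by Rayleigh monotonicity, show $G\cap H$ is a chain of expanders with high probability (your facts (i)--(iii) are the paper's Lemma 5.6 and Lemma 5.5, the latter being the vertex-sampling result proved by a trace-moment method in Section 6), handle the bad event with the trivial bound $\eff{G(\bS)+e}{u,v}\le 1$, and bound the resistance on the good event by an electrical argument. Your identification of the vertex-sampling step as the main obstacle, and your observation that the crude bound $1/(\lambda_2\cdot\dmin)$ for the whole chain is off by a factor of $d^2$, are both on target.

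However, the final energy accounting has a genuine gap. Your per-group redistribution pieces each cost $O(\eps^{-4}d/D)$, and there are up to $d+1$ of them, so your construction yields $O(\eps^{-4}d^2/D)$ for that part. The inequality you invoke to absorb this, $d^2/D\le\tfrac12(d/D+d^3/D^2)$, is false: clearing denominators it is equivalent to $2dD\le D+d^2$, which fails for all $D>d\ge 1$ --- and the lemma's hypothesis forces $D\ge \eps^{-8}n^{\delta}d\gg d$, exactly the regime where $d^2/D$ strictly dominates both $d/D$ and $d^3/D^2$ (AM--GM only gives $\tfrac12(d/D+d^3/D^2)\ge d^2/D^{3/2}$, which is the wrong direction). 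So as written you prove $O(\eps^{-4}(d^2/D+d^3/D^2))$, a factor $d$ weaker in the first term than the lemma claims, and this loss propagates to a weaker final theorem. The bound $\|x\|_2^2/\lambda_2(\bL_{H[W_k]})$ with $\|x\|_2\le 2$ is only tight for the two endpoint groups, where the demand is concentrated on a single vertex; for the intermediate groups the incoming and outgoing boundary distributions are spread over $\Theta(|W_k|)$ vertices, and one must exploit this to get a per-group cost of $o(d/D)$ --- but doing so requires an upper bound on $\deg_{B_k}(w)/|B_k|$, which is not available since $H$ only has a minimum-degree guarantee. The paper circumvents the per-group decomposition entirely (Lemma 5.7): it takes the actual unit electrical flow, defines level sets of the potential, and shows their volume grows geometrically (by expansion) until reaching $\vol(U_1)$ --- contributing a single $O(1/(\varphi^2\dmin))=O(\eps^{-4}d/D)$ term from each endpoint --- and then grows by $\Omega(\varphi\vol(U_1))$ per step across the chain, contributing $O(d/(\varphi^2\vol(U_1)))=O(\eps^{-4}d^3/D^2)$; the two sweeps from $u$ and from $v$ are then glued at a common vertex. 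You would need either this kind of global potential argument or a substantially sharper analysis of the intermediate redistribution demands to close the gap.
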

We will use a hierarchical expander decomposition of $graph(\bS)$ to identify an expander that contains the edge $\estar$ and then apply~\Cref{lem:effective-large} to this expander and edge $\estar$. 
To prove~\Cref{lem:effective-large}, we first observe that adding edges to a graph could only decrease the effective resistance, and thus, it suffices to study the effective resistance of $(u, v)$ in $G\cap H$.
Note that $G\cap H$ randomly partitions the vertices into $d$ sets, and only keeps the edges of $H$ with endpoints in the same set or adjacent sets.
We then show that because $H$ is an expander with large min-degree, $G\cap H$ restricted to any two adjacent sets must also be an expander with large min-degree with high probability.
Hence, $G\cap H$ looks like a ``chain of expanders'' (which we call \emph{a balanced path of expanders}), where adjacent expanders have a constant fraction overlap.
Finally, we show that since $G\cap H$ overall is well-connected, if we place a unit electric flow from $u$ to $v$, the flow will be well-spread across the graph.
Most edges have a small current, i.e., a low potential difference.
Therefore, it allows us to argue that the potential difference between $u$ and $v$ is also small, i.e., the effective resistance between $u$ and $v$ is small.
The detailed proof is provided in~\Cref{sec:eff_upper}. 

\paragraph{Putting everything together.} We now put these two ingredients together to prove~\Cref{lem:main}. In order to be able to apply our second tool in~\Cref{lem:effective-large}, we need a 
\emph{hierarchical} expander decomposition of $graph(\bS)$, which shows that the edge $\estar$ is ``more likely'' to land in ``better'' expanders of $graph(\bS)$ for the purpose of~\Cref{lem:effective-large} -- here, ``better'' means an 
expander with a higher minimum degree (the parameter in~\Cref{lem:effective-large} that governs the final bound).


\begin{lemma}\label{lem:hierarchical}
	For every $t \geq 1$, we can partition edges of $graph(\bS)$ into $t$ sets $E_1(\bS),\ldots,E_t(\bS)$ such that: 
	\begin{enumerate}[label=$(\roman*)$]
		\item For any $i \leq t$, define $m_i(\bS) := \card{E_i(\bS)}$; then, $m_1(\bS) \leq m(\bS)$ and $m_{i+1}(\bS) \leq m_{i}(\bS)/{2}$.
		\item For any $i < t$, there is some $k_i \geq 1$ such that edges in $E_i(\bS)$ can be partitioned into $\eps$-expanders $H^i_1,\ldots,H^i_{k_i}$ with minimum degree at least $D_i$ for parameters\footnote{Notice 
		that the edges $E_t(\bS)$ admit no such type of expander decomposition in our partitioning.} 
		\[
		\eps := \frac{1}{36\log{n}} \quad \text{and} \quad D_i \geq \frac{m_i(\bS)}{36n}.
		\] 
	\end{enumerate}
\end{lemma}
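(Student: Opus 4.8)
The plan is to obtain the partition by repeatedly peeling off an expander decomposition from the not-yet-covered edges, invoking \Cref{prop:expander-decomposition} each time with parameters tuned so that a constant fraction of the remaining edges gets covered and the covering expanders carry a min-degree proportional to the number of edges at that level. Fix $\eps := 1/(36\log n)$ once and for all; this lies in $(0,1/2)$ for all large enough $n$ and satisfies $8\eps\log n = \tfrac{2}{9}$. Set $R_1 := graph(\bS)$ and, for $i = 1,\dots,t-1$ in order, given the current uncovered edge set $R_i$ (viewed together with $V$ as an $n$-vertex graph) do the following. If $|R_i| \ge 36n$, apply \Cref{prop:expander-decomposition} to $(V,R_i)$ with this $\eps$ and with $\dmin^{(i)} := \lceil |R_i|/(36n)\rceil \ge 1$; let $H^i_1,\dots,H^i_{k_i}$ be the returned expanders (vertex-disjoint, hence edge-disjoint), let $E_i(\bS)$ be the union of their edge sets, and let $R_{i+1}$ be the leftover edge set $E_0$. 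If instead $|R_i| < 36n$, stop: set $E_i(\bS) := R_i$ and $E_j(\bS) := \emptyset$ for $i < j \le t$. If the loop runs to completion, set $E_t(\bS) := R_t$.

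Both conclusions then reduce to substituting the parameter choices into the leftover bound $|E_0| \le 8\eps|R_i|\log n + n\dmin^{(i)}$. In the main case $|R_i| \ge 36n$ this gives
\[
|R_{i+1}| \le \tfrac{2}{9}|R_i| + n\lceil |R_i|/(36n)\rceil \le \tfrac{2}{9}|R_i| + \tfrac{1}{36}|R_i| + n \le \tfrac{1}{4}|R_i| + \tfrac{1}{36}|R_i| = \tfrac{5}{18}|R_i|,
\]
so $m_i(\bS) = |R_i| - |R_{i+1}| \ge \tfrac{13}{18}|R_i|$ while $m_{i+1}(\bS) \le |R_{i+1}| \le \tfrac{5}{18}|R_i| \le \tfrac12 m_i(\bS)$, which is the halving required in (i); and $m_1(\bS) \le |R_1| = m(\bS)$ is immediate. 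For (ii), each $H^i_j$ is an $\eps$-expander whose min-degree is at least $\dmin^{(i)} \ge |R_i|/(36n) \ge m_i(\bS)/(36n)$, using $m_i(\bS) = |E_i(\bS)| \le |R_i|$, so $D_i := \dmin^{(i)}$ works. When the loop completes, step $i = t-1$ additionally gives $m_t(\bS) = |R_t| \le \tfrac{5}{18}|R_{t-1}| \le \tfrac12 m_{t-1}(\bS)$.

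It remains to handle the stopping case, say it occurs at step $i^{*} \le t-1$, together with the empty sets. For the part $E_{i^{*}}(\bS) = R_{i^{*}}$ we have $m_{i^{*}}(\bS) = |R_{i^{*}}| < 36n$, hence $m_{i^{*}}(\bS)/(36n) < 1$, so it suffices to partition it into expanders of min-degree at least $1$: take each $H^{i^{*}}_j$ to be a single edge --- a $1$-expander, since its only cut has conductance $1 \ge \eps$ --- and set $D_{i^{*}} := m_{i^{*}}(\bS)/(36n)$. Every empty set $E_j(\bS)$ with $j < t$ (including $E_{i^{*}}(\bS)$ in the degenerate event $R_{i^{*}} = \emptyset$) is handled by taking $H^j_1$ to be a single vertex, a trivial expander of min-degree $0 \ge D_j := 0 = m_j(\bS)/(36n)$; the set $E_t(\bS)$ carries no expander requirement. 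The size bound $m_{j+1}(\bS) \le m_j(\bS)/2$ is immediate once the sets are empty, and $m_{i^{*}+1}(\bS) = 0 \le \tfrac12 m_{i^{*}}(\bS)$. The case $t = 1$ is vacuous.

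I do not anticipate a genuine obstacle here; the only thing to get right is the constant bookkeeping, where $\dmin^{(i)}$ must be large enough that the produced expanders carry min-degree at least $m_i(\bS)/(36n)$ yet small enough that the leftover still drops by a constant factor. These two requirements together pin down $\dmin^{(i)} = \Theta(|R_i|/n)$ and force the threshold $|R_i| = \Omega(n)$, below which \Cref{prop:expander-decomposition} cannot guarantee any size decrease and one simply falls back to trivial single-edge (or single-vertex) expanders.
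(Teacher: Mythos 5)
Your proposal is correct and follows essentially the same route as the paper: iteratively peel off the expander decomposition of \Cref{prop:expander-decomposition} with $\eps = 1/(36\log n)$ and min-degree parameter $\Theta(|R_i|/n)$, then use the leftover bound to get the geometric decay in (i) and the degree bound in (ii). The only difference is your explicit handling of the corner case where the remaining edge count drops below $36n$ (so the min-degree parameter would fall below $1$), which the paper's proof glosses over; this is a refinement of the same argument, not a different one.
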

\begin{proof}
	For simplicity of exposition, we drop $(\bS)$  when denoting $E_i(\bS)$'s in the following. 
	We  construct $E_1,\ldots,E_t$ inductively using an auxiliary set of edges $F_0,\ldots,F_t$. Start with $F_0$ being the set of all edges in $graph(\bS)$ and 
	for $i=1$ to $t$ do: 
	\begin{enumerate}[label=$\arabic*).$]
		\item Apply the expander decomposition of~\Cref{prop:expander-decomposition} to $F_{i-1}$ with parameters 
		\[
		\eps = \frac{1}{36\log{n}} \quad \text{and} \quad \dmin = D_i = \frac{\card{F_{i-1}}}{36n},
		\] 
		to get $\eps$-expanders $H^i_1,\ldots,H^i_{k_i}$ each with minimum degree at least $D_i$. 
		\item Let $E_i$ be the union of edges assigned to the expanders in the decomposition of~\Cref{prop:expander-decomposition} in the previous step, and $F_i$ be the leftover edges. 
		Continue to iteration $i+1$. 
	\end{enumerate}
	\noindent
	We argue that $\card{F_i} \leq \card{F_{i-1}}/4$ for all $i \leq t$. For $i > 0$, we have that $\card{F_i}$ is the number of leftover edges of the decomposition
	and thus by~\Cref{prop:expander-decomposition},
	\[
		\card{F_i} \leq 8\eps \cdot \card{F_{i-1}} \cdot \log{n} + n \cdot D_i = \frac{8\card{F_{i-1}}}{36} + \frac{\card{F_{i-1}}}{36} = \frac{\card{F_{i-1}}}{4}.
	\]
	
	Now firstly, $E_i = F_{i-1} \setminus F_{i}$ and so by the above bound, $\card{E_i} \geq 2\card{F_{i}}$. At the same time, $E_{i+1} \subseteq F_{i}$ for and thus $\card{E_{i}} \geq 2 \card{E_{i+1}}$. This proves the first part. 
	
	Secondly, we get property $(ii)$ of the lemma by the choice of $\eps=1/36\log{n}$ in the decomposition and since $D_i = \card{F_{i-1}}/{36n} \geq {\card{E_i}}/{36n}$ as $E_i \subseteq F_{i-1}$. 
\end{proof}

We now have all the tools needed to prove~\Cref{lem:main}. For the rest of the proof, we fix a partitioning $(E_1(\bS),\ldots,E_t(\bS))$ of $graph(\bS)$ using~\Cref{lem:hierarchical} for some $t \geq 1$ such that: 
\begin{align}
	\text{$t$ is the \underline{largest} index where:} \quad m_{t-1}(\bS) \geq n^{1+\delta} \cdot d^{3/2}, \label{eq:choice-t}
\end{align}
where $\delta > 0$ is the absolute constant in~\Cref{lem:main}. 
This means that for every $e \in E_i(\bS)$ for $i < t$, the edge $e$ belongs to some $\eps$-expander $H^i_j$ for $j \in [k_i]$ with min-degree $D_i$ such that, 
\begin{align}
	\eps = \frac{1}{36\log{n}} \qquad \text{and} \qquad D_i \geq \frac{m_i(\bS)}{36n}. \label{eq:can-apply}
\end{align}
This also implies that $D_i \geq (1/36) \cdot d^{3/2} \cdot n^{\delta} \geq (\eps^{-10} \cdot n^{\delta'}) \cdot d$ for some absolute constant $\delta' > 0$ which allows us to apply~\Cref{lem:effective-large} to each expander $H^i_j$ in the proof. 

We now have, 
\begin{align*}
	\text{LHS of~\Cref{lem:main}} &= \Exp_{G,\estar}\Bracket{\min\set{1,\KLD{\bg \cdot \bS \cdot \bB(G)}{\bg \cdot \bS \cdot (\bB(G)+\bB(\estar))}{\bg}}} \\
	&= \sum_{e \in graph(\bS)} \Pr\paren{\estar=e} \cdot  \Exp_{G \mid \estar=e}\Bracket{\min\set{1,\KLD{\bg \cdot \bS \cdot \bB(G)}{\bg \cdot \bS \cdot (\bB(G)+\bB(\estar))}{\bg}}} \tag{whenever $\estar \notin graph(\bS)$, both terms 
	of the KL-divergence will be the same and thus it will be $0$} \\
	&= \sum_{i=1}^{t} \sum_{e \in E_i(\bS)} \Pr\paren{\estar=e} \cdot  \Exp_{G}\Bracket{\min\set{1,\KLD{\bg \cdot \bS \cdot \bB(G)}{\bg \cdot \bS \cdot (\bB(G)+\bB(\estar))}{\bg}}} \tag{by the partitioning of edges of $graph(\bS)$ 
	and since $G \perp \estar$ in $\distspan$} \\
	&=  \frac{1}{{{n}\choose{2}}} \sum_{i=1}^{t} \sum_{e \in E_i(\bS)} \Bracket{\min\set{1,\KLD{\bg \cdot \bS \cdot \bB(G)}{\bg \cdot \bS \cdot (\bB(G)+\bB(e))}{\bg}}}
	\tag{as the marginal distribution of $\estar$ is uniform over ${{V}\choose{2}}$ and we conditioned on $\estar=e$} \\ 
	&\leq \frac{m_t(\bS)}{{{n}\choose{2}}} +  \frac{1}{{{n}\choose{2}}} \cdot \sum_{i=1}^{t-1} \sum_{\substack{e=(u,v)  \in E_i(\bS)}} \Exp_{G}\Bracket{2 \cdot {\eff{G(\bS)+e}{u,v}}} 
	\tag{using the trivial upper bound of $1$ for $E_t(\bS)$ and~\Cref{lem:kl-eff-needed} for $E_1(\bS),\ldots,E_{t-1}(\bS)$} \\
	&= \frac{m_t(\bS)}{{{n}\choose{2}}} +  \frac{2}{{{n}\choose{2}}} \cdot \sum_{i=1}^{t-1}  \sum_{j=1}^{k_i}\sum_{\substack{e=(u,v)  \in H^i_j}} \Exp_{G}\Bracket{{\eff{G(\bS)+e}{u,v}}} 
	\tag{as each $E_i(\bS)$ for $i < t$ is partitioned into expanders $H^i_1,\ldots,H^i_{k_i}$ by~\Cref{lem:hierarchical}} \\
	&=  \frac{m_t(\bS)}{{{n}\choose{2}}} +   \frac{2}{{{n}\choose{2}}} \cdot \sum_{i=1}^{t-1} \sum_{j=1}^{k_i}\sum_{e=(u,v) \in H^i_j} O(\eps^{-4}) \cdot \Paren{\frac{d }{D_i} + \frac{d^3}{D^2_i}} 
	\tag{by~\Cref{lem:effective-large} as each $H^i_j$ is an $\eps$-expander with min-degree $D_i$ (and by~\Cref{eq:can-apply} we can use the lemma)} \\
	&=  \frac{m_t(\bS)}{{{n}\choose{2}}} + \frac{2}{{{n}\choose{2}}} \cdot \sum_{i=1}^{t-1} m_i(\bS) \cdot O(\log^4{n}) \cdot \Paren{\frac{d \cdot n}{m_i(\bS)} + \frac{d^3 \cdot n^2}{m_i(\bS)^2}} \tag{as $\eps = \Theta(1/\log{n})$ and $D_i \geq m_i(\bS)/12n$ 
	by~\Cref{eq:can-apply} and $H^i_1,\ldots,H^i_{k_i}$ have $m_i(\bS)$ edges in total} \\
	%
	%
	&= \frac{m_t(\bS)}{{{n}\choose{2}}} + O(\log^5{n} \cdot \frac{d}{n}) + O(\log^4{n}) \cdot {\frac{d^3}{m_{t-1}(\bS)}} \tag{as $m_i(\bS)$'s decrease (at least) by a geometric series and $t = O(\log{n})$ by~\Cref{lem:hierarchical}} \\
	&\leq \frac{n^{1+\delta} \cdot d^{3/2}}{{{n}\choose{2}}} + O(\log^5{n} \cdot \frac{d}{n}) + O(\log^4{n}) \cdot \frac{d^3}{n^{1+\delta} \cdot d^{3/2}} \tag{by the choice of $t$ in~\Cref{eq:choice-t}} \\
	&= O(\frac{d^{3/2}}{n^{1-\delta}}). 
\end{align*}
This concludes the proof of~\Cref{lem:main}. The next two sections are now dedicated to proving the two main ingredients of this lemma, namely,~\Cref{lem:kl-eff-needed} and~\Cref{lem:effective-large}.


\section{KL Divergence Between Sketches of Neighboring Graphs}\label{sec:kl-bounds}

In this section, we prove~\Cref{lem:kl-eff-needed}.
Since the LHS of the inequality is at most one, it suffices to prove the KL-divergence is bounded by the RHS when the effective resistance of $e$ is at most $1/2$.
Thus,~\Cref{lem:kl-eff-needed} is an immediate corollary of the following lemma by setting $G$ to $G(\bS)+e$.
\begin{lemma}\label{lem:kl-eff}
	Let $G$ be a graph and $e=(u,v)$ be an edge in $G$ with effective resistance at most $1/2$. 
	Then we have
	$$
			\KLD{\bg \cdot (\bB(G) - \bB(e))}{\bg \cdot \bB(G)}{\bg} \leq \frac{1}{4} \cdot \eff{G}{u,v},
	$$
	where $\bg$ has independent standard Gaussian coordinates.
\end{lemma}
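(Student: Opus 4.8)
The plan is to compute the KL-divergence between the two jointly Gaussian vectors explicitly in terms of their covariance matrices, and then identify these covariances with Laplacians. Observe first that $\bg \cdot \bB(G)$ is a centered Gaussian vector in $\R^n$ whose covariance is $\bB(G)^\top \bB(G) = \bL(G)$, the Laplacian of $G$; likewise $\bg \cdot (\bB(G) - \bB(e))$ is centered Gaussian with covariance $\bL(G) - \bL(e)$, which is the Laplacian of $G$ with the edge $e$ deleted. (Here $\bB(e)$ is the single-row incidence matrix of $e$, so $\bB(e)^\top\bB(e) = \bL(e) = (\be_u - \be_v)(\be_u - \be_v)^\top$.) Both matrices are singular — they share the all-ones vector (and, if $G$ is disconnected, the indicators of its components) in their kernel — so I will restrict attention to the orthogonal complement of the common kernel, where the first covariance is positive definite precisely because $e$ has effective resistance at most $1/2 < \infty$, hence removing $e$ does not disconnect the graph and the two matrices have the same kernel.

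Next I will invoke the closed-form formula for KL-divergence between two centered non-degenerate Gaussians $\NN(0,\Sigma_0)$ and $\NN(0,\Sigma_1)$ on $\R^k$, namely
\[
\KLD{\NN(0,\Sigma_0)}{\NN(0,\Sigma_1)}{} = \tfrac12\left( \tr(\Sigma_1^{-1}\Sigma_0) - k + \ln\frac{\det\Sigma_1}{\det\Sigma_0}\right),
\]
applied with $\Sigma_0 = \bL(G) - \bL(e)$ and $\Sigma_1 = \bL(G)$ on the non-kernel subspace of dimension $k$. Writing $\Sigma_0 = \Sigma_1 - ww^\top$ with $w = \be_u - \be_v$, the matrix-determinant lemma gives $\det\Sigma_0/\det\Sigma_1 = 1 - w^\top \Sigma_1^{-1} w = 1 - \eff{G}{u,v}$ (the effective resistance is exactly $w^\top \bL(G)^+ w$), and the Sherman–Morrison formula gives $\tr(\Sigma_1^{-1}\Sigma_0) - k = -\tr(\Sigma_1^{-1} w w^\top) = -\eff{G}{u,v}$. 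Therefore the bracket becomes $-r - \ln(1-r)$ where $r := \eff{G}{u,v} \le 1/2$. Finally I will bound $-r - \ln(1-r) \le r^2$ for $r \in [0,1/2]$ (elementary: the function $r^2 + r + \ln(1-r)$ is $0$ at $r=0$ and has nonpositive derivative $2r - r/(1-r)$ on $[0,1/2]$ since there $1/(1-r) \le 2$), which yields $\KLD{\cdot}{\cdot}{} \le \tfrac12 r^2 \le \tfrac14 r$, using $r \le 1/2$ once more. This is exactly the claimed bound $\tfrac14 \eff{G}{u,v}$.

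The main obstacle I anticipate is purely bookkeeping around the degeneracy: one must argue carefully that both Gaussians live on the same subspace (so that KL is finite and the non-degenerate formula applies) and that the pseudoinverse $\bL(G)^+$ appearing via effective resistance is the correct object to plug into Sherman–Morrison and the matrix-determinant lemma on that subspace. The condition $\eff{G}{u,v} \le 1/2$ is what guarantees $u,v$ stay connected after deleting $e$, hence equality of the two kernels; without it the KL-divergence would be infinite (one distribution would be supported on a strictly larger subspace) and the statement would fail. Everything else — the Gaussian KL formula, Sherman–Morrison, the matrix-determinant lemma, and the scalar inequality — is standard and I would cite or verify it in a line or two.
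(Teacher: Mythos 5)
Your proposal is correct and follows essentially the same route as the paper: both identify the covariances with the Laplacians $\bL$ and $\bL-b_eb_e^{\top}$, work modulo the common kernel (using that $\eff{G}{u,v}\le 1/2$ forces $e$ not to be a bridge), apply the centered-Gaussian KL formula to a rank-one perturbation with $w^{\top}\bL^{+}w=\eff{G}{u,v}$, and finish with the scalar bound $-r-\ln(1-r)\le r^2\le r/2$; the paper merely whitens by $(\bA^{+})^{\top}$ and lower-bounds the log-determinant via $\ln\det(\bI+\bA)\ge \tr(\bA)-\tr(\bA^{2})$ instead of evaluating it exactly through the matrix-determinant lemma, which for a rank-one update is the same computation. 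Three cosmetic fixes: the non-bridge property follows from $\eff{G}{u,v}\le 1/2<1$ (an edge has effective resistance exactly $1$ iff it is a bridge), not from mere finiteness; your derivative $r\left(2-\tfrac{1}{1-r}\right)$ is \emph{nonnegative} on $[0,1/2]$ (which is exactly what makes $r^{2}+r+\ln(1-r)\ge 0$ there), not nonpositive; and the trace identity $\tr(\Sigma_1^{-1}\Sigma_0)-k=-w^{\top}\Sigma_1^{-1}w$ needs only linearity of the trace, not Sherman--Morrison.
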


\bigskip

Now, observe that both $\bg \cdot (\bB(G) - \bB(e))$ and $\bg \cdot \bB(G)$ are high-dimensional Gaussian distributions. 
To prove~\Cref{lem:kl-eff}, we will need the following claim on the KL-divergence between two Gaussian distributions.

\begin{claim}\label{cl:kl-bound}
Let $P$ and $Q$ be $n$-dimensional Gaussian distributions with zero mean and covariance $\bSigma_1\in \R^{n\times n}$ and $\bSigma_2\in \R^{n\times n}$ respectively.
If $\spn(\bSigma_1)=\spn(\bSigma_2)$, which have dimension $k$, then
$$
\KLD{P}{Q}{}=\frac1{2}\left[-\ln \det\left((\bA^+)^{\top}\bSigma_1\bA^+\right)-k+\tr\left((\bA^+)^{\top}\bSigma_1\bA^+\right)\right],
$$
where $\bA^{\top}\bA=\bSigma_2$, and $\bA^+\in\R^{n\times k}$ is the pseudoinverse of $\bA\in\R^{k\times n}$.

\end{claim}
\begin{proof}
	We view both $P$ and $Q$ as random $n$-dimensional column vectors.
	We first apply the linear transformation $(\bA^+)^{\top}$ to both $P$ and $Q$.
	This does not change the KL-divergence between $P$ and $Q$, as it preserves $\frac{\mathbf{d}P}{\mathbf{d}Q}$.
	Note that the covariance matrix $(\bA^+)^{\top}Q$ is
	\[
		\Exp\left[(\bA^+)^{\top}Q\left((\bA^+)^{\top}Q\right)^{\top}\right]=(\bA^+)^{\top}\bSigma_2 \bA^+=(\bA^+)^{\top}\bA^\top \bA \bA^+=\bI_k,
	\]
	where the last equality uses the fact that $\bA$ has linearly independent rows, hence $\bA\bA^{+}=\bI_k$.
	Similarly, the covariance matrix of $(\bA^+)^{\top}P$ is $\bSigma=(\bA^+)^{\top}\bSigma_1 \bA^+$.
	Since $\spn(\bSigma_1)=\spn(\bSigma_2)$, $\bSigma$ is invertible.

	For simplicity of notations, it suffices to consider $\KLD{P}{Q}{}$ for covariance matrices $\bSigma$ and $\bI_k$ respectively.
	By definition, we have
	\begin{align*}
		\KLD{P}{Q}{}&=\E_{x\sim P}\left[\ln \left(\frac{e^{-\frac{1}{2}x^{\top}\bSigma^{-1}x}/\sqrt{(2\pi)^k\det\left(\bSigma\right)}}{e^{-\frac{1}{2}x^{\top}x}/\sqrt{(2\pi)^k}}\right)\right] \\
		&=\E_{x\sim P}\left[-\frac{1}{2}x^{\top}\bSigma^{-1}x-\frac{1}{2}\ln\det\left(\bSigma\right)+\frac{1}{2}\|x\|_2^2\right].
	\end{align*}
	For the first term, since $\bSigma^{-1/2}x$ is a standard Gaussian, $\E_{x\sim P}\left[x^{\top}\bSigma^{-1}x\right]=\E_{x\sim P}\left[\|\bSigma^{-1/2}x\|_2^2\right]=k$.
	For the third term, we have $\E_{x\sim P}\left[\|x\|^2\right]=\sum_i\var\left[x_i\right]=\tr(\bSigma)$.
	We have
	\[
		\KLD{P}{Q}{}=\frac{1}{2}\left(-\ln\det\left(\bSigma\right)-k+\tr(\bSigma)\right).
	\]

	This proves the claim.
\end{proof}

In the later proof, we will need the following claim to bound the logarithm of the determinant term from the previous claim.

\begin{claim}\label{cl:logdet}
For a symmetric matrix $\bA$ with $\left\|\bA\right\|_2\leq 1/2$ one has $\ln \det (\bI+\bA)\geq \tr(\bA)-\tr (\bA^2)$.
\end{claim}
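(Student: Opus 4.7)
The plan is to diagonalize $\bA$ and reduce the matrix inequality to a scalar one that follows from the Taylor expansion of $\ln(1+x)$.

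First, since $\bA$ is symmetric, I would write $\bA = \bU \bSigma \bU^\top$ with $\bU$ orthogonal and $\bSigma$ diagonal with entries $\lambda_1, \ldots, \lambda_n$; the hypothesis $\|\bA\|_2 \leq 1/2$ gives $|\lambda_i| \leq 1/2$ for every $i$. All three quantities appearing in the claim are spectral invariants, so they split over the eigenvalues:
\[
\ln\det(\bI + \bA) = \sum_{i=1}^n \ln(1 + \lambda_i), \qquad \tr(\bA) - \tr(\bA^2) = \sum_{i=1}^n \bigl(\lambda_i - \lambda_i^2\bigr).
\]
Hence the matrix claim reduces to the scalar inequality
\[
\ln(1 + \lambda) \;\geq\; \lambda - \lambda^2 \qquad \text{for every } \lambda \in [-1/2,\, 1/2],
\]
since summing over $i$ then recovers the desired statement.

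Next, I would prove the scalar inequality from the Taylor series $\ln(1+\lambda) = \lambda - \lambda^2/2 + \sum_{k \geq 3} (-1)^{k+1}\lambda^k/k$, which is absolutely convergent for $|\lambda| < 1$. Rearranging, the target becomes
\[
\ln(1+\lambda) - \lambda + \lambda^2 \;=\; \frac{\lambda^2}{2} + \sum_{k \geq 3} \frac{(-1)^{k+1}\lambda^k}{k} \;\geq\; 0.
\]
The key step is a geometric tail bound: using $1/k \leq 1/3$ for $k \geq 3$, the tail is absolutely bounded by $|\lambda|^3/(3(1-|\lambda|)) \leq 2|\lambda|^3/3$ whenever $|\lambda| \leq 1/2$. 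Therefore the expression is at least $\lambda^2/2 - 2|\lambda|^3/3 = \lambda^2 \cdot \bigl(1/2 - 2|\lambda|/3\bigr) \geq \lambda^2/6 \geq 0$.

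There is no substantive obstacle here; the content is simply that the assumption $\|\bA\|_2 \leq 1/2$ is exactly tight enough to make the quadratic term $\lambda^2/2$ dominate the cubic-and-higher tail uniformly. The only point to watch is that for negative $\lambda$ the Taylor terms past the first order all have the same sign, so one cannot appeal to an alternating-series bound; the absolute-value tail bound above handles both signs of $\lambda$ simultaneously and avoids this pitfall.
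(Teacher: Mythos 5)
Your proof is correct and follows essentially the same route as the paper: diagonalize the symmetric matrix to reduce the claim to the scalar inequality $\ln(1+x)\geq x-x^2$ on $[-1/2,1/2]$ and sum over eigenvalues. The only difference is that the paper simply cites this scalar fact, whereas you verify it via the Taylor expansion with a correct geometric tail bound (rightly noting that an alternating-series argument would not cover negative $\lambda$), which is a harmless elaboration rather than a different approach.
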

\begin{proof}
	Let $\lambda_1,\ldots,\lambda_n$ be the eigenvalues of $\bA$.
	Then we have
	\begin{align*}
		\ln\det(\bI+\bA)&=\ln\prod_{i=1}^n (1+\lambda_i) 
		=\sum_{i=1}^n \ln(1+\lambda_i) 
		\geq \sum_{i=1}^n (\lambda_i-\lambda_i^2) 
		=\tr(\bA)-\tr(\bA^2),
	\end{align*}
	where the inequality uses the fact that $\ln(1+x)\geq x-x^2$ for all $\left|x\right|\leq 1/2$.
\end{proof}

Now we are ready to prove~\Cref{lem:kl-eff}.

\begin{proof}[Proof of~\Cref{lem:kl-eff}]

We apply~\Cref{cl:kl-bound} with $P=\bg \cdot (\bB(G) - \bB(e))$ and $Q=\mathbf{g}\cdot \bB(G)$, noting that the covariance matrices are 
$\bB(G)^\top\bB(G)-b_e b_e^\top=\bL-b_eb_e^{\top}$ and $\bB(G)^{\top}\bB(G)=\bL$, where $\bL$ is the Laplacian of $G$, $e=\{u, v\}$ and $b_e^{\top}$ is  $e$-th row of $\bB$.
Since $e$ is not a bridge by assumption of the lemma, we have $\spn(\bL-b_eb_e^{\top})=\spn(\bL)$.
We get by~\Cref{cl:kl-bound}, 

\begin{equation}\label{eq:kl-logdet}
\begin{split}
&\quad\,\, \KLD{\bg \cdot (\bB(G) - \bB_{u,v})}{\bg \cdot \bB(G)}{\bg} \\
&=\frac1{2}\left[-\ln \det ((\bA^+)^{\top}(L-b_e b_e^{\top})\bA^+)-k+\tr((\bA^+)^{\top}(L-b_e b_e^{\top})\bA^+)\right]\\
&=\frac1{2}\left[-\ln \det (\bI-w w^{\top})-k+\tr(\bI-w w^{\top}))\right],
\end{split}
\end{equation}
where $k$ is the rank of $\bL$, $\bA$ is a $k\times n$ matrix such that $\bA^{\top}\bA=\bL$, $w=(\bA^+)^{\top}b_e$. Note that 
\[
	\|ww^\top\|_2=\|w\|_2^2=b_e^{\top}\bA^+(\bA^+)^{\top}b_e=b_e^{\top}\bL^+b_e=\eff{G}{u,v}\leq 1/2,
\]
where the last bound is by assumption of the lemma. This means that~\Cref{cl:logdet} applies with $\bA=-ww^{\top}$ and
\begin{equation}\label{eq:logdet}
\ln \det (\bI-ww^{\top})\geq -\tr(ww^{\top})-\tr ((ww^{\top})^2)=-\|w\|_2^2-\|w\|_2^4.
\end{equation}

Substituting into~\eqref{eq:kl-logdet}, we thus get
\begin{equation*}
\begin{split}
\KLD{\bg \cdot (\bB(G) - \bB_{u,v})}{\bg \cdot \bB(G)}{\bg}&=\frac1{2}\left[-\ln \det (\bI-w w^{\top})-k+\tr(\bI-w w^{\top}))\right]\\
&\leq \frac1{2}\left[\|w\|_2^2+\|w\|_2^4 -k+(k-\|w\|_2^2)\right]\\
&=\|w\|_2^4/2\\
&=\eff{G}{u, v}^2/2 \\
&\leq \eff{G}{u, v}/4.
\end{split}
\end{equation*}
as required.
\end{proof}


\section{Effective Resistance Upper Bound}\label{sec:eff_upper}

In this section, we prove~\Cref{lem:effective-large}.
To this end, let $G$ be a random graph sampled from $\mu(n,d)$ without the edge $(\ustar,\vstar)$ (see~\Cref{fig:dist}).
We will first show that for any expander $H$ with a minimum degree, $G\cap H$ is a \emph{balanced path of expanders} with high probability over the randomness of $G$,\footnote{An edge is in $G\cap H$ if and only if it is in both $G$ and $H$.} then prove that the effective resistance between \emph{every pair} of vertices in such a graph is small.
A balanced path of expanders is defined as follows.

\begin{definition}[Balanced path of expanders]\label{def:path}
We say that a graph $H=(V, E)$ is a balanced length-$d$ path of $\varphi$-expanders if there exists a partition $(V_i)_{i\in [d]}$ of $V$ such that subsets $U_i=V_i\cup V_{i+1}$ for $i\in [d-1]$ defined by
satisfy the following conditions:
\begin{description}
\item[(1)]  for every $i\in [d]$ the graph $H_i=(U_i, E_i)$, $E_i=E\cap (U_i\times U_i)$, induced by $U_i$ is a $\varphi$-expander;
\item[(2)] $\vol(U_i)\leq 3\cdot \vol_{H_j}(U_j)$ for every $i, j\in [d]$;
\item[(3)] for every $i,j\in [d]$ such that $V_j\subset U_i$ (i.e., $j=i$ or $i+1$),
$$
\left|E(V_j,V_j)\right|\geq \frac{1}{8}\vol_{H_i}(U_i).
$$
\end{description}
\end{definition}

Intuitively, a balanced path of expanders consists of a sequence of expanders such that they have roughly the same size, and adjacent expanders have a constant fraction of intersection.
The following lemma states that if $H$ is an expander with a minimum degree, then $H'=G\cap H$ is a balanced length-$d$ path of expanders with high probability.

\begin{lemma}\label{lem:high-prob-path}
	Let $H=(V,E)$ be an $\varepsilon$-expander on $n$ vertices with $m$ edges and minimum degree $\dmin$.
	Let $\delta>0$ be such that $\dmin/d=n^{\delta}$.
	We randomly partition the vertices into $V_1,\ldots,V_d$ such that each vertex is in each $V_i$ with probability $1/d$ independently.
	Then the graph $H'=(V,E')$, where \[
		E'=E\cap \left(\bigcup_{i=1}^d (V_i\times V_i)\cup\bigcup_{i=1}^{d-1}(V_i\times V_{i+1})\right),
	\]
	is a balanced length-$d$ path $\varphi$-expanders with minimum degree $\Omega(\dmin/d)$ and $\Theta(m/d)$ edge with probability $1-n^{-\omega(1)}$, for $\varphi=\Omega(\varepsilon^2\delta^2)$ as long as $n^{\delta}\geq \varepsilon^{-8}\log^{32} n$.
\end{lemma}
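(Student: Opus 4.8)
The plan is to prove Lemma~\ref{lem:high-prob-path} by verifying each of the three conditions in Definition~\ref{def:path} separately, with the bulk of the effort going into condition~(1), the expansion of the induced subgraphs $H_i = G\cap H$ restricted to $U_i = V_i\cup V_{i+1}$. The key structural fact to establish is a \emph{vertex-sampling preserves expansion} statement: if $H$ is an $\varepsilon$-expander with minimum degree $\dmin$, and we retain each vertex independently with probability $p = \Theta(1/d)$, then the induced subgraph on the sampled set is an $\Omega(\varepsilon^2)$-expander (losing a bit more in the $\delta$ parameters) with high probability, \emph{provided} $p\cdot\dmin = \dmin/d = n^{\delta}$ is large enough to kill the union bound over all cuts. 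Note that $U_i$ is exactly a sample that retains each vertex independently with probability $2/d$ (it lands in $V_i$ or $V_{i+1}$), so the single vertex-sampling lemma handles all the $H_i$ at once.

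First I would isolate and prove the vertex-sampling lemma. The standard approach: fix a cut $(A, B)$ inside the sampled vertex set $W$; we want $|E_{H[W]}(A,B)| \geq \varphi\cdot\min\{\vol_{H[W]}(A), \vol_{H[W]}(B)\}$. The natural route is to track, for each vertex $v\in W$, its degree $d_v^W$ into $W$, which concentrates around $p\cdot d_v$ by a Chernoff bound since $d_v\geq \dmin$ and $p\dmin = n^\delta$ is large; a union bound over the $n$ vertices controls all degrees simultaneously up to constant factors, losing only $n^{-\omega(1)}$ probability when $n^\delta \geq \polylog n$. This immediately gives the minimum degree $\Omega(\dmin/d)$ and, summed up, the edge count $\Theta(m/d)$ claimed in the lemma. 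For the cut expansion itself, one should restrict attention to cuts $(A,B)$ with, say, $\vol_{H[W]}(A)\leq \vol_{H[W]}(B)$; expand $A$ to the set $A'\subseteq V$ of \emph{all} vertices of $H$, sampled or not, but the cleaner path is to sample the cut in $H$ first. Concretely, a known technique (used e.g. in expander-decomposition sparsification arguments) is: condition on a cut $(X, V\setminus X)$ of $H$, then note $E_H(X, V\setminus X)\geq \varepsilon\vol_H(X)$, and when we subsample vertices, the surviving cross-edges concentrate around $p^2\cdot E_H(X,V\setminus X)$ while the surviving volume concentrates around $p\cdot\vol_H(X)$; but this ratio is only $\Omega(p\varepsilon)$, which is too weak. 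The fix is the multiplicative-weights / "charge each edge" argument: one shows that with high probability, \emph{every} sparse cut in $H[W]$ can be ``lifted'' to a cut in $H$ that is not too sparse, using that $H$ has large minimum degree so that no small set of vertices carries too much volume. This is precisely the place where the hypothesis $n^\delta \geq \varepsilon^{-8}\log^{32} n$ is consumed, and it is the main obstacle: the union bound is over exponentially many cuts, so one needs the Chernoff deviation to beat $2^{n}$, which forces the per-vertex sampled degree to be at least $\Omega(\varepsilon^{-c}\log n)$ for a suitable constant $c$, and propagating the $\varepsilon$-losses through Cheeger-type arguments or through a direct combinatorial cut argument yields the stated $\varphi = \Omega(\varepsilon^2\delta^2)$.

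Next I would dispatch conditions~(2) and~(3), which are comparatively routine given the degree-concentration bounds already in hand. For condition~(2), the balanced-volume condition, observe $\vol_{H_i}(U_i) = \sum_{v\in V_i\cup V_{i+1}} d_v^{H_i}$, and each term concentrates around $(2/d)\cdot d_v^H$ summed over a random $(2/d)$-fraction of vertices; since all the $U_i$ are (nearly) identically distributed samples of a fixed graph $H$ with total volume $2m$, each $\vol_{H_i}(U_i)$ concentrates around $(2/d)\cdot(2m)\cdot(2/d) = \Theta(m/d^2)$ — wait, more carefully, $\vol_{H_i}(U_i)$ counts edges of $H$ with both endpoints in $U_i$, so it concentrates around $(2/d)^2\cdot 2m = \Theta(m/d^2)$, the same quantity for every $i$, giving the factor-$3$ slack easily by a Chernoff bound and a union bound over the $d\leq n$ indices. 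For condition~(3), $|E(V_j, V_j)|$ is the number of edges of $H$ with both endpoints landing in the \emph{same} part $V_j$, concentrating around $(1/d)^2\cdot 2m$, while $\vol_{H_i}(U_i)$ concentrates around $(2/d)^2\cdot 2m = 4(1/d)^2\cdot 2m$, so the ratio is $\approx 1/4 > 1/8$ with room to spare; again Chernoff plus union bound. Throughout, every concentration statement needs its deviation probability to be $n^{-\omega(1)}$, which is exactly what $n^\delta \geq \polylog n$ buys us, and the final failure probability is a union over $O(d) = O(n)$ events of probability $n^{-\omega(1)}$ each, hence still $n^{-\omega(1)}$.

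I expect the vertex-sampling expansion bound to be the crux — everything else is bookkeeping with Chernoff bounds. The subtlety there is that a naive "subsample both endpoints of every cross-edge" argument loses a full factor of $p$ in the conductance (because volume scales like $p$ but cut size like $p^2$), so one must instead argue per-\emph{cut} rather than per-edge and use the minimum-degree hypothesis to ensure that the $2^n$-sized union bound over cuts in the \emph{sampled} graph can be beaten; this is where the somewhat large polynomial slack $\varepsilon^{-8}\log^{32}n$ and the quadratic losses $\varphi = \Omega(\varepsilon^2\delta^2)$ originate, and getting the exponents to line up is the delicate part. An alternative, possibly cleaner, route is to invoke a black-box result on expansion under vertex sampling if one is available in the literature, but since the paper advertises this as a step of independent interest, I would present the self-contained Chernoff-and-union-bound-over-cuts proof.
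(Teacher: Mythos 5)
Your treatment of conditions (2) and (3), the minimum degree, and the edge count matches the paper's (Lemma~\ref{lem:sample_volume} there, proved with Bernstein's inequality and a union bound over the $O(nd)$ vertex/part pairs), so that part is fine modulo bookkeeping. The problem is condition (1), which you correctly identify as the crux, and for which your proposed Chernoff-and-union-bound-over-cuts argument does not work. For a fixed candidate set $A$ with $|A|=k$, the cut $|E_{H[W]}(A, W\setminus A)|$ is a sum of independent contributions $\xi_v\cdot|E_H(v,A)|$ over $v\notin A$, with mean $\geq p\,\varepsilon\,\vol_H(A)$ and per-term range up to $k$; Bernstein then gives a lower-tail failure probability of roughly $\exp(-\Omega(\varepsilon\, p\,\dmin))=\exp(-\Omega(\varepsilon\, n^{\delta}))$. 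Since the lemma only assumes $n^{\delta}\geq\varepsilon^{-8}\log^{32}n$, i.e.\ $n^\delta$ may be polylogarithmic, this per-cut bound is $\exp(-\operatorname{polylog} n)$, which cannot be union-bounded against the $\binom{n}{k}=2^{\Theta(n)}$ sets of linear size. Your assertion that sampled degree $\Omega(\varepsilon^{-c}\log n)$ suffices to ``beat $2^n$'' is exactly the false step; the vague ``lift every sparse cut of $H[W]$ to a not-too-sparse cut of $H$'' does not repair it, because the adversary chooses $A$ after seeing $W$, and no cut-counting bound (\`a la Karger, which governs near-minimum \emph{edge} cuts, not low-conductance cuts under \emph{vertex} sampling) is supplied to shrink the union.

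The paper avoids the union bound entirely by going spectral: its Lemma~\ref{lem:sample_expander} is proved in Section~\ref{sec:expander_sample} by passing to $\bM=\bI-\tfrac12\widetilde{\bL}$, bounding high moments of $\tr(\oM_{\smp}^{t})$ for $t\approx\varepsilon^{-2}\delta^{-1}\log n$ via a careful combinatorial classification of closed walks (Lemma~\ref{lem_moment_trace}), concluding that the second eigenvalue of the sampled walk matrix is at most $1-\Omega(\varepsilon^2\delta^2)$, and then converting back to conductance with Cheeger's inequality (Proposition~\ref{lem:conduct_spgap}); this trace-moment computation is where the constants $\varepsilon^{-8}\log^{32}n$ and the loss $\varphi=\Omega(\varepsilon^2\delta^2)$ actually originate. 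To complete your proof you would need either this spectral argument or a genuinely new combinatorial mechanism for controlling all cuts simultaneously; as written, the central step is missing.
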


The following lemma gives upper bounds the effective resistance of every pair of vertices in a balanced path of expanders.

\begin{lemma}\label{lm:res-diam}
Let $H$ be a balanced path of $\varphi$-expanders $H_1,\ldots, H_d$ as per Definition~\ref{def:path}, and suppose that the minimum degree in $H$ is at least $\dmin$. Then for every pair of distinct vertices $u, v$ in $H$ one has 
$$
\eff{}{u, v}=O\left(\frac1{\varphi^2 \dmin}+\frac{d}{\varphi^2 \vol(U_1)}\right).
$$
\end{lemma}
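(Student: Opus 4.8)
The plan is to bound $\eff{}{u,v}$ by exhibiting an explicit unit electrical flow from $u$ to $v$ and applying Thomson's principle, which says that the energy of \emph{any} unit $u$–$v$ flow is an upper bound on $\eff{}{u,v}$. Concretely I would construct a flow in three stages: (1) a ``local'' routing that pushes the unit of current out of $u$ and spreads it across the expander $H_{i(u)}$ containing $u$, turning the point source at $u$ into a roughly demand-proportional source distribution on the vertices of $H_{i(u)}$; (2) a symmetric local collection of the current into $v$ inside its expander $H_{i(v)}$; and (3) a ``global'' routing that moves the spread-out current along the path of expanders from block $i(u)$ to block $i(v)$, using the constant-overlap structure (condition (3) of Definition~\ref{def:path}) to hand current off between consecutive expanders $H_i$ and $H_{i+1}$ through their shared vertex set $V_{i+1}$.

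For the local stages I would use the standard fact that in a $\varphi$-expander with minimum degree $\dmin$, one can route a unit of current from a single vertex to the ``degree-weighted uniform'' distribution (send $d_w/\vol$ units to each $w$) with energy $O(1/(\varphi^2 \dmin))$; this is the electrical-flow analogue of the fact that expanders have good vertex expansion / small mixing time, and it is where the $\tfrac{1}{\varphi^2 \dmin}$ term in the bound comes from. Since only one expander at each endpoint is involved, each of stages (1) and (2) contributes $O(1/(\varphi^2\dmin))$ to the energy. For stage (3), after the local spreading the current sitting in $H_i$ is degree-proportional and of total mass $1$; by the balancedness conditions (2) and (3) the overlap $V_{i+1}$ between $H_i$ and $H_{i+1}$ carries a constant fraction of $\vol(U_i)$, so I can route the unit of current within $H_i$ from its current degree-proportional distribution to the degree-proportional distribution \emph{supported on} $V_{i+1}$ at energy $O(1/(\varphi^2 \dmin))$ again (same expander-routing fact, using that the target set has volume $\Theta(\vol(U_i))$). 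Doing this handoff for each of the $O(d)$ consecutive expander pairs on the path between $i(u)$ and $i(v)$, and using condition (2) to say every $\vol(U_i)$ is within a constant factor of $\vol(U_1)$, the global routing contributes $O(d)\cdot O(1/(\varphi^2 \vol(U_1)))$ — wait, more carefully: each single handoff moves a unit of current among $\Theta(\vol(U_1))$ vertices of an expander, so its energy is $O(1/(\varphi^2\,\dmin))$ as well, but the parts of the flow that are already well-spread across $\Theta(\vol(U_i))$ vertices and only need to ``drift'' one block over have energy $O(1/(\varphi^2 \vol(U_i)))=O(1/(\varphi^2\vol(U_1)))$ per block; summing over the $\le d$ blocks gives the $\tfrac{d}{\varphi^2 \vol(U_1)}$ term. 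Adding the $O(1/(\varphi^2\dmin))$ from the two endpoint stages gives the claimed bound.

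The main obstacle, and the step I would be most careful about, is making the bookkeeping of the global stage (3) rigorous: I need a clean lemma of the form ``in a $\varphi$-expander on vertex set $U$ with a subset $W$ of volume $\Theta(\vol(U))$, one can route, at energy $O(1/(\varphi^2 \vol(U)))$, from the degree-proportional distribution on $U$ to the degree-proportional distribution on $W$,'' and then compose $O(d)$ such routings while tracking that the ``spread'' current never re-concentrates. The cleanest way to get the per-expander routing bound is via the eigenvalue characterization: the effective resistance between two degree-proportional demand vectors $\chi_1,\chi_2$ is $(\chi_1-\chi_2)^\top \bL^+ (\chi_1-\chi_2)$, which is at most $\lambda_2(\widetilde\bL)^{-1}$ times a normalized-norm term, and by Cheeger's inequality (Lemma~\ref{lem:conduct_spgap}) $\lambda_2(\widetilde\bL) = \Omega(\varphi^2)$; combined with $\dmin$ controlling the normalization this yields the $1/(\varphi^2 \cdot \mathrm{vol})$ scaling. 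A secondary subtlety is that consecutive expanders overlap rather than being edge-disjoint, so when I sum energies across the $d$ expanders I should first decompose my global flow as a sum of flows each supported on a single $H_i$ and only then apply Thomson/energy-superadditivity; since each $H_i$ is edge-disjoint from $H_{i+2}, H_{i+3},\dots$ but shares vertices (not edges) with $H_{i\pm1}$, the edge sets $E_i$ are actually pairwise disjoint, so the total energy is exactly the sum of the per-expander energies and no overcounting occurs — I would state this explicitly as the first observation of the proof.
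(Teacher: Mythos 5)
Your route is genuinely different from the paper's. The paper never constructs a flow: it works with the actual electrical potential $\phi$ of the unit $u$--$v$ flow, grows level sets $L_j=\{w:\phi_w\ge\theta_j\}$, and uses one cut-expansion statement (\Cref{cl:expansion-lb}: every cut $S$ with $\vol(S)\le\tfrac12\vol(V)$ has at least $\Omega(\varphi)\cdot\min\{\vol(S),\vol(U_1)\}$ crossing edges) to show the level sets grow geometrically until their volume reaches $\vol(U_1)$ (this gives the $1/(\varphi^2\dmin)$ term) and then by additive increments $\Omega(\varphi\,\vol(U_1))$ (this gives the $d/(\varphi^2\vol(U_1))$ term), finishing with a symmetric argument from $v$. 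Your plan --- an explicit three-stage flow bounded via Thomson's principle, with per-expander routing costs controlled through Cheeger (\Cref{lem:conduct_spgap}) --- is a legitimate alternative, and your two basic routing facts (point source to degree-proportional costs $O(1/(\varphi^2\dmin))$; degree-proportional on $U$ to degree-proportional on a subset of volume $\Theta(\vol(U))$ costs $O(1/(\varphi^2\vol(U)))$) are correct \emph{when source and target are measured in the same expander's degrees}. What the paper's argument buys is that it sidesteps all flow bookkeeping; what yours buys is modularity and an explicit certificate flow.

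Two concrete problems remain in your write-up. First, your closing observation is false: the edge sets $E_i$ are \emph{not} pairwise disjoint --- every edge with both endpoints in $V_{i+1}$ lies in both $H_i$ and $H_{i+1}$ (the paper itself only asserts that each edge belongs to at most two of the $H_i$'s, inside the proof of \Cref{cl:expansion-lb}). This is harmless but needs the correct repair: each edge carries only $O(1)$ of your constituent flows, so $\bigl(\sum_i f_i(e)\bigr)^2=O\bigl(\sum_i f_i(e)^2\bigr)$ and you lose a constant factor rather than getting exact additivity. Second, and more substantively, the handoff step has a degree-mismatch gap: the minimum-degree hypothesis concerns $H$, not the individual $H_i$, and a vertex $w\in V_{i+1}$ can have wildly different degrees in $H_i$ and $H_{i+1}$ (conductance alone forces no minimum degree). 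If you hand the current over on $V_{i+1}$ in proportion to $H_i$-degrees and then spread it inside $H_{i+1}$, the spectral bound involves $\sum_{w\in V_{i+1}}(d_w^{H_i})^2/\bigl(d_w^{H_{i+1}}\,\vol_{H_i}(V_{i+1})^2\bigr)$, which need not be $O(1/\vol(U_1))$; the key lemma you state (same-graph degrees on both sides) does not cover this. The fix is to take the intermediate distribution proportional to the \emph{internal} degree of $w$ within $V_{i+1}$ (its number of neighbors inside $V_{i+1}$): this is dominated coordinatewise by both $d_w^{H_i}$ and $d_w^{H_{i+1}}$, and its total volume is $2\left|E(V_{i+1},V_{i+1})\right|\ge\vol_{H_i}(U_i)/4$ by property (3) of \Cref{def:path}, so both halves of each handoff cost $O(1/(\varphi^2\vol(U_1)))$. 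A similar small care is needed at the endpoints: $u$ may have degree below $\dmin$ in one of its two incident expanders, but it retains degree at least $\dmin/2$ in one of them, and stage (1) should be run there. With these repairs your argument goes through and yields the claimed bound.
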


We will prove the above two lemmas in the following subsections.
Now, we first show that they imply~\Cref{lem:effective-large}.
\begin{proof}[Proof of~\Cref{lem:effective-large}]
	Since $\delta$ is a constant and $\varepsilon\leq 1$, by~\Cref{lem:high-prob-path}, the graph $G\cap H$ is a balanced path of $\varphi$-expanders with probability $1-n^{-\omega(1)}$ for $\varphi=\Omega(\varepsilon^2)$.
	Since $H$ is a nonempty graph with minimum degree $D$, it must have at least $\Omega(D^2)$ edges.
	Thus,~\Cref{lem:high-prob-path} implies that $G\cap H$ has $\Omega(D^2/d)$ edges with minimum degree $\Omega(D/d)$.
	Combining the bounds with~\Cref{lm:res-diam}, the effective resistance between $u$ and $v$ (for $(u,v)=e$) in $G\cap H$ is
	\[
		\eff{G\cap H}{u, v}=O\left(\varepsilon^{-4}\cdot \left(\frac{d}{D}+\frac{d^3}{D^2}\right)\right),
	\]
	with probability $1-n^{-\omega(1)}$.

	Finally, observe that $G\cap H$ is a subgraph of $G(\bS)+e$, we have $\eff{G(\bS)+e}{u,v}\leq \eff{G\cap H}{u, v}$, and that edge $e$ is in $G(\bS)+e$, we have $\eff{G(\bS)+e}{u,v}\leq 1$.
	It follows that
	\[
		\E_{G}\left[\eff{G(\bS)+e}{u,v}\right]\leq O\left(\varepsilon^{-4}\cdot \left(\frac{d}{D}+\frac{d^3}{D^2}\right)\right).
	\]
	This proves the lemma.
\end{proof}

\subsection{$G\cap H$ is a balanced path of expanders}\label{sec:high-prob-path}
We first prove~\Cref{lem:high-prob-path}.
The proof consists of two parts: every $U_i$ is a $\varphi$-expander (condition {\bf(1)}); the volumes of sets are concentrated (condition {\bf(2), (3)}, minimum degree, the number of edges in $G(\bS)$).
We state the two part in the following two lemmas respectively.
\begin{lemma}\label{lem:sample_expander}
	Let $H$ be an $\varepsilon$-expander on $n$ vertices with minimum degree $\dmin$.
	Fix $p\in(0,1)$, and $\delta>0$ such that $\dmin\cdot p\geq n^{\delta}$.
	Let $H_{\smp}$ be resulting graph after sampling each vertex in $H$ independently with probability $p$, then $H_{\smp}$ is an $\Omega(\varepsilon^2\delta^2)$-expander with probability at least $1-n^{-\omega(1)}$, as long as $n^{\delta}\geq \varepsilon^{-8}\cdot \log^{32} n$.
\end{lemma}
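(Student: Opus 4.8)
The plan is to show that the sampled graph $H_{\smp}$ has conductance $\Omega(\varepsilon^2\delta^2)$ by a union bound over all ``candidate'' cuts, but the naive union bound over all $2^n$ vertex subsets is too weak, so I would instead argue via a careful charging/net argument. First I would recall the standard fact that for an $\varepsilon$-expander $H$ with minimum degree $\dmin$, for any subset $S$ of vertices with $\vol_H(S) \le \vol_H(V)/2$, the number of crossing edges $|E_H(S, V\setminus S)| \ge \varepsilon \cdot \vol_H(S)$. After sampling each vertex with probability $p$, both the volume of $S \cap V_{\smp}$ and the number of surviving crossing edges concentrate around their expectations ($p\cdot \vol_H(S)$ and $p^2 \cdot |E_H(S,V\setminus S)|$ respectively) — but these concentration statements are only strong enough pointwise, not after a union bound over all cuts.

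The key step is therefore to reduce the number of cuts one must control. I would use the standard idea (as in Spielman–Teng / Krivelevich-style arguments) that it suffices to verify expansion for cuts $S$ in $H_{\smp}$ that are ``connected'' in an appropriate sense, or alternatively to exploit the minimum-degree hypothesis: since $\dmin \cdot p \ge n^\delta$, every surviving vertex has expected degree $\ge n^\delta$ in $H_{\smp}$, and by a Chernoff bound plus union bound over the $n$ vertices, every surviving vertex has degree $\Omega(n^\delta)$ with probability $1 - n^{-\omega(1)}$ (this uses $n^\delta \ge \varepsilon^{-8}\log^{32} n$, which makes the per-vertex failure probability $\exp(-\Omega(n^\delta)) = n^{-\omega(1)}$). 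Given this high min-degree in $H_{\smp}$, a ``small'' cut $S$ (small volume) must nevertheless contain many vertices, so the bad event that $S$ has too few crossing edges requires the sampling to behave badly on $\Omega(|S|)$ independent coordinates; the probability of this is $\exp(-\Omega(n^\delta \cdot |S|))$-ish, while the number of relevant subsets of a given size $k$ is at most $\binom{n}{k} \le n^k$. Since $n^\delta$ beats $\log n$ by a polynomial factor, $\exp(-\Omega(n^\delta k))$ dominates $n^k = \exp(k\log n)$, and the union bound closes. I would handle the two regimes separately: (i) cuts $S$ with $\vol_{H}(S)$ at most some threshold, controlled by the counting argument just described together with the min-degree bound; and (ii) cuts with large volume, which by the expander property of $H$ have $\Omega(\varepsilon)$ fraction of their volume crossing, and for which a single Chernoff bound on the $\Omega(n^\delta)$-many expected crossing edges, union-bounded over the (still exponentially many but now with a large enough gap) subsets, suffices. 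The exponent $\varepsilon^2\delta^2$ in the final conductance arises from two sources: one factor of $\varepsilon$ from the expansion of $H$ and an extra factor roughly $\varepsilon\delta^2$ lost when passing through the worst regime of the Chernoff/union-bound tradeoff (the $\delta$ factors come from how the threshold $n^\delta$ compares to $\log n$).

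The main obstacle will be step (i) — controlling small cuts — because there the concentration of the number of surviving crossing edges is weak (the expectation $p^2 |E_H(S, V\setminus S)|$ can be as small as $\Theta(n^\delta)$ in absolute terms), and one cannot afford a crude union bound over all $2^n$ subsets. The resolution is exactly the min-degree leverage described above: a cut that is ``light'' in volume but has at least one vertex on each side forces many independent random choices, and the $n^\delta \ge \varepsilon^{-8}\log^{32} n$ slack is precisely what makes $\exp(-\Omega(n^\delta))$ per-vertex failure probabilities survive a union bound over $\mathrm{poly}(n)^{k}$ subsets. A secondary technical point is to make sure the conductance bound is stated with respect to volumes \emph{in $H_{\smp}$}, not in $H$; since min-degrees and volumes both scale by roughly $p$ (up to constants, with high probability), translating between the two costs only constant factors.
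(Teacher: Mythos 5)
Your plan hinges on the quantitative claim that for a candidate cut $S$ of size $k$ the bad event has probability ``$\exp(-\Omega(n^{\delta}\,|S|))$-ish,'' so that it beats the $\binom{n}{k}\le n^{k}$ union bound. That estimate is not justified, and it is false in general, because under \emph{vertex} sampling the surviving crossing edges of a cut are not independent: they are grouped by their endpoints outside $S$, and a single outside vertex $v$ carries all of its $|E_H(v,S)|$ crossing edges in an all-or-nothing fashion, with $|E_H(v,S)|$ possibly as large as $\min(\deg_H(v),|S|)$. Conditioned on $S$ surviving, the crossing count is $\sum_{v\notin S}\mathbf{1}[v\in V_{\smp}]\,|E_H(v,S)|$, a weighted sum whose lower-tail exponent (Bernstein) is only of order $p\,|E_H(S,V\setminus S)|/\max_{v}|E_H(v,S)|$, which can be as small as $\Omega(\varepsilon\, p\,\dmin)=\Omega(\varepsilon n^{\delta})$ \emph{with no dependence on $|S|$}. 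Concretely, if $H$ contains a clique of size $\approx\dmin$ (perfectly compatible with $H$ being an $\varepsilon$-expander for $\varepsilon=\Theta(1/\log n)$, which is the regime the paper needs), and $S$ is a subset of that clique of size $k\approx p\dmin=n^{\delta}$, then the dominant part of the crossing count is $k\cdot\mathrm{Bin}(\dmin,p)$: the fluctuations of a single Binomial with mean $n^{\delta}$, shared by \emph{all} $\binom{\dmin}{k}$ such subsets simultaneously. The per-set failure probability is then $e^{-\Theta(n^{\delta})}$, not $e^{-\Theta(n^{\delta}k)}$, while the number of candidate sets of that size is $e^{\Theta(k\log n)}$; for $k\gg n^{\delta}/\log n$ the union bound does not close. (In such examples the lemma is still true --- the events for different subsets are heavily positively correlated, so almost all of them are fine together --- but a cut-by-cut union bound cannot see this, and your proposal supplies no mechanism, such as a net of canonical cuts or a per-vertex-count reduction, to exploit the correlation.) Your min-degree observation and the $H$-vs-$H_{\smp}$ volume translation are fine, but they do not repair this core counting step, and your accounting for the final $\varepsilon^{2}\delta^{2}$ exponent is likewise not backed by a concrete computation.

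For comparison, the paper sidesteps cut enumeration entirely and argues spectrally: by Cheeger's inequality it suffices to bound the second eigenvalue of $\bM_{\smp}=\frac12(\bI_S+\bD_{\smp}^{-1/2}\bA_{\smp}\bD_{\smp}^{-1/2})$, which is done by bounding high moments $\E\bigl[\tr(\oM_{\smp}^{t})^{q}\bigr]$ of the trace of a surrogate matrix $\oM_{\smp}$ (expected degrees in place of sampled degrees) via a walk-counting argument (\Cref{lem_moment_trace}), taking $t\approx\varepsilon^{-2}\delta^{-1}\log n$ and $q\approx n^{\delta/8}\log n$, then transferring to $\bM_{\smp}$ via Chernoff bounds on degrees and applying Cheeger again; the $\Omega(\varepsilon^{2}\delta^{2})$ loss comes from the eigenvalue root $(n^{-\delta/16})^{1/t}$ and the two uses of Cheeger, not from a Chernoff/union-bound tradeoff. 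The trace method absorbs exactly the vertex-level correlations that break the naive union bound, which is why some replacement idea would be needed to make a purely combinatorial cut-counting proof go through.
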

The proof of the lemma is deferred to~\Cref{sec:expander_sample}.
\begin{lemma}\label{lem:sample_volume}
	Let $H'$ be the random graph defined as in~\Cref{lem:high-prob-path}.
	The with probability $1-n^{-\omega(1)}$, we have for every $\left|i-j\right|\leq 1$, 
	\[
		\frac{49}{64}\cdot \frac{2m}{d^2}\leq \left|\left\{u\in V_i,v\in V_j: (u,v)\in E\right\}\right|\leq \frac{81}{64}\cdot \frac{2m}{d^2},
	\]
	and the minimum degree of $H'$ is at least $\frac{7}{8}\cdot \frac{\dmin}{d}$.
\end{lemma}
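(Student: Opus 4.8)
The statement is a pair of concentration bounds: first, for each pair $(i,j)$ with $|i-j|\le 1$, the number of edges of $H$ with one endpoint sampled into $V_i$ and the other into $V_j$ is within a constant factor of its expectation $\tfrac{2m}{d^2}$ (up to the factor of $2$ that comes from the $i\ne j$ case versus $i=j$); and second, every vertex retains at least a $\tfrac{7}{8}$ fraction of its expected degree $\tfrac{\dmin}{d}$ into the relevant blocks. Both are ``sum of $\{0,1\}$ random variables'' statements, but the summands are not independent: whether edge $e=(x,y)$ lands in $V_i\times V_j$ depends on the block assignments $b(x),b(y)$, and a single vertex's assignment affects all edges incident to it. So the plan is to control the dependency and then apply a concentration inequality that tolerates it.

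First I would fix $i$ and $j$ with $|i-j|\le 1$ and define $X_e=\mathbf{1}[\,\{b(x),b(y)\}$ is consistent with the pair $(i,j)\,]$ for each $e=(x,y)\in E$, with $N_{ij}=\sum_e X_e$. The expectation is immediate: $\Pr[b(x)=i,b(y)=j]=1/d^2$, so $\E[N_{ii}]=m/d^2$ and $\E[N_{ij}]=2m/d^2$ for $i\ne j$, which is where the $\tfrac{2m}{d^2}$ in the statement comes from. The key structural point is that $N_{ij}$ is a function of the independent random variables $(b(v))_{v\in V}$, and changing one $b(v)$ changes $N_{ij}$ by at most $\deg_H(v)$. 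This is exactly the setup for a bounded-differences / Azuma–McDiarmid inequality, but the per-coordinate Lipschitz constant $\deg_H(v)$ can be as large as $n$, which would give a useless bound. The fix is the standard one for expanders with a minimum-degree / bounded-max-degree guarantee: one should instead use a Bernstein-type (or Freedman/martingale) inequality where the relevant ``variance'' parameter is $\sum_v \deg_H(v)^2 \cdot (1/d)$-ish, or equivalently observe that $N_{ij}=\sum_e X_e$ with the $X_e$'s having limited dependence — each $X_e$ is independent of all $X_{e'}$ not sharing a vertex with $e$. I would most likely invoke a concentration inequality for sums with bounded dependency (e.g., via the read-$k$ family / Janson's inequality for sums of $\{0,1\}$ variables with a dependency graph of bounded degree, or a martingale exposure vertex-by-vertex with a Bernstein bound using that exposing $b(v)$ contributes variance $O(\deg_H(v)^2/d^2)$ and bounded increment $\deg_H(v)$). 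Summing, the total variance proxy is $O(\sum_v \deg_H(v)^2/d^2)$; to beat the mean $\Theta(m/d^2)$ by a constant factor with failure probability $n^{-\omega(1)}$, one needs the ratio (max degree)$^2$ over (total mean) to be small, i.e., $\Delta_H \le \poly(n) \cdot \sqrt{m}/\text{something}$ — and this is precisely where the hypotheses $\dmin/d = n^\delta$ and $n^\delta \ge \varepsilon^{-8}\log^{32}n$, together with the expander property (which forces $m \ge \Omega(\dmin \cdot n)$ and bounds how skewed degrees can be), buy the slack. The same computation, restricted to edges incident to a single fixed vertex $v$ and summed only over neighbors of $v$, gives the minimum-degree bound: $\deg_{H'}(v)$ into $V_{b(v)}\cup V_{b(v)\pm1}$ is a sum of $\deg_H(v)\ge \dmin$ independent-ish indicators each of probability $\approx 1/d$ (conditioning on $b(v)$), with mean $\ge \dmin/d = n^\delta$, so a Chernoff bound gives deviation below $\tfrac78$ of the mean with probability $e^{-\Omega(n^\delta)} = n^{-\omega(1)}$; a union bound over the $\le n$ vertices and over the $O(d)$ relevant block pairs preserves $n^{-\omega(1)}$.

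The main obstacle, and the step I would spend the most care on, is getting a concentration inequality whose error term is genuinely $n^{-\omega(1)}$ rather than merely $1/\poly(n)$, given that the natural bounded-differences approach has a bad Lipschitz constant. The resolution hinges on the interplay between the expander hypothesis and the min-degree hypothesis: an $\varepsilon$-expander with min-degree $\dmin$ on $n$ vertices cannot have a vertex of degree much larger than $O(m/n \cdot \text{poly}(1/\varepsilon))$ in the ``bulk'' that matters, and more importantly the variance-type quantity $\sum_v \deg_H(v)^2$ is $O(\Delta_H \cdot m)$, so the Bernstein exponent is of order $\frac{(m/d^2)^2}{\Delta_H m / d^2 + \Delta_H \cdot (m/d^2)} = \Omega\big(\tfrac{m}{\Delta_H d^2}\big)$, and one needs to show this is $\omega(\log n)$, which again reduces to the quantitative relation $n^\delta \ge \varepsilon^{-8}\log^{32}n$. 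I would therefore organize the proof as: (1) compute the expectations; (2) set up the vertex-exposure martingale; (3) bound the conditional variance and the increments in terms of degrees; (4) apply Freedman's inequality; (5) plug in the hypothesis to certify the exponent is $\omega(\log n)$; (6) union-bound over $O(d) \le O(n)$ block pairs and $n$ vertices. Step (3)–(5) is the crux; everything else is routine.
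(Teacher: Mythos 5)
Your expectation computation, your min-degree argument (Chernoff on each vertex's neighbors landing in a given block, union bound), and your identification of the dependency issue are all fine, but the crux step (3)--(5) of your plan does not go through as stated, and the fix you propose is based on a false structural claim. First, quantitatively: with worst-case increments $\deg_H(v)\le \Delta_H$ and variance proxy $\sum_v \deg_H(v)^2/d^2\le 2m\Delta_H/d^2$, the Freedman/Bernstein exponent you yourself derive is $\Theta\!\left(m/(\Delta_H d^2)\right)$, and this is \emph{not} $\omega(\log n)$ in general: the only assumption is $\dmin/d=n^{\delta}$, while $d$ itself may be polynomially large (in the application $d$ can be as large as $n^{2/3-\delta}$). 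For instance, with $H=K_n$ and $d=n^{0.8}$ one gets $m/(\Delta_H d^2)\approx n^{-0.6}$, yet the lemma is true there. Second, the rescue you invoke --- that an $\eps$-expander with min-degree $\dmin$ cannot have vertices of degree much larger than $O((m/n)\cdot\mathrm{poly}(1/\eps))$ --- is false: one can attach a vertex of degree $\Theta(n)$ to a regular expander without hurting conductance or min-degree, so $\Delta_H$ is not controlled by the average degree. A direct martingale treatment of $N_{ij}$ can probably be salvaged by showing the \emph{typical} increment is only $O(\Delta_H/d)$ (exposed neighbors in the two relevant blocks), but that preliminary step is itself a per-vertex concentration statement, at which point the detour through Freedman buys nothing. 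Also note that for $i=j$ the quantity in the statement must be read as ordered pairs (each internal edge counted twice), so its mean is $2m/d^2$ as well; with your unordered reading ($\E[N_{ii}]=m/d^2$) the claimed lower bound $\tfrac{49}{64}\cdot\tfrac{2m}{d^2}$ would exceed the mean.

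The paper's proof avoids all of this by never concentrating the pairwise edge count directly. It proves two single-coordinate statements by Bernstein: (a) for each block, $\sum_{u\in V_i}d_u\in[\tfrac78,\tfrac98]\cdot\tfrac{2m}{d}$, a sum of independent terms $d_u\mathbf{1}[u\in V_i]$ whose exponent is $\Omega(m/(nd))\ge\Omega(\dmin/d)=\Omega(n^{\delta})$ using only the crude bound $d_u\le n$; and (b) for each vertex $u$ and block $j$, $|E_H(u,V_j)|\in[\tfrac78,\tfrac98]\cdot\tfrac{d_u}{d}$, with exponent $\Omega(d_u/d)\ge\Omega(n^{\delta})$ (and independent of $u$'s own assignment, so conditioning on $u\in V_i$ is harmless). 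On the intersection of these events one simply writes $|\{u\in V_i,v\in V_j:(u,v)\in E\}|=\sum_{u\in V_i}|E_H(u,V_j)|$ and multiplies the two factors of $[\tfrac78,\tfrac98]$ to get $[\tfrac{49}{64},\tfrac{81}{64}]\cdot\tfrac{2m}{d^2}$. If you restructure your argument along these lines --- concentrate block volumes and per-vertex block-degrees separately, then combine deterministically --- the dependency machinery, the max-degree discussion, and Freedman's inequality all become unnecessary.
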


\begin{proof}
	Consider a part $V_i$.
	Since each vertex is in $V_i$ with probability $1/d$ independently, the degree (in $H$) of all vertices that belong to $V_i$ is $2m/d$.
	Moreover, by Bernstein's inequality, we have 
	\begin{align*}
		\Pr\left[\sum_{u\in V_i} d_u\geq \frac{9}{8}\cdot \frac{2m}{d}\right]&\leq \exp\left(-\Omega\left(\frac{(m/d)^2}{\sum_{u} d_u^2/d+n\cdot (m/d)}\right)\right) \\
		&\leq \exp\left(-\Omega\left(\frac{(m/d)^2}{\dmin\cdot \sum_{u} d_u/d+n\cdot (m/d)}\right)\right) \\
		&\leq \exp\left(-\Omega\left(m/nd\right)\right) \\
		&\leq \exp\left(-\Omega\left(\dmin/d\right)\right) \\
		&\leq n^{-\omega(1)},
	\end{align*}
	where $d_u$ is the degree of vertex $u$ in $H$.
	Similarly, we have
	\[
		\Pr\left[\sum_{u\in V_i} d_u\leq \frac{7}{8}\cdot \frac{2m}{d}\right]\leq n^{-\omega(1)}.
	\]

	Next, fix a vertex $u$, and consider the number of its neighbors in $H$ that belong to part $V_j$.
	The expected number of such neighbors is $d_u/d$.
	Again by Bernstein's inequality, we have
	\begin{align*}
		\Pr\left[\left|E_H(u, V_j)\right|\geq \frac{9}{8}\cdot \frac{d_u}{d}\right]&\leq \exp\left(-\Omega\left(\frac{(d_u/d)^2}{\sum_{v:(u,v)\in E}1/d+d_u/d}\right)\right) \\
		&=\exp\left(-\Omega(d_u/d)\right) \\
		&\leq \exp\left(-\Omega\left(\dmin/d\right)\right) \\
		&=n^{-\omega(1)},
	\end{align*}
	and
	\[
		\Pr\left[\left|E_H(u, V_j)\right|\leq \frac{7}{8}\cdot \frac{d_u}{d}\right]\leq n^{-\omega(1)}.
	\]
	Note that the above bound on the probability still holds even if we condition on $u\in V_j$.
	Hence, the degree of $u$ in $H'$ is at least $\frac{7}{8}\cdot \frac{\dmin}{d}$, proving the minimum degree bound.

	Next, by union bound, we have $\sum_{u\in V_i} d_u\in \left[\frac{7}{8}\cdot \frac{2m}{d},\frac{9}{8}\cdot \frac{2m}{d}\right]$ and $\left|E_H(u, V_j)\right|\in \left[\frac{7}{8}\cdot \frac{d_u}{d},\frac{9}{8}\cdot \frac{d_u}{d}\right]$ for all $u,i,j$ with probability $1-n^{-\omega(1)}$.
	When it happens, for any $\left|i-j\right|\leq 1$, we have
	\[
		\frac{49}{64}\cdot \frac{2m}{d^2}\leq \left|\left\{u\in V_i,v\in V_j: (u,v)\in E\right\}\right|\leq \frac{81}{64}\cdot \frac{2m}{d^2}.
	\]
	This proves the lemma.
\end{proof}

\begin{proof}[Proof of~\Cref{lem:high-prob-path}]
	For condition {\bf(1)}, since each $U_i$ (marginally) is formed by including each vertex of $H$ with probability $2/d$ (or $1/d$ if $i=1$), by~\Cref{lem:sample_expander}, $U_i$ is an $\varphi$-expander with probability $1-n^{-\omega(1)}$.

	The minimum degree bound follows from~\Cref{lem:sample_volume}.
	Moreover, it implies that with probability $1-n^{-\omega(1)}$, $\vol_{H'}(V_i)\in \left[3\cdot \frac{49}{64}\cdot \frac{2m}{d^2}, 3\cdot \frac{81}{64}\cdot \frac{2m}{d^2}\right]$ for $i\neq 1,d$, and $\vol_{H'}(V_i)\in \left[2\cdot \frac{49}{64}\cdot \frac{2m}{d^2}, 2\cdot \frac{81}{64}\cdot \frac{2m}{d^2}\right]$ for $i=1$ or $d$.
	Thus, condition {\bf(2)} and {\bf(3)} follow.
	Finally, the total number of edges in $H'$ is $\frac{1}{2}\sum_{i=1}^d\vol_{H'}(V_i)=\Theta(m/d)$.
	This proves the lemma.
\end{proof}

\subsection{Effective resistance upper bound for balanced path of expanders}\label{sec:res-diam}


To prove the effective resistance upper bound, we first prove the following claim on the expansion of cuts in $H$.
\begin{claim}\label{cl:expansion-lb}
For every $S\subseteq V$ such that $\vol(S)\leq \frac1{2}\vol(V)$ one has 
$$
|E(S, V\setminus S)|\geq \Omega(\varphi)\cdot \min\{\vol(S), \vol(U_1)\}
$$
\end{claim}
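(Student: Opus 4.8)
The plan is to bound $|E(S, V \setminus S)|$ from below by routing the cut through a single well-chosen expander $H_i$ of the balanced path, chosen so that its volume is comparable to $\vol(U_1)$ (by condition {\bf(2)}, all the $H_i$ have comparable volume). First I would handle the easy regime: if $\vol(S) \le \frac{1}{2}\vol(U_1)$ already inside some single $U_i$, or more generally if $S$ meets some $U_i$ in a ``balanced'' way, then the $\varphi$-expansion of $H_i$ directly gives $|E(S \cap U_i, U_i \setminus S)| \ge \varphi \cdot \min\{\vol_{H_i}(S \cap U_i), \vol_{H_i}(U_i \setminus S)\}$, and these edges are a subset of $E(S, V\setminus S)$. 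So the content is in the regime where $S$ is ``spread out'' across many of the $U_i$'s: for each $i$, either $S \cap U_i$ is tiny (small volume fraction of $U_i$) or nearly all of $U_i$.

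The key structural observation I would use is that the $U_i$'s form a path-like overlap pattern: consecutive $U_i, U_{i+1}$ share the block $V_{i+1}$, which by condition {\bf(3)} carries a constant fraction of each of their volumes. So if $S$ is ``almost all'' of $U_i$ but ``almost none'' of $U_{i+1}$, then inside $V_{i+1}$ — which has volume $\Omega(\vol(U_i))$ by {\bf(3)} — the set $S$ transitions from nearly-full to nearly-empty, and the expansion of $H_i$ (or $H_{i+1}$) restricted to the cut $(S \cap V_{i+1}, V_{i+1} \setminus S)$ within that expander again produces $\Omega(\varphi \cdot \vol(U_i)) = \Omega(\varphi \cdot \vol(U_1))$ cut edges. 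Thus a ``boundary'' index — one where $S$ switches from mostly-in to mostly-out — already gives the bound, and since $\vol(S) \le \frac12 \vol(V)$ while $S$ is nonempty (it has positive volume, else the claim is vacuous with $\min\{\vol(S),\vol(U_1)\} = 0$), such a boundary index must exist unless $S \cap U_i$ is tiny for \emph{every} $i$. In that last case every $S \cap U_i$ has $\vol_{H_i}(S\cap U_i) \le \frac12 \vol_{H_i}(U_i)$, so summing the $\varphi$-expansion of each $H_i$ over $i$ gives $\sum_i |E_{H_i}(S \cap U_i, U_i \setminus S)| \ge \varphi \sum_i \vol_{H_i}(S \cap U_i) \ge \varphi \cdot \Omega(\vol_H(S)) \ge \Omega(\varphi)\cdot \min\{\vol(S),\vol(U_1)\}$, using that each edge of $H$ lies in at most two of the $H_i$ and that $\vol_H(S) = \sum_i \text{(contribution)}$ up to a constant factor from the overlaps.

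The main obstacle I expect is bookkeeping the overlaps cleanly: an edge inside $V_{i+1}$ belongs to both $H_i$ and $H_{i+1}$, so volumes and cut-edge counts must be tracked with the right constant factors to avoid double-counting in the wrong direction, and condition {\bf(2)} (all $\vol_{H_i}(U_i)$ within a factor $3$) and condition {\bf(3)} (shared blocks carry a $\tfrac18$ fraction) have to be invoked in exactly the right places. The other delicate point is making the ``boundary index'' argument quantitative: I need a threshold $\tau$ (say $\tau = \tfrac{1}{10}$) such that ``$\vol_{H_i}(S\cap U_i) \le \tau \vol_{H_i}(U_i)$'' and ``$\ge (1-\tau)\vol_{H_i}(U_i)$'' are the only two cases that matter, and verify that in the intermediate case one directly gets $\Omega(\varphi \vol(U_i))$ cut edges, while if index $i$ is small-type and $i+1$ is large-type, the transition inside $V_{i+1}$ forces $\Omega(\varphi \vol(U_1))$ cut edges via {\bf(3)}. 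Everything else is a straightforward case split.
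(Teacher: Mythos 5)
Your plan is correct and follows essentially the same route as the paper's proof: a threshold on $\vol_{H_i}(S\cap U_i)/\vol_{H_i}(U_i)$, summing the expansion of every $H_i$ (accounting for edges lying in at most two $H_i$'s) when $S$ is below threshold everywhere to get $\Omega(\varphi)\vol(S)$, and otherwise locating an adjacent ``boundary'' pair where the overlap block and condition \textbf{(3)} force the below-threshold side to still contain an $\Omega(1)$ fraction of $S$'s volume, so a single expander yields $\Omega(\varphi)\vol(U_1)$. The paper simply merges your ``tiny'' and ``intermediate'' cases under one threshold of $7/8$, and rules out ``nearly-all everywhere'' via $\vol(S)\le\frac12\vol(V)$ exactly as you suggest.
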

\begin{proof}
We write $\vol_{H_i}(S)$ to denote the sum of degrees of vertices in $S$ in $H_i$, and $\vol(S)$ to denote the sum of degrees of vertices in $S$ in $H$.

First suppose that $\vol_{H_i}(S\cap U_i)\leq (7/8)\vol_{H_i}(U_i)$ for all $i\in [d]$. Since every $H_i$ is a $\varphi$-expander by assumption, and every edge of $H$ belongs to at most two of the $H_i$'s (by property {\bf (1)} in Definition~\ref{def:path}), we have
\begin{equation*}
\begin{split}
|E(S, V\setminus S)|&\geq \frac1{2}\sum_{i\in [d]} |E_i(S\cap U_i, U_i\setminus S)|\\
&\geq (\varphi/2)\cdot\sum_{i\in [d]} \min\{\vol_{H_i}(U_i\cap S), \vol_{H_i}(U_i\setminus S)\}\\
&\geq \Omega(\varphi/2)\cdot\sum_{i\in [d]} \vol_{H_i}(U_i\cap S)\\
&\geq\Omega(\varphi)\cdot \vol(S).
\end{split}
\end{equation*}

Now suppose that $\vol_{H_i}(S\cap U_i)> (7/8) \vol_{H_i}(U_i)$ for some $i\in [d]$.
Then there must exist some $i,j\in [d]$ such that $\left|i-j\right|=1$, $\vol_{H_i}(S\cap U_i)>  (7/8) \vol_{H_i}(U_i)$, and $\vol_{H_j}(S\cap U_j)\leq  (7/8) \vol_{H_j}(U_j)$.
This is because otherwise, we would have $\vol_{H_i}(S\cap U_i)> (7/8) \vol_{H_i}(U_i)$ for all $i\in [d]$, and
$$
\vol(V\setminus S)\leq \sum_{j\in [d]} \vol_{H_j}(U_j\setminus S)\leq (1/8) \sum_{j\in [d]} \vol_{H_j}(U_j)\leq (1/4)\vol(V),
$$
a contradiction with the assumption that $\vol(S)\leq (1/2)\vol(V)$.

Without loss of generality assume that $j=i+1$.
Then we have 
\begin{align*}
	\vol_{H_j}(S\cap U_j)&\geq 2\left|E(S\cap V_j,S\cap V_j)\right| \\
	&=\vol_{H_i}(S\cap U_i)-2\left|E(S\cap V_i,S\cap V_i)\right|-2\left|E(S\cap V_i,S\cap V_j)\right| \\
	&\geq \frac{7}{8}\vol_{H_i}(U_i)-2\left|E(V_i,V_i)\right|-2\left|E(V_i,V_j)\right| \\
	&=\frac{7}{8}\vol_{H_i}(U_i)-\left(\vol_{H_i}(U_i)-2\left|E(V_j,V_j)\right|\right) \\
	&=2\left|E(V_j,V_j)\right|-\frac{1}{8}\vol_{H_i}(U_i) \\
	&\geq \frac{1}{8}\vol_{H_i}(U_i),
\end{align*}
where the last inequality uses property $\bf(3)$ of balanced path of expanders. 

Then by property $\bf(2)$, we have $\frac{1}{24}\vol_{H_j}(U_j)\leq \vol_{H_j}(S\cap U_j)\leq \frac{7}{8}\vol_{H_j}(U_j)$, and by the expansion property in $H_j$, we have
\begin{align*}
&\quad\,\,|E(S, V\setminus S)| \\
&\geq |E_j(S\cap U_j,  U_j\setminus S)|\\
&\geq \varphi\cdot \min\{\vol_{H_j}(S), \vol_{H_j}(U_j\setminus S)\} \\
&\geq \varphi\cdot (1/24)\vol_{H_j}(U_j) \\
&\geq \Omega(\varphi) \vol(U_1).
\end{align*}
This proves the lemma.
\end{proof}

We are ready to prove~\Cref{lm:res-diam}.

\begin{proof}[Proof of~\Cref{lm:res-diam}]
Fix a pair of  distinct vertices $u, v\in V$. Let $f\in \R^E$ be the unit electrical flow from  $u$ to $v$, and let $\phi\in \R^V$ be the corresponding vector of potentials.
Recall that the flow on an edge $(a, b)$ satisfies $f_{ab}=\phi_a-\phi_b$, and that the effective resistance between $u$ and $v$ satisfies $\eff{}{u, v}=\phi_u-\phi_v$.

We define a sequence of thresholds $\theta_j, j\geq 0$ and define
$$
L_j=\{w\in V: \phi_w\geq \theta_j\}.
$$

Let $\theta_0=\phi_u$, so that $L_0$ contains $u$. Fix $j$ such that $v\notin L_j$. Note that since $u\in L_j, v\in V\setminus L_j$, we have that the total electrical flow across the cut $(L_j, V\setminus L_j)$ is one. Since for every edge $(a, b)$ the flow on $(a, b)$ is $\phi_a-\phi_b$, we get
\begin{equation}
\sum_{(a, b)\in E: a\in L_j, b\in V\setminus L_j} (\phi_a-\phi_b)=1.
\end{equation}
By the definition of $L_j$, we must have $\phi_a>\phi_b$ for all $(a, b)\in E$ and $a\in L_j,b\in V\setminus L_j$.
Thus, at most $\frac1{2}|E(L_j, V\setminus L_j)|$ such edges $(a, b)$ can have $\phi_a-\phi_b\geq \frac{2}{|E(L_j, V\setminus L_j)|}$.

This allows us to define 
$$
\theta_{j+1}=\theta_j-\frac{2}{|E(L_j, V\setminus L_j)|}.
$$
Then for all edges $(a, b)\in E(L_j, V\setminus L_j)$ such that $\phi_a-\phi_b<\frac{2}{|E(L_j, V\setminus L_j)|}$, $b$ must be in $L_{j+1}$.
Therefore,
$$
\vol(L_{j+1})\geq \vol(L_{j})+\frac1{2}|E(L_j, V\setminus L_j)|.
$$

We now lower bound $|E(L_j, V\setminus L_j)|$.  We have by Claim~\ref{cl:expansion-lb} that 
$$
|E(L_j, V\setminus L_j)|\geq \Omega(\varphi)\cdot \min\{\vol(L_j), \vol(U_1)\}, 
$$
and therefore
$$
\vol(L_{j+1})\geq \vol(L_{j})+\Omega(\varphi)\cdot \min\{\vol(L_{j}), \vol(U_1)\}.
$$

Let $j^*$ be the smallest such that $\vol(L_{j^*})>\vol(U_1)$. For all $j\leq j^*$ we have
\begin{equation*}
\begin{split}
\vol(L_j)&\geq \vol(L_{j-1})+\Omega(\varphi)\cdot \vol(L_{j-1})\\
&=(1+\Omega(\varphi)) \vol(L_{j-1})\\
&\geq (1+\Omega(\varphi))^j \vol(L_0)\\
&\geq (1+\Omega(\varphi))^j \dmin
\end{split}
\end{equation*}
and 
\begin{equation*}
\begin{split}
|E(L_{j}, V\setminus L_{j})|&\geq \Omega(\varphi) \vol(L_{j})\\
&\geq \Omega(\varphi) (1+\Omega(\varphi))^j \dmin.
\end{split}
\end{equation*}
This means in particular that 
\begin{equation*}
\begin{split}
\theta_{j^*}-\theta_0 &\leq \sum_{j<j^*} \frac{2}{\Omega(\varphi) \vol(L_{j})}\\
&\leq \sum_{j\geq 0} \frac{O(1)}{\varphi \dmin} (1+\Omega(\varphi))^{-j} \\
&=\frac{O(1)}{\varphi^2 \dmin}.
\end{split}
\end{equation*}

Let $J$ be the smallest such that $\vol(L_j)> \frac{1}{2}\vol(V)$.
We now consider $j^*<j\leq J$. Here we have 
$$
|E(L_{j-1}, V\setminus L_{j-1})|\geq \Omega(\varphi) \vol(U_1).
$$
Since 
$$
\vol(L_{j})\geq \vol(L_{j-1})+\frac1{2}|E(L_{j-1}, V\setminus L_{j-1})|\geq  \vol(L_{j})+\Omega(\varphi) \cdot \vol(U_1),
$$
we must have $J=O(\vol(V)/(\varphi\cdot \vol(U_1)))=O(d/\varphi)$ (recall that $d$ is the number of layers).  Recalling that 
$$
\theta_{j}=\theta_{j-1}+\frac{2}{|E(L_{j-1}, V\setminus L_{j-1})|},
$$
we now get that 
$$
\theta_J-\theta_{j^*}\leq O(d/\varphi)\cdot O\left(\frac1{\varphi\cdot \vol(U_1)}\right)=O\left(\frac{d}{\varphi^2 \vol(U_1)}\right). 
$$

Putting the above bounds together, we have
\[
	\theta_J-\theta_0\leq O\left(\frac1{\varphi^2 \dmin}+\frac{d}{\varphi^2 \vol(U_1)}\right).
\]
That is, by the definition of $J$, the volume of all vertices $w$ with $$\phi_u-\phi_w\leq c\cdot \left(\frac1{\varphi^2 \dmin}+\frac{d}{\varphi^2 \vol(U_1)}\right)$$ (for a sufficiently large constant $c$) is more than $\frac{1}{2}\vol(V)$.

By a symmetric argument from vertex $v$, we get that the volume of all vertices $w$ with $$\phi_w-\phi_v\leq c\cdot \left(\frac1{\varphi^2 \dmin}+\frac{d}{\varphi^2 \vol(U_1)}\right)$$ (for a sufficiently large constant $c$) is also more than $\frac{1}{2}\vol(V)$.
In particular, there exists one $w$ such that both $\phi_u-\phi_w$ and $\phi_w-\phi_v$ are at most $c\cdot \left(\frac1{\varphi^2 \dmin}+\frac{d}{\varphi^2 \vol(U_1)}\right)$, implying that
\[
	\phi_u-\phi_v\leq O\left(\frac1{\varphi^2 \dmin}+\frac{d}{\varphi^2 \vol(U_1)}\right).
\]

Hence,
$$
\eff{}{u, v}=O\left(\frac1{\varphi^2 \dmin}+\frac{d}{\varphi^2 \vol(U_1)}\right), 
$$
as required.
\end{proof}






\section{Vertex-Sample Expanders}\label{sec:expander_sample}

	In this section, we prove~\Cref{lem:sample_expander}.
	To prove that the subsampled graph $H_{\smp}$ is also an expander, instead of working with conductance, we will apply Cheeger's inequality (\Cref{lem:conduct_spgap}), and work with the spectral gap.
	
	More specifically, for expander $H$, we first focus on the matrix $\bM=\bI-\frac{1}{2}\widetilde{\mathbf{L}}$.
	From the spectral gap of $H$ and the properties of the normalized Laplacian matrix, we know that the largest eigenvalue of $\bM$ is $1$, and all other eigenvalues are nonnegative and bounded away from $1$.
	To prove that $H_{\smp}$ is also an expander, we will study the matrix $\bM_{\smp}$ defined similarly.
	The idea is to prove that for some large integer $t$, $\tr(\bM_{\smp}^t)$ is close to $1$ with high probability.
	This gives an estimate on the sum of the $t$-th power of eigenvalues of $\bM_{\smp}$.
	In particular, since its largest eigenvalue is still $1$, the $t$-th power of the second largest eigenvalue must be very small, implying a non-trival spectral gap.

	The key lemma in this argument is~\Cref{lem_moment_trace}, which proves an upper bound on the $q$-th moment of $\tr(\oM_{\smp}^t)$ ($\oM_{\smp}$ is an approximation of $\bM_{\smp}$ that is easier to work with).
	By applying the standard argument for obtaining concentration from moment bounds, we prove that $\tr(\oM_{\smp}^t)$ (and hence, $\tr(\bM_{\smp}^t)$) is small with very high probability.

	\begin{proof}[Proof of~\Cref{lem:sample_expander}]
		Since $H$ is an $\varepsilon$-expander, by~\Cref{lem:conduct_spgap}, the spectral gap of $H$ is at least $\varepsilon^2/2$.
		That is, let $\bA$ be the adjacency matrix of $H$ and $\bD$ be its degree matrix, the top eigenvalue of the matrix
		\[
			\bM=\bI-\frac{1}{2}\cdot \widetilde{\mathbf{L}}=\frac{1}{2}\left(\bI+\bD^{-1/2}\bA\bD^{-1/2}\right)
		\]
		is equal to $1$, and all other eigenvalues are between $0$ and $1-\varepsilon^2/4$.

		Let $S$ be the set of vertices that are present in $H_{\smp}$, and let $\bA_{\smp}$ be the $n\times n$ matrix obtained by zeroing all columns and rows of $\bA$ corresponding to $\overline{S}$.
		Equivalently, $\bA_{\smp}$ is the adjacency matrix of $H_{\smp}$ embedded in an $n\times n$ all-zero matrix.
		Let $\bD_{\smp}$ be its degree matrix.
		It suffices to upper bound the second largest eigenvalue of
		\[
			\bM_{\smp}=\frac{1}{2}\left(\bI_S+\bD_{\smp}^{-1/2}\bA_{\smp}\bD_{\smp}^{-1/2}\right),
		\]
		where $\bI_S$ is the $n\times n$ matrix with ones only in the diagonal entries corresponding to $S$ (note that the second largest eigenvalue is the same as the smaller matrix with only rows and columns corresponding to $S$).
		To this end, we will first focus on the matrix
		\[
			\oM_{\smp}=\frac{1}{2}\left(\bI_S+p^{-1}\cdot \bD^{-1/2}\bA_{\smp}\bD^{-1/2}\right),
		\]
		where we replaced the degree matrix $\bD_{\smp}$ by the expected degree $p\cdot \bD$, then show that it is close to $\bM_{\smp}$ with high probability.

		\bigskip

		To bound the second largest eigenvalue of $\oM_{\smp}$, we analyze the trace of $\oM_{\smp}^t$ for some integer $t$, and apply the fact that the trace is equal to the sum of eigenvalues.
		We have the following moment bound for the trace.


		\begin{lemma}\label{lem_moment_trace}
			For any $t\geq \frac{256\ln n}{\varepsilon^2\delta}$, $t^4\leq p\cdot \dmin$, $q\geq 1$ and $qt^2\leq p\cdot \dmin$, we have $$\E\left[{\tr(\oM_{\smp}^t)}^q\right]\leq \E\left[{\tr(\oM_{\smp}^t)}^{q-1}\right]\cdot \left(\tr(\bM^t)+\frac{(q+4)t^3}{p\cdot \dmin}\right).$$
		\end{lemma}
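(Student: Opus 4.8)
The plan is to expand $\tr(\oM_{\smp}^t)$ combinatorially as a sum over closed walks of length $t$ in the complete graph on $[n]$, weighted by the appropriate matrix entries, and then take the $q$-th moment by considering $q$ such walks simultaneously and analyzing the expectation over the vertex-sampling randomness. Concretely, writing $\oM_{\smp} = \tfrac12(\bI_S + p^{-1}\bD^{-1/2}\bA_{\smp}\bD^{-1/2})$, each diagonal entry contributes the indicator $\mathds{1}[v\in S]$ and each off-diagonal entry contributes $\tfrac12 p^{-1} (d_ud_v)^{-1/2}\bA_{uv}\mathds{1}[u\in S]\mathds{1}[v\in S]$. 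Thus $\tr(\oM_{\smp}^t) = \sum_{W} \mathrm{wt}(W)\prod_{v\in W}\mathds{1}[v\in S]$ where $W$ ranges over closed walks of length $t$ in $H$ (steps may be ``lazy'', i.e. stay put), and $\mathrm{wt}(W)$ is the product of the $\tfrac12 p^{-1}(d_ud_v)^{-1/2}\bA_{uv}$ factors along the non-lazy steps. Raising to the $q$-th power gives a sum over $q$-tuples of closed walks $(W_1,\dots,W_q)$, and in expectation the sampling indicators collapse to $p^{|V(W_1)\cup\cdots\cup V(W_q)|}$.

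The key step is to isolate one walk, say $W_q$, and compare $\E[\tr(\oM_{\smp}^t)^q]$ to $\E[\tr(\oM_{\smp}^t)^{q-1}]\cdot(\text{something})$. First I would handle the ``disjoint'' contribution: when $V(W_q)$ shares no vertex with $V(W_1)\cup\cdots\cup V(W_{q-1})$, the sampling factor factorizes as $p^{|V(W_q)|}$ times the factor for the other $q-1$ walks, and summing over all $W_q$ gives exactly $\E[\tr(\bM^t)]=\tr(\bM^t)$ (since $\E_S[\tr(\oM_{\smp}^t)] = \tr(\bM^t)$ by the choice of the $p^{-1}$ normalization — each vertex appears with probability $p$, cancelling the $p^{-1}$'s; note $\tr(\bM^t)$ is deterministic). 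This produces the main term. Then I would bound the ``overlapping'' contribution, where $W_q$ shares at least one vertex with the union of the earlier walks, by $\E[\tr(\oM_{\smp}^t)^{q-1}]\cdot\frac{(q+4)t^3}{p\dmin}$. For this I would fix the earlier walks, fix the set $U$ of vertices they touch (so $|U|\le (q-1)t$), and count closed walks $W_q$ of length $t$ that pass through at least one vertex of $U$: condition on the first time $W_q$ hits $U$, at vertex $u\in U$; the remaining freedom is a walk that starts and ends appropriately, and standard $\ell_2$/spectral bounds on $\bM$ (its top eigenvalue is $1$, the rest are $\le 1-\varepsilon^2/4$) together with the min-degree bound $d_v\ge\dmin$ let me bound the total weight of such constrained walks. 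The factor $p^{-1}$ per extra sampled vertex, combined with the at most one ``new'' sampled vertex forced to coincide with something in $U$, yields the $1/(p\dmin)$ saving; the $t^3$ comes from choices of the hitting time and the index along the earlier walks, and $(q+4)$ from $|U|\le (q-1)t$ plus slack for the lazy-step and endpoint bookkeeping. The hypotheses $t\ge 256\ln n/(\varepsilon^2\delta)$ (so that $\tr(\bM^t)$ is already close to $1$, which is where the mixing is used), $t^4\le p\dmin$, and $qt^2\le p\dmin$ are exactly what is needed to ensure the various walk-counting error terms stay within the claimed bound and that intermediate quantities like $p^{-|V(W_q)|}$ times walk counts do not blow up.

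The main obstacle will be the overlapping-walk count: bounding the total weighted number of length-$t$ closed walks in $H$ that are constrained to pass through a prescribed small vertex set $U$, uniformly in $U$, and showing the weight is at most $O(t^3/\dmin)$ relative to the unconstrained count. This requires carefully combining (i) the spectral decay of $\bM$ to control walks that wander far, (ii) the min-degree lower bound to control the local normalization factors $(d_ud_v)^{-1/2}$, and (iii) a union bound over the $O(t)$ positions where $W_q$ could first meet $U$ and the $O(qt)$ vertices of $U$ — all while keeping the dependence on $q$ linear rather than exponential. Getting the constant and the exponent of $t$ to land exactly at $(q+4)t^3$ is the delicate bookkeeping; everything else is a fairly mechanical moment-method expansion. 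I would set up the walk expansion first, dispose of the disjoint term to extract $\tr(\bM^t)$, and spend the bulk of the argument on the constrained walk count, invoking Cheeger (\Cref{lem:conduct_spgap}) and the min-degree hypothesis only inside that count.
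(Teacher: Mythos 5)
Your overall architecture matches the paper's: expand $\tr(\oM_{\smp}^t)^q$ as a sum over $q$-tuples of closed (possibly lazy) walks, peel off the last walk $W_q$, extract $\tr(\bM^t)$ from the walks disjoint from the earlier ones, and charge the overlapping walks a factor $\tfrac{1}{p\cdot\dmin}$ per forced coincidence. Your sketch of the overlapping case is also in the spirit of the paper's argument, which fixes for each non-lazy step whether its endpoint is new, a revisit, or lies in an earlier walk, and then sums the free steps against the degree normalization using the min-degree bound.

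There is, however, a genuine gap in your disjoint case. You assert that summing over all disjoint $W_q$ gives exactly $\E[\tr(\oM_{\smp}^t)]=\tr(\bM^t)$ ``since each vertex appears with probability $p$, cancelling the $p^{-1}$'s.'' That cancellation occurs only when the number of non-lazy steps of $W_q$ equals the number of distinct vertices it visits, i.e.\ when $W_q$ with self-loops removed is a \emph{simple} cycle. A walk that revisits a vertex (say, traverses an edge back and forth) has strictly more non-lazy steps than distinct vertices, so its term carries a factor $p^{-(\#\{\text{non-lazy steps}\}-|V(W_q)|)}>1$ and contributes \emph{more} than it contributes to $\tr(\bM^t)$; in particular $\E[\tr(\oM_{\smp}^t)]>\tr(\bM^t)$ in general, so your identity is false and these walks form a separate error term (the paper's Case~3). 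Controlling it is not free: one isolates the longest arc between consecutive revisited vertices, applies the spectral decay bound of \Cref{lem_power_u_v}, namely $(\bM^{k})_{u,v}\leq \sqrt{d_ud_v}\left(\tfrac{1}{D}+\tfrac{1}{\dmin}e^{-\varepsilon^2k/4}\right)$, and uses the hypothesis $t\geq \tfrac{256\ln n}{\varepsilon^2\delta}$ to kill the $p^{-w}$ blowup when the number $w$ of revisits is small relative to $t$. Relatedly, you attribute that hypothesis to ``$\tr(\bM^t)$ being close to $1$''; that estimate is used only in the outer proof of \Cref{lem:sample_expander}, whereas inside the present lemma the lower bound on $t$ exists precisely to rescue the non-simple disjoint walks you have omitted. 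Without this case the decomposition does not close; it is where the additive ``$4$'' in $\tfrac{(q+4)t^3}{p\cdot\dmin}$ comes from.
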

		


		Recall that in Lemma~\ref{lem_moment_trace} above $\delta>0$ is a parameter such that $\dmin\cdot p\geq n^\delta$. We defer the proof of Lemma~\ref{lem_moment_trace} to the end of this section.
		Now fix $t$ to be any integer between $\frac{256\ln n}{\varepsilon^2\delta}$ and $(p\cdot\dmin)^{1/4}$.
		Since $\tr(\bM^t)\leq 1+n\cdot (1-\varepsilon^2/2)^t\leq 1+n^{-32}$, we have
		\[
			\E\left[{\tr(\oM_{\smp}^t)}^q\right]\leq \left(1+n^{-32}+(q+4)\cdot n^{-\delta/4}\right)^{q}\leq e^{2q^2\cdot n^{-\delta/4}},
		\]
		for any $q\geq 5$.
		By setting $q=n^{\delta/8}\cdot \log n$, and by Markov's inequality, we have
		\[
			\Pr\left[\tr(\oM_{\smp}^t)^q>e^{4q^2\cdot n^{-\delta/4}}\right]<n^{-\omega(1)}.
		\]

		It implies that
		\[
			\Pr\left[\tr(\oM_{\smp}^t)\geq 1+8n^{-\delta/8}\cdot\log n\right]\leq n^{-\omega(1)}.
		\]

		Since $H$ has minimum degree $\dmin$, by Chernoff bound, all vertices $v$ in $H_{\smp}$ have degree at least
		\[
			d_{\smp,v}\geq (1-(p\cdot\dmin)^{-1/2}\log n)p\cdot d_v
		\]
		with probability at least $1-n^{-\omega(1)}$.
		This implies that for all edges $(u, v)$ such that $u,v\in S$,
		\begin{align*}
			(\bM_{\smp})_{u, v}&=\frac{1}{2}\cdot \frac{1}{d_{\smp,u}^{1/2}\cdot d_{\smp,v}^{1/2}} \\
			&\leq \frac{1}{2}\cdot \frac{1}{(1-(p\cdot \dmin)^{-1/2}\log n)p\cdot d_u^{1/2}\cdot d_v^{1/2}} \\
			&\leq (1+2(p\cdot\dmin)^{-1/2}\log n)(\oM_{\smp})_{u, v}.
		\end{align*}

		Hence,
		\begin{align*}
			\tr(\bM_{\smp}^t)&\leq (1+2(p\cdot\dmin)^{-1/2}\log n)^t\cdot \tr(\oM_{\smp}^t)\\
			&\leq (1+4t(p\cdot\dmin)^{-1/2}\log n)\cdot \tr(\oM_{\smp}^t),
		\end{align*}
		with probability at least $1-n^{-\omega(1)}$.


		Therefore, by union bound, $\tr(\bM_{\smp}^t)\leq 1+n^{-\delta/16}$ with probability at least $1-n^{-\omega(1)}$.
		Since the largest eigenvalue of $\bM_{\smp}$ is equal to $1$, it implies that the second largest eigenvalue of $\bM_{\smp}^t$ is at most $n^{-\delta/16}$ with probability at least $1-n^{-\omega(1)}$.
		In this case, the second largest eigenvalue of $\bM_{\smp}$ is at most
		\[
			\left(n^{-\delta/16}\right) ^{1/t}\leq 1-\Omega(\varepsilon^2\delta^2).
		\]
		This implies that the spectral gap of $H_{\smp}$ is at least $\Omega(\varepsilon^2\delta^2)$.
		Another application of~\Cref{lem:conduct_spgap} proves the lemma.
	\end{proof}

	To prove Lemma~\ref{lem_moment_trace}, we will use the following fact about expanders.

	\begin{lemma}\label{lem_power_u_v}
		For any $u,v\in[n]$, we have
		\[
			\left(\bM^k\right)_{u,v}\leq \sqrt{d_ud_v}\cdot \left(\frac{1}{D}+\left(\frac{1}{\dmin}-\frac{1}{D}\right)\cdot e^{-\varepsilon^2 k/4}\right),
		\]
		where $D=\sum_{u} d_u$.
	\end{lemma}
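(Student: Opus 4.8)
The plan is to work with the spectral decomposition of $\bM=\frac{1}{2}(\bI+\bD^{-1/2}\bA\bD^{-1/2})$ and split off the top eigenvector, whose coordinates are exactly proportional to $\sqrt{d_u}$. First I would recall, exactly as in the proof of~\Cref{lem:sample_expander}, that since $H$ is an $\varepsilon$-expander, Cheeger's inequality (\Cref{lem:conduct_spgap}) gives a spectral gap of at least $\varepsilon^2/2$ for $\widetilde{\bL}$. Because the eigenvalues of $\bM$ are precisely $1-\tfrac12\mu$ as $\mu$ ranges over the eigenvalues of $\widetilde{\bL}\in[0,2]$, the matrix $\bM$ is PSD; its top eigenvalue is exactly $1$ with unit eigenvector $\psi_1=\bD^{1/2}\bOne/\sqrt{D}$ (here $\bD^{-1/2}\bA\bD^{-1/2}\cdot\bD^{1/2}\bOne=\bD^{1/2}\bOne$ and $\|\bD^{1/2}\bOne\|_2^2=\bOne^{\top}\bD\bOne=D$), and every remaining eigenvalue satisfies $0\le\lambda_i\le 1-\varepsilon^2/4$.

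Writing $\bM=\sum_i\lambda_i\psi_i\psi_i^{\top}$ in an orthonormal eigenbasis, we get $\bM^k=\sum_i\lambda_i^k\psi_i\psi_i^{\top}$, hence $(\bM^k)_{u,v}=\tfrac{\sqrt{d_ud_v}}{D}+N_{u,v}$ where $N:=\sum_{i\ge 2}\lambda_i^k\psi_i\psi_i^{\top}$. Since every $\lambda_i^k\ge 0$, the matrix $N$ is PSD, so $N_{u,v}\le\sqrt{N_{u,u}N_{v,v}}$ (nonnegativity of the determinant of the $\{u,v\}$ principal $2\times 2$ submatrix). For the diagonal entries I would bound $\lambda_i^k\le(1-\varepsilon^2/4)^k\le e^{-\varepsilon^2k/4}$ for $i\ge 2$, and use $\sum_i(\psi_i)_u^2=1$ together with $(\psi_1)_u^2=d_u/D$ to get $N_{u,u}=\sum_{i\ge 2}\lambda_i^k(\psi_i)_u^2\le e^{-\varepsilon^2k/4}\,(1-d_u/D)$. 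This yields $N_{u,v}\le e^{-\varepsilon^2k/4}\sqrt{(1-d_u/D)(1-d_v/D)}$.

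It then remains a short calculation to pass from this form to the statement: one checks $\sqrt{(1-d_u/D)(1-d_v/D)}\le\sqrt{d_ud_v}\,(\tfrac{1}{\dmin}-\tfrac{1}{D})$, which after dividing both sides by $\sqrt{d_ud_v}/D$ becomes $\sqrt{\tfrac{D-d_u}{d_u}\cdot\tfrac{D-d_v}{d_v}}\le\tfrac{D-\dmin}{\dmin}$, and this follows because $x\mapsto\tfrac{D-x}{x}=\tfrac{D}{x}-1$ is nonincreasing and $d_u,d_v\ge\dmin$. Plugging back into $(\bM^k)_{u,v}=\tfrac{\sqrt{d_ud_v}}{D}+N_{u,v}$ gives the claimed bound. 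I do not expect a real obstacle here; the only points needing care are recording that \emph{all} eigenvalues of $\bM$ are nonnegative (so that $N$ is PSD and the Cauchy--Schwarz-type off-diagonal bound applies) and the elementary monotonicity step that converts the $(1-d_u/D)$ factors into the $\sqrt{d_ud_v}(\tfrac{1}{\dmin}-\tfrac{1}{D})$ shape stated in the lemma.
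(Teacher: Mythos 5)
Your proposal is correct and follows essentially the same route as the paper: both split $(\bM^k)_{u,v}$ into the contribution $\sqrt{d_ud_v}/D$ of the top eigenvector $\bD^{1/2}\bOne$ plus a remainder controlled by the spectral gap bound $(1-\varepsilon^2/4)^k$, and both finish with a Cauchy--Schwarz step and the monotonicity of $x\mapsto (D-x)/x$ using $d_u,d_v\geq \dmin$. The only cosmetic difference is that you apply Cauchy--Schwarz via the $2\times 2$ principal minor of the PSD remainder matrix $N$, whereas the paper applies it directly to the bilinear form $\bigl(\be_u-\tfrac{d_u^{1/2}}{D}\bD^{1/2}\bOne\bigr)^{\top}\bM^k\bigl(\be_v-\tfrac{d_v^{1/2}}{D}\bD^{1/2}\bOne\bigr)$; these are equivalent computations.
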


	\begin{proof}
		$\bM$ has top eigenvalue $1$ with eigenvector $\bD^{1/2}\bOne$, and all its other eigenvalues are in $[0, 1-\varepsilon^2/4]$.
		For any indicator vector $\be_v$ for $v\in [n]$ and $k\geq 0$, by the fact that $\left<\bD^{1/2}\bOne, \be_v-\frac{d_v^{1/2}}{D}\cdot \bD^{1/2}\bOne\right>=0$, we have
		\begin{align*}
			\left\|\bM^k \left(\be_v-\frac{d_v^{1/2}}{D}\cdot \bD^{1/2}\bOne\right)\right\|_2&\leq \left\|\be_v-\frac{d_v^{1/2}}{D}\cdot \bD^{1/2}\bOne\right\|_2\cdot (1-\varepsilon^2/4)^k \\
			&=\sqrt{1-\frac{d_v}{D}}\cdot (1-\varepsilon^2/4)^k.
		\end{align*}
		Therefore, for any $u,v\in[n]$, we have
		\begin{align*}
			\left(\bM^k\right)_{u,v}&=\be_u^{\top} \bM^k \be_v \\
			&=\frac{\sqrt{d_ud_v}}{D}+\left(\be_u-\frac{d_u^{1/2}}{D}\cdot \bD^{1/2}\bOne\right)^{\top}\bM^k\left(\be_v-\frac{d_v^{1/2}}{D}\cdot \bD^{1/2}\bOne\right) \\
			&\leq \frac{\sqrt{d_ud_v}}{D}+\left\|\be_u-\frac{d_u^{1/2}}{D}\cdot \bD^{1/2}\bOne\right\|_2\cdot \left\|\bM^k\left(\be_v-\frac{d_v^{1/2}}{D}\cdot \bD^{1/2}\bOne\right)\right\|_2 \\
			&\leq \frac{\sqrt{d_ud_v}}{D}+\sqrt{1-\frac{d_u}{D}}\cdot\sqrt{1-\frac{d_v}{D}}\cdot (1-\varepsilon^2/4)^k \\
			&\leq \sqrt{d_ud_v}\cdot \left(\frac{1}{D}+\left(\frac{1}{\dmin}-\frac{1}{D}\right)\cdot e^{-\varepsilon^2 k/4}\right).
		\end{align*}
	\end{proof}

	Now, we are ready to prove Lemma~\ref{lem_moment_trace}.

	\begin{proof}[Proof of Lemma~\ref{lem_moment_trace}]
		Observe that
		\begin{align*}
			\E\left[{\tr(\oM_{\smp}^t)}^q\right]=\sum_{\substack{u_{1,1},\ldots,u_{1,t},\\\ldots\\u_{q,1},\ldots,u_{q,t}\in[n]}}\E\left[\prod_{i=1}^q\prod_{j=1}^t\left(\oM_{\smp}\right)_{u_{i,j},u_{i,j+1}}\right],
		\end{align*}
		where $u_{i,t+1}=u_{i,1}$ for every $i=1,\ldots, q$.
		For simplicity of notation, let $\bU=(U_1,\ldots,U_q)$ denote $(u_{1,1},\ldots,u_{q,t})$, where $U_i=(u_{i,1},\ldots,u_{i,t})$.
		We view each $U_i$ as a cycle, and hence, $u_{i,t+j}=u_{i,j}$ for all $j$.

		If both $u_{i,j}$ and $u_{i,j+1}$ are sampled, we have
		\begin{itemize}
			\item if $u_{i,j}\neq u_{i,j+1}$ (a non-self-loop), $\left(\oM_{\smp}\right)_{u_{i,j},u_{i,j+1}}=p^{-1}\cdot \bM_{u_{i,j},u_{i,j+1}}$;
			\item if $u_{i,j}=u_{i,j+1}$ (a self-loop), $\left(\oM_{\smp}\right)_{u_{i,j},u_{i,j+1}}=\bM_{u_{i,j},u_{i,j+1}}$.
		\end{itemize}
		If either of them is not sampled, then $\left(\oM_{\smp}\right)_{u_{i,j},u_{i,j+1}}=0$.
		Thus, we have
		\begin{align*}
			&\, \sum_{U_1,\ldots,U_{q-1},U_q}\E\left[\prod_{i=1}^q\prod_{j=1}^t\left(\oM_{\smp}\right)_{u_{i,j},u_{i,j+1}}\right] \\
			&=\sum_{U_1,\ldots,U_{q-1}}\E\left[\prod_{i=1}^{q-1}\prod_{j=1}^t\left(\oM_{\smp}\right)_{u_{i,j},u_{i,j+1}}\cdot \sum_{u_{q,1},\ldots,u_{q,t}}\prod_{j=1}^t\left(\oM_{\smp}\right)_{u_{q,j},u_{q,j+1}}\right] \\
			&=\sum_{U_1,\ldots,U_{q-1}}\E\left[\prod_{i=1}^{q-1}\prod_{j=1}^t\left(\oM_{\smp}\right)_{u_{i,j},u_{i,j+1}}\right]\cdot  \sum_{u_{q,1},\ldots,u_{q,t}}\frac{p^{\left|U_q\setminus \left(U_1\cup\cdots\cup U_{q-1}\right)\right|}}{p^{\left|\left\{j: u_{q,j}\neq u_{q,j+1}\right\}\right|}}\cdot \prod_{j=1}^t\bM_{u_{q,j},u_{q,j+1}}.
		\end{align*}

		Since $\oM_{\smp}$ has nonnegative entries, it suffices to prove that
		\begin{equation}\label{eqn_one_cycle}
			\sum_{u_{q,1},\ldots,u_{q,t}}\frac{p^{\left|U_q\setminus \left(U_1\cup\cdots\cup U_{q-1}\right)\right|}}{p^{\left|\left\{j: u_{q,j}\neq u_{q,j+1}\right\}\right|}}\cdot \prod_{j=1}^t\bM_{u_{q,j},u_{q,j+1}}\leq \tr(\bM^t)+\frac{(q+4)t^3}{p\cdot \dmin},
		\end{equation}
		for any $U_1,\ldots,U_{q-1}$.

		First note that if $U_q$, excluding all self-loops, is not a cycle in $H$, then the corresponding product is zero.
		In the following, we will only consider cycles $U_q$.
		To bound upper bound the LHS of~\eqref{eqn_one_cycle}, we divide the set of all $U_q$ into three cases, and take the sums correspondingly.

		\paragraph{Case 1.} We first consider all $U_q$ such that
		\begin{itemize}
			\item $U_q\cap (U_1\cup\cdots\cup U_{q-1})=\emptyset$;
			\item all edges $(u_{q,j},u_{q,j+1})$ with $u_{q,j}\neq u_{q,j+1}$ have a different $u_{q,j+1}$. 
		\end{itemize}
		That is, $U_q$, removing all self-loops, is a simple cycle disjoint from $U_1,\ldots,U_{q-1}$.

		We observe that for such $U_q$,
		\[
			\left|\left\{j:u_{q,j}\neq u_{q,j+1}\right\}\right|\leq \left|U_q\right|=\left|U_q\setminus (U_1\cup\cdots\cup U_{q-1})\right|.
		\]
		Thus, if we take the sum only over all such $U_q$, the LHS of~\eqref{eqn_one_cycle} is at most
		\[
			\sum_{u_{q,1},\ldots,u_{q,t}}\prod_{j=1}^t \bM_{u_{q,j},u_{q,j+1}}=\tr(\bM^t).
		\]
		\paragraph{Case 2.} Next, we consider all $U_q$ such that
		\begin{itemize}
			\item $U_q\cap(U_1\cup\cdots\cup U_{q-1})\neq\emptyset$.
		\end{itemize}
		Consider any such $U_q$, let $j^*=j^*(U_q)$ be the smallest index in $[t]$ such that $u_{q,j^*-1}\neq u_{q,j^*}$ and $u_{q,j^*}\in U_1\cup\cdots\cup U_{q-1}$.
		For $j\in[t]$, 
		\begin{itemize}
			\item if $u_{q,j-1}\neq u_{q,j}$ and $u_{q,j}\notin U_1\cup\cdots\cup U_{q-1}$, let $\fir(j)$ be first index, in the list $(j^*,j^*+1,\ldots,t,1,\ldots,j^*-1)$, such that $u_{q,j}=u_{q,\fir(j)}$ (i.e., the first after $j^*$ in the cyclic order);
			\item if $u_{q,j-1}\neq u_{q,j}$ and $u_{q,j}\in U_1\cup\cdots\cup U_{q-1}$, let $\fir(j)={\uparrow}$ (indicating it is in the previous cycles);
			\item if $u_{q,j-1}=u_{q,j}$, let $\fir(j)=\bot$.
		\end{itemize}
		
		To bound the sum over all such $U_q$, we will group the terms according to $\fir(\cdot)$ and $j^*$.
		We first take the sum over all $U_q$ with the same $\fir(\cdot)$ and $j^*$, then sum over all possible $\fir(\cdot)$ and $j^*$.
		Now fix $\fir(\cdot)$ and $j^*$, we use the following bounds for each factor $\bM_{u_{q,j},u_{q,j+1}}$.
		\begin{itemize}
			\item if $\fir(j+1)=\bot$, $\bM_{u_{q,j},u_{q,j+1}}=\frac{1}{2}=\frac{1}{2}\cdot \frac{d_{u_{q,j}}^{1/2}}{d_{u_{q,j+1}}^{1/2}}$;
			\item if $\fir(j+1)=j+1$, $\bM_{u_{q,j},u_{q,j+1}}=\frac{1}{2d_{u_{q,j}}^{1/2}d_{u_{q,j+1}}^{1/2}}$;
			\item if $\fir(j+1)\neq j+1$ or $\bot$, $\bM_{u_{q,j},u_{q,j+1}}=\frac{1}{2 d_{u_{q,j}}^{1/2}d_{u_{q,j+1}}^{1/2}}\leq \frac{1}{2\dmin}\cdot \frac{d_{u_{q,j}}^{1/2}}{d_{u_{q,j+1}}^{1/2}}$.
		\end{itemize}
		We have
		\begin{align*}
			&\quad\sum_{u_{q,1},\ldots,u_{q,t}:\fir(\cdot),j^*}\frac{p^{\left|U_q\setminus \left(U_1\cup\cdots\cup U_{q-1}\right)\right|}}{p^{\left|\left\{j: u_{q,j}\neq u_{q,j+1}\right\}\right|}}\cdot \prod_{j=1}^t\bM_{u_{q,j},u_{q,j+1}} \\
			&\leq \frac{1}{2^t}\sum_{u_{q,1},\ldots,u_{q,t}:\fir(\cdot),j^*}\frac{p^{\left|U_q\setminus \left(U_1\cup\cdots\cup U_{q-1}\right)\right|}}{p^{\left|\left\{j: u_{q,j}\neq u_{q,j+1}\right\}\right|}\cdot {\dmin}^{\left|\left\{j:\fir(j)\neq j,\bot\right\}\right|}}\cdot \prod_{j:\fir(j+1)=j+1}\frac{1}{d_{u_{q,j}}} \\
			&=\frac{1}{2^t}\sum_{u_{q,1},\ldots,u_{q,t}:\fir(\cdot),j^*}\frac{1}{{(p\cdot \dmin)}^{\left|\left\{j:\fir(j)\neq j,\bot\right\}\right|}}\cdot \prod_{j:\fir(j+1)=j+1}\frac{1}{d_{u_{q,j}}}.
		\end{align*}
		Now we take the sum in the order of $u_{q,j^*-1},u_{q,j^*-2},\ldots,u_{q,1},u_{q,t},\ldots,u_{q,j^*}$.
		Observe that if $\fir(j+1)\neq j+1$, then the value $u_{q,j+1}$ is \emph{determined} given all $u_{q,<j}$ and $u_{q,\geq j^*}$ that we have not taken the sum yet, i.e., there is only one term in the sum. 
		Otherwise, if $\fir(j+1)=j+1$, then we can take the sum over all possible $u_{q,j+1}$ that is a \emph{neighbor} of $u_{q,j}$ (note that we must have $j+1\neq j^*$).
		Given $u_{q,j}$, there are $d_{u_{q,j}}$ possibilities for the sum, which cancels the $\frac{1}{d_{u_{q,j}}}$ factor.
		Thus, the sum over all $U_q$ given $\fir(\cdot),j^*$ is at most
		\[
			\frac{1}{2^t}\cdot \frac{1}{{(p\cdot \dmin)}^{\left|\left\{j:\fir(j)\neq j,\bot\right\}\right|}}.
		\]
		Finally, we take the sum over all possible $\fir(\cdot)$ and $j^*$, the LHS of~\eqref{eqn_one_cycle} over all $U_q$ in this case is at most
		\begin{align*}
			\sum_{j^*=1}^t \sum_{\fir(\cdot)} \frac{1}{2^t}\cdot \frac{1}{{(p\cdot \dmin)}^{\left|\left\{j:\fir(j)\neq j,\bot\right\}\right|}}&\leq t\cdot \sum_{l=1}^t \frac{1}{2^t}\cdot \frac{1}{(p\cdot \dmin)^l}\cdot 2^{t-l}\cdot \binom{t}{l}\cdot (qt)^l \\
			&\leq t\cdot \sum_{l\geq 1}\left(\frac{qt^2}{2p\cdot \dmin}\right)^{l} \\
			&\leq \frac{qt^3}{p\cdot \dmin}.
		\end{align*}

		\paragraph{Case 3.} Finally, we consider all $U_q$ such that
		\begin{itemize}
			\item $U_q\cap (U_1\cup\cdots\cup U_{q-1})=\emptyset$;
			\item some edges $(u_{q,j-1},u_{q,j})$ with $u_{q,j-1}\neq u_{q,j}$ have the same $u_{q,j}$.
		\end{itemize}
		Consider any such $U_q$, let $w$ be the number of edges $(u_{q,j-1},u_{q,j})$ with $u_{q,j-1}\neq u_{q,j}$ such that $u_{q,j}$ appears more than once among such edges.
		Then among all $w$ such indices, there must exist one adjacent pair that is at least $\lceil t/w
		\rceil$ apart (in the cyclic order).
		Let the lexicographically first such pair be $(u_{q,j^*-1},u_{q,j^*})$.
		Thus, we have $u_{q,j^*-1}\neq u_{q,j^*}$ and for all $j=j^*-\lceil t/w\rceil+1,\ldots,j^*-1$, either $u_{q,j-1}=u_{q,j}$ or $u_{q,j}$ only appears once among all non-self-loop edges.
		Similar to Case 2, we define $\fir(\cdot)$ as follows, \emph{but only for} $j\in [t]\setminus \{j^*-\lceil t/w\rceil+1,\ldots,j^*-1\}$,
		\begin{itemize}
			\item if $u_{q,j-1}\neq u_{q,j}$, let $\fir(j)$ be first index that is at least $j^*$ in the cyclic order such that $u_{q,j}=u_{q,\fir(j)}$;
			\item if $u_{q,j-1}=u_{q,j}$, let $\fir(j)=\bot$.
		\end{itemize}

		Similarly, we will first take the sum over all $U_q$ with the same $\fir(\cdot)$, $j^*$ and $w$, then take the sum over $\fir(\cdot), j^*, w$.
		We have
		\begin{align*}
			&\, \sum_{u_{q,1},\ldots,u_{q,t}:\fir(\cdot),j^*,w}\frac{p^{\left|U_q\setminus(U_1\cup\cdots\cup U_{q-1})\right|}}{p^{\left|\left\{j: u_{q,j}\neq u_{q,j+1}\right\}\right|}}\cdot \prod_{j=1}^t\bM_{u_{q,j},u_{q,j+1}} \\
			&=\sum_{u_{q,1},\ldots,u_{q,t}:\fir(\cdot),j^*,w}\frac{1}{p^{\left|\left\{j: \fir(j)\neq j,\bot\right\}\right|}}\cdot \prod_{j=1}^t\bM_{u_{q,j},u_{q,j+1}} \\
			&\leq \sum_{\substack{u_{q,j^*},\ldots,u_{q,t},\\ u_{q,1},\ldots,u_{q,j^*-\lceil t/w\rceil}:\\ \fir(\cdot),j^*,w}}\frac{1}{p^{\left|\left\{j: \fir(j)\neq j,\bot\right\}\right|}}\cdot \left(\prod_{j\in [t]\setminus \{j^*-\lceil t/w\rceil,\ldots,j^*-1\}}\bM_{u_{q,j},u_{q,j+1}}\right)\cdot \left(\bM^{\lceil t/w\rceil}\right)_{u_{q,j^*-\lceil t/w\rceil},u_{q,j^*}} \\
			&\leq \sum_{\substack{u_{q,j^*},\ldots,u_{q,t},\\ u_{q,1},\ldots,u_{q,j^*-\lceil t/w\rceil}:\\ \fir(\cdot),j^*,w}}\frac{1}{p^{\left|\left\{j: \fir(j)\neq j,\bot\right\}\right|}}\cdot \left(\prod_{j\in [t]\setminus \{j^*-\lceil t/w\rceil,\ldots,j^*-1\}}\bM_{u_{q,j},u_{q,j+1}}\right) \\
			&\qquad\qquad\cdot d_{u_{q,j^*-\lceil t/w\rceil}}^{1/2}d_{u_{q,j^*}}^{1/2}\cdot \left(\frac{1}{D}+\left(\frac{1}{\dmin}-\frac{1}{D}\right)\cdot e^{-\varepsilon^2 t/4w}\right),
		\end{align*}
		where the last inequality is by Lemma~\ref{lem_power_u_v}.

		Similar to Case 2, we use the following bounds for each factor $\bM_{u_{q,j},u_{q,j+1}}$.
		\begin{itemize}
			\item if $\fir(j+1)=\bot$, $\bM_{u_{q,j},u_{q,j+1}}=\frac{1}{2}=\frac{1}{2}\cdot \frac{d_{u_{q,j}}^{1/2}}{d_{u_{q,j+1}}^{1/2}}$;
			\item if $\fir(j+1)=j+1$, $\bM_{u_{q,j},u_{q,j+1}}=\frac{1}{2d_{u_{q,j}}^{1/2}d_{u_{q,j+1}}^{1/2}}$;
			\item if $\fir(j+1)\neq j+1$ or $\bot$, $\bM_{u_{q,j},u_{q,j+1}}=\frac{1}{2 d_{u_{q,j}}^{1/2}d_{u_{q,j+1}}^{1/2}}\leq \frac{1}{2\dmin}\cdot \frac{d_{u_{q,j}}^{1/2}}{d_{u_{q,j+1}}^{1/2}}$.
		\end{itemize}
		The sum is at most
		\begin{align*}
			&\, \frac{1}{2^{t-\lceil t/w\rceil}}\sum_{\substack{u_{q,j^*},\ldots,u_{q,t},\\ u_{q,1},\ldots,u_{q,j^*-\lceil t/w\rceil}:\\ \fir(\cdot),j^*,w}}\frac{1}{(p\cdot\dmin)^{\left|\left\{j: \fir(j)\neq j,\bot\right\}\right|}}\cdot \left(\prod_{j\in [t]\setminus \{j^*-\lceil t/w\rceil,\ldots,j^*-1\}:\fir(j+1)=j+1}\frac{1}{d_{u_{q,j}}}\right) \\
			&\qquad\qquad\cdot d_{u_{q,j^*}}\cdot \left(\frac{1}{D}+\left(\frac{1}{\dmin}-\frac{1}{D}\right)\cdot e^{-\varepsilon^2 t/4w}\right),
		\end{align*}
		which by taking the sum in the order of $u_{q,j^*-\lceil t/w\rceil},u_{q,j^*-\lceil t/w\rceil-1},\ldots,u_{q,1},u_{q,t},\ldots,u_{q,j^*}$, is at most
		\begin{align*}
			&\, \frac{1}{2^{t-\lceil t/w\rceil}}\sum_{u_{q,j^*}} \frac{1}{(p\cdot\dmin)^{\left|\left\{j: \fir(j)\neq j,\bot\right\}\right|}}\cdot d_{u_{q,j^*}}\cdot \left(\frac{1}{D}+\left(\frac{1}{\dmin}-\frac{1}{D}\right)\cdot e^{-\varepsilon^2 t/4w}\right) \\
			&=\frac{1}{2^{t-\lceil t/w\rceil}}\cdot\frac{1}{(p\cdot\dmin)^{\left|\left\{j: \fir(j)\neq j,\bot\right\}\right|}}\cdot \left(1+\left(\frac{D}{\dmin}-1\right)\cdot e^{-\varepsilon^2 t/4w}\right).
		\end{align*}

		Finally, observe that we must have $w/2\leq \left|\left\{j: \fir(j)\neq j,\bot\right\}\right|\leq w$.
		By summing over all $\fir(\cdot),j^*,w$, the LHS of~\eqref{eqn_one_cycle} over all $U_q$ in this case is at most
		\begin{align*}
			&\, \sum_{w=2}^t\sum_{j^*=1}^t\sum_{\fir(\cdot)}\frac{1}{2^{t-\lceil t/w\rceil}}\cdot\frac{1}{(p\cdot\dmin)^{\left|\left\{j: \fir(j)\neq j,\bot\right\}\right|}}\cdot \left(1+\left(\frac{D}{\dmin}-1\right)\cdot e^{-\varepsilon^2 t/4w}\right) \\
			&=\sum_{w=2}^t\sum_{j^*=1}^t\sum_{l=w/2}^{w}\frac{1}{2^{t-\lceil t/w\rceil}}\cdot\frac{1}{(p\cdot\dmin)^l}\cdot \left(1+\left(\frac{D}{\dmin}-1\right)\cdot e^{-\varepsilon^2 t/4w}\right)\cdot 2^{t-\lceil t/w\rceil-l}\cdot \binom{t}{l}\cdot t^l \\
			&\leq \sum_{w=2}^t t\cdot \sum_{l=w/2}^{w}\left(\frac{t^2}{2p\cdot\dmin}\right)^l\cdot \left(1+\left(\frac{D}{\dmin}-1\right)\cdot e^{-\varepsilon^2 t/4w}\right) \\
			&\leq 2t\cdot \left(\sum_{w=2}^{t/(16\varepsilon^{-2}\ln n)}+\sum_{w\geq t/(16\varepsilon^{-2}\ln n)}\right) \left(\frac{t^2}{2p\cdot\dmin}\right)^{w/2}\cdot \left(1+\left(\frac{D}{\dmin}-1\right)\cdot e^{-\varepsilon^2 t/4w}\right) \\
			&\leq \frac{2t^3}{p\cdot \dmin}\cdot \left(1+n^{-2}\right)+4t\cdot \left(\frac{t^2}{2p\cdot \dmin}\right)^{t/(32\varepsilon^{-2}\ln n)}\cdot D \\
			&\leq \frac{2t^3}{p\cdot \dmin}\cdot \left(1+n^{-2}\right)+4tn^2\cdot \left(\frac{t^2}{2p\cdot \dmin}\right)^{8/\delta} \\
			&\leq \frac{2t^3}{p\cdot \dmin}\cdot \left(1+n^{-2}\right)+4tn^{-2} \\
			&\leq \frac{4t^3}{p\cdot \dmin}.
		\end{align*}

		Summing up all three cases proves the lemma.
	\end{proof}

\section*{Acknowledgements}
Sepehr Assadi was supported in part by a
NSF CAREER Grant CCF-2047061, a Google Research gift, and a Fulcrum award from Rutgers Research Council. Michael Kapralov was supported by ERC Starting Grant 759471. Huacheng Yu was supported by a Simons Junior Faculty Award.


\begin{thebibliography}{KMM{\etalchar{+}}20}

\bibitem[ACK19]{AssadiCK19}
Sepehr Assadi, Yu~Chen, and Sanjeev Khanna.
\newblock Sublinear algorithms for ({\(\Delta\)} + 1) vertex coloring.
\newblock In Timothy~M. Chan, editor, {\em Proceedings of the Thirtieth Annual
  {ACM-SIAM} Symposium on Discrete Algorithms, {SODA} 2019, San Diego,
  California, USA, January 6-9, 2019}, pages 767--786. {SIAM}, 2019.

\bibitem[AGM12a]{AhnGM12a}
Kook~Jin Ahn, Sudipto Guha, and Andrew McGregor.
\newblock Analyzing graph structure via linear measurements.
\newblock In {\em Proceedings of the Twenty-Third Annual {ACM-SIAM} Symposium
  on Discrete Algorithms, {SODA} 2012, Kyoto, Japan, January 17-19, 2012},
  pages 459--467, 2012.

\bibitem[AGM12b]{AhnGM12b}
Kook~Jin Ahn, Sudipto Guha, and Andrew McGregor.
\newblock Graph sketches: sparsification, spanners, and subgraphs.
\newblock In {\em Proceedings of the 31st {ACM} {SIGMOD-SIGACT-SIGART}
  Symposium on Principles of Database Systems, {PODS} 2012, Scottsdale, AZ,
  USA, May 20-24, 2012}, pages 5--14, 2012.

\bibitem[AHLW16]{AiHLW16}
Yuqing Ai, Wei Hu, Yi~Li, and David~P. Woodruff.
\newblock New characterizations in turnstile streams with applications.
\newblock In Ran Raz, editor, {\em 31st Conference on Computational Complexity,
  {CCC} 2016, May 29 to June 1, 2016, Tokyo, Japan}, volume~50 of {\em LIPIcs},
  pages 20:1--20:22. Schloss Dagstuhl - Leibniz-Zentrum f{\"{u}}r Informatik,
  2016.

\bibitem[AKLY16]{AssadiKLY16}
Sepehr Assadi, Sanjeev Khanna, Yang Li, and Grigory Yaroslavtsev.
\newblock Maximum matchings in dynamic graph streams and the simultaneous
  communication model.
\newblock In {\em Proceedings of the Twenty-Seventh Annual {ACM-SIAM} Symposium
  on Discrete Algorithms, {SODA} 2016, Arlington, VA, USA, January 10-12,
  2016}, pages 1345--1364, 2016.

\bibitem[AKO20]{AssadiKO20}
Sepehr Assadi, Gillat Kol, and Rotem Oshman.
\newblock Lower bounds for distributed sketching of maximal matchings and
  maximal independent sets.
\newblock In Yuval Emek and Christian Cachin, editors, {\em {PODC} '20: {ACM}
  Symposium on Principles of Distributed Computing, Virtual Event, Italy,
  August 3-7, 2020}, pages 79--88. {ACM}, 2020.

\bibitem[AMS96]{AlonMS96}
Noga Alon, Yossi Matias, and Mario Szegedy.
\newblock The space complexity of approximating the frequency moments.
\newblock In Gary~L. Miller, editor, {\em Proceedings of the Twenty-Eighth
  Annual {ACM} Symposium on the Theory of Computing, Philadelphia,
  Pennsylvania, USA, May 22-24, 1996}, pages 20--29. {ACM}, 1996.

\bibitem[AS22]{AssadiS22}
Sepehr Assadi and Vihan Shah.
\newblock An asymptotically optimal algorithm for maximum matching in dynamic
  streams.
\newblock In Mark Braverman, editor, {\em 13th Innovations in Theoretical
  Computer Science Conference, {ITCS} 2022, January 31 - February 3, 2022,
  Berkeley, CA, {USA}}, volume 215 of {\em LIPIcs}, pages 9:1--9:23. Schloss
  Dagstuhl - Leibniz-Zentrum f{\"{u}}r Informatik, 2022.

\bibitem[BMN{\etalchar{+}}11]{BeckerMNRST11}
Florent Becker, Mart{\'{\i}}n Matamala, Nicolas Nisse, Ivan Rapaport, Karol
  Suchan, and Ioan Todinca.
\newblock Adding a referee to an interconnection network: What can(not) be
  computed in one round.
\newblock In {\em 25th {IEEE} International Symposium on Parallel and
  Distributed Processing, {IPDPS} 2011, Anchorage, Alaska, USA, 16-20 May, 2011
  - Conference Proceedings}, pages 508--514. {IEEE}, 2011.

\bibitem[BMRT14]{BeckerMRT14}
Florent Becker, Pedro Montealegre, Ivan Rapaport, and Ioan Todinca.
\newblock The simultaneous number-in-hand communication model for networks:
  Private coins, public coins and determinism.
\newblock In Magn{\'{u}}s~M. Halld{\'{o}}rsson, editor, {\em Structural
  Information and Communication Complexity - 21st International Colloquium,
  {SIROCCO} 2014, Takayama, Japan, July 23-25, 2014. Proceedings}, volume 8576
  of {\em Lecture Notes in Computer Science}, pages 83--95. Springer, 2014.

\bibitem[BS07]{BaswanaS07}
Surender Baswana and Sandeep Sen.
\newblock A simple and linear time randomized algorithm for computing sparse
  spanners in weighted graphs.
\newblock {\em Random Struct. Algorithms}, 30(4):532--563, 2007.

\bibitem[CCF02]{CharikarCF02}
Moses Charikar, Kevin~C. Chen, and Martin Farach{-}Colton.
\newblock Finding frequent items in data streams.
\newblock In Peter Widmayer, Francisco~Triguero Ruiz, Rafael~Morales Bueno,
  Matthew Hennessy, Stephan~J. Eidenbenz, and Ricardo Conejo, editors, {\em
  Automata, Languages and Programming, 29th International Colloquium, {ICALP}
  2002, Malaga, Spain, July 8-13, 2002, Proceedings}, volume 2380 of {\em
  Lecture Notes in Computer Science}, pages 693--703. Springer, 2002.

\bibitem[Chu96]{Chung96}
Fan~RK Chung.
\newblock Laplacians of graphs and cheeger’s inequalities.
\newblock {\em Combinatorics, Paul Erdos is Eighty}, 2(157-172):13--2, 1996.

\bibitem[CKL]{ChenKL}
Yu~Chen, Sanjeev Khanna, and Huan Li.
\newblock On weighted graph sparsification by linear sketching.
\newblock In {\em FOCS 22}.

\bibitem[CM04]{CormodeM04}
Graham Cormode and S.~Muthukrishnan.
\newblock An improved data stream summary: The count-min sketch and its
  applications.
\newblock In Martin Farach{-}Colton, editor, {\em {LATIN} 2004: Theoretical
  Informatics, 6th Latin American Symposium, Buenos Aires, Argentina, April
  5-8, 2004, Proceedings}, volume 2976 of {\em Lecture Notes in Computer
  Science}, pages 29--38. Springer, 2004.

\bibitem[CT06]{CoverT06}
Thomas~M. Cover and Joy~A. Thomas.
\newblock {\em Elements of information theory {(2.} ed.)}.
\newblock Wiley, 2006.

\bibitem[DK20]{DarkK20}
Jacques Dark and Christian Konrad.
\newblock Optimal lower bounds for matching and vertex cover in dynamic graph
  streams.
\newblock In Shubhangi Saraf, editor, {\em 35th Computational Complexity
  Conference, {CCC} 2020, July 28-31, 2020, Saarbr{\"{u}}cken, Germany (Virtual
  Conference)}, volume 169 of {\em LIPIcs}, pages 30:1--30:14. Schloss Dagstuhl
  - Leibniz-Zentrum f{\"{u}}r Informatik, 2020.

\bibitem[Don06]{Donoho06}
David~L Donoho.
\newblock Compressed sensing.
\newblock {\em IEEE Transactions on information theory}, 52(4):1289--1306,
  2006.

\bibitem[ET21]{ElkinT21}
Michael Elkin and Chhaya Trehan.
\newblock {\textdollar}(1+{\(\epsilon\)}){\textdollar}-approximate shortest
  paths in dynamic streams.
\newblock {\em CoRR}, abs/2107.13309, 2021.

\bibitem[FKM{\etalchar{+}}04]{FeigenbaumKMSZ04}
Joan Feigenbaum, Sampath Kannan, Andrew McGregor, Siddharth Suri, and Jian
  Zhang.
\newblock On graph problems in a semi-streaming model.
\newblock In Josep D{\'{\i}}az, Juhani Karhum{\"{a}}ki, Arto Lepist{\"{o}}, and
  Donald Sannella, editors, {\em Automata, Languages and Programming: 31st
  International Colloquium, {ICALP} 2004, Turku, Finland, July 12-16, 2004.
  Proceedings}, volume 3142 of {\em Lecture Notes in Computer Science}, pages
  531--543. Springer, 2004.

\bibitem[FKM{\etalchar{+}}08]{FeigenbaumKMSZ08}
Joan Feigenbaum, Sampath Kannan, Andrew McGregor, Siddharth Suri, and Jian
  Zhang.
\newblock Graph distances in the data-stream model.
\newblock {\em {SIAM} J. Comput.}, 38(5):1709--1727, 2008.

\bibitem[FKN21]{FiltserKN21}
Arnold Filtser, Michael Kapralov, and Navid Nouri.
\newblock Graph spanners by sketching in dynamic streams and the simultaneous
  communication model.
\newblock In D{\'{a}}niel Marx, editor, {\em Proceedings of the 2021 {ACM-SIAM}
  Symposium on Discrete Algorithms, {SODA} 2021, Virtual Conference, January 10
  - 13, 2021}, pages 1894--1913. {SIAM}, 2021.

\bibitem[FWY20]{FernandezW020}
Manuel Fernandez, David~P. Woodruff, and Taisuke Yasuda.
\newblock Graph spanners in the message-passing model.
\newblock In Thomas Vidick, editor, {\em 11th Innovations in Theoretical
  Computer Science Conference, {ITCS} 2020, January 12-14, 2020, Seattle,
  Washington, {USA}}, volume 151 of {\em LIPIcs}, pages 77:1--77:18. Schloss
  Dagstuhl - Leibniz-Zentrum f{\"{u}}r Informatik, 2020.

\bibitem[GMT15]{GuhaMT15}
Sudipto Guha, Andrew McGregor, and David Tench.
\newblock Vertex and hyperedge connectivity in dynamic graph streams.
\newblock In Tova Milo and Diego Calvanese, editors, {\em Proceedings of the
  34th {ACM} Symposium on Principles of Database Systems, {PODS} 2015,
  Melbourne, Victoria, Australia, May 31 - June 4, 2015}, pages 241--247.
  {ACM}, 2015.

\bibitem[GP16]{GhaffariP16}
Mohsen Ghaffari and Merav Parter.
\newblock {MST} in log-star rounds of congested clique.
\newblock In George Giakkoupis, editor, {\em Proceedings of the 2016 {ACM}
  Symposium on Principles of Distributed Computing, {PODC} 2016, Chicago, IL,
  USA, July 25-28, 2016}, pages 19--28. {ACM}, 2016.

\bibitem[HPP{\etalchar{+}}15]{HegemanPPSS15}
James~W. Hegeman, Gopal Pandurangan, Sriram~V. Pemmaraju, Vivek~B. Sardeshmukh,
  and Michele Scquizzato.
\newblock Toward optimal bounds in the congested clique: Graph connectivity and
  {MST}.
\newblock In Chryssis Georgiou and Paul~G. Spirakis, editors, {\em Proceedings
  of the 2015 {ACM} Symposium on Principles of Distributed Computing, {PODC}
  2015, Donostia-San Sebasti{\'{a}}n, Spain, July 21 - 23, 2015}, pages
  91--100. {ACM}, 2015.

\bibitem[JL84]{JohnsonL84}
William~B Johnson and Joram Lindenstrauss.
\newblock Extensions of lipschitz mappings into a hilbert space 26.
\newblock {\em Contemporary mathematics}, 26:28, 1984.

\bibitem[JN18]{JurdzinskiN18}
Tomasz Jurdzinski and Krzysztof Nowicki.
\newblock {MST} in \emph{O}(1) rounds of congested clique.
\newblock In Artur Czumaj, editor, {\em Proceedings of the Twenty-Ninth Annual
  {ACM-SIAM} Symposium on Discrete Algorithms, {SODA} 2018, New Orleans, LA,
  USA, January 7-10, 2018}, pages 2620--2632. {SIAM}, 2018.

\bibitem[JST11]{JowhariST11}
Hossein Jowhari, Mert Saglam, and G{\'{a}}bor Tardos.
\newblock Tight bounds for lp samplers, finding duplicates in streams, and
  related problems.
\newblock In Maurizio Lenzerini and Thomas Schwentick, editors, {\em
  Proceedings of the 30th {ACM} {SIGMOD-SIGACT-SIGART} Symposium on Principles
  of Database Systems, {PODS} 2011, June 12-16, 2011, Athens, Greece}, pages
  49--58. {ACM}, 2011.

\bibitem[KLM{\etalchar{+}}14]{KapralovLMMS14}
Michael Kapralov, Yin~Tat Lee, Cameron Musco, Christopher Musco, and Aaron
  Sidford.
\newblock Single pass spectral sparsification in dynamic streams.
\newblock In {\em 55th {IEEE} Annual Symposium on Foundations of Computer
  Science, {FOCS} 2014, Philadelphia, PA, USA, October 18-21, 2014}, pages
  561--570. {IEEE} Computer Society, 2014.

\bibitem[KMM{\etalchar{+}}20]{KapralovMMMNST20}
Michael Kapralov, Aida Mousavifar, Cameron Musco, Christopher Musco, Navid
  Nouri, Aaron Sidford, and Jakab Tardos.
\newblock Fast and space efficient spectral sparsification in dynamic streams.
\newblock In Shuchi Chawla, editor, {\em Proceedings of the 2020 {ACM-SIAM}
  Symposium on Discrete Algorithms, {SODA} 2020, Salt Lake City, UT, USA,
  January 5-8, 2020}, pages 1814--1833. {SIAM}, 2020.

\bibitem[KNP{\etalchar{+}}17]{KapralovNPWWY17}
Michael Kapralov, Jelani Nelson, Jakub Pachocki, Zhengyu Wang, David~P.
  Woodruff, and Mobin Yahyazadeh.
\newblock Optimal lower bounds for universal relation, and for samplers and
  finding duplicates in streams.
\newblock In Chris Umans, editor, {\em 58th {IEEE} Annual Symposium on
  Foundations of Computer Science, {FOCS} 2017, Berkeley, CA, USA, October
  15-17, 2017}, pages 475--486. {IEEE} Computer Society, 2017.

\bibitem[KNST19]{KNST19}
Michael Kapralov, Navid Nouri, Aaron Sidford, and Jakab Tardos.
\newblock Dynamic streaming spectral sparsification in nearly linear time and
  space.
\newblock {\em CoRR}, abs/1903.12150, 2019.

\bibitem[KP20]{KallaugherP20}
John Kallaugher and Eric Price.
\newblock Separations and equivalences between turnstile streaming and linear
  sketching.
\newblock In Konstantin Makarychev, Yury Makarychev, Madhur Tulsiani, Gautam
  Kamath, and Julia Chuzhoy, editors, {\em Proccedings of the 52nd Annual {ACM}
  {SIGACT} Symposium on Theory of Computing, {STOC} 2020, Chicago, IL, USA,
  June 22-26, 2020}, pages 1223--1236. {ACM}, 2020.

\bibitem[KVV00]{VempalaV00}
Ravi Kannan, Santosh~S. Vempala, and Adrian Vetta.
\newblock On clusterings - good, bad and spectral.
\newblock In {\em 41st Annual Symposium on Foundations of Computer Science,
  {FOCS} 2000, 12-14 November 2000, Redondo Beach, California, {USA}}, pages
  367--377. {IEEE} Computer Society, 2000.

\bibitem[KW14]{KapralovW14}
Michael Kapralov and David~P. Woodruff.
\newblock Spanners and sparsifiers in dynamic streams.
\newblock In Magn{\'{u}}s~M. Halld{\'{o}}rsson and Shlomi Dolev, editors, {\em
  {ACM} Symposium on Principles of Distributed Computing, {PODC} '14, Paris,
  France, July 15-18, 2014}, pages 272--281. {ACM}, 2014.

\bibitem[LNW14]{LiNW14}
Yi~Li, Huy~L. Nguyen, and David~P. Woodruff.
\newblock Turnstile streaming algorithms might as well be linear sketches.
\newblock In David~B. Shmoys, editor, {\em Symposium on Theory of Computing,
  {STOC} 2014, New York, NY, USA, May 31 - June 03, 2014}, pages 174--183.
  {ACM}, 2014.

\bibitem[MTVV15]{McGregorTVV15}
Andrew McGregor, David Tench, Sofya Vorotnikova, and Hoa~T. Vu.
\newblock Densest subgraph in dynamic graph streams.
\newblock In Giuseppe~F. Italiano, Giovanni Pighizzini, and Donald Sannella,
  editors, {\em Mathematical Foundations of Computer Science 2015 - 40th
  International Symposium, {MFCS} 2015, Milan, Italy, August 24-28, 2015,
  Proceedings, Part {II}}, volume 9235 of {\em Lecture Notes in Computer
  Science}, pages 472--482. Springer, 2015.

\bibitem[NY19]{NelsonY19}
Jelani Nelson and Huacheng Yu.
\newblock Optimal lower bounds for distributed and streaming spanning forest
  computation.
\newblock In Timothy~M. Chan, editor, {\em Proceedings of the Thirtieth Annual
  {ACM-SIAM} Symposium on Discrete Algorithms, {SODA} 2019, San Diego,
  California, USA, January 6-9, 2019}, pages 1844--1860. {SIAM}, 2019.

\bibitem[PRS18]{PanduranganRS18}
Gopal Pandurangan, Peter Robinson, and Michele Scquizzato.
\newblock Fast distributed algorithms for connectivity and {MST} in large
  graphs.
\newblock {\em {ACM} Trans. Parallel Comput.}, 5(1):4:1--4:22, 2018.

\bibitem[Sar06]{Sarlos06}
Tam{\'{a}}s Sarl{\'{o}}s.
\newblock Improved approximation algorithms for large matrices via random
  projections.
\newblock In {\em 47th Annual {IEEE} Symposium on Foundations of Computer
  Science {(FOCS} 2006), 21-24 October 2006, Berkeley, California, USA,
  Proceedings}, pages 143--152. {IEEE} Computer Society, 2006.

\bibitem[SW19]{SaranurakW19}
Thatchaphol Saranurak and Di~Wang.
\newblock Expander decomposition and pruning: Faster, stronger, and simpler.
\newblock In Timothy~M. Chan, editor, {\em Proceedings of the Thirtieth Annual
  {ACM-SIAM} Symposium on Discrete Algorithms, {SODA} 2019, San Diego,
  California, USA, January 6-9, 2019}, pages 2616--2635. {SIAM}, 2019.

\bibitem[Woo14]{Woodruff14}
David~P. Woodruff.
\newblock Sketching as a tool for numerical linear algebra.
\newblock {\em Found. Trends Theor. Comput. Sci.}, 10(1-2):1--157, 2014.

\bibitem[Yu21]{Yu21}
Huacheng Yu.
\newblock Tight distributed sketching lower bound for connectivity.
\newblock In D{\'{a}}niel Marx, editor, {\em Proceedings of the 2021 {ACM-SIAM}
  Symposium on Discrete Algorithms, {SODA} 2021, Virtual Conference, January 10
  - 13, 2021}, pages 1856--1873. {SIAM}, 2021.

\end{thebibliography}

\newcommand{\etalchar}[1]{$^{#1}$}

\clearpage

\part*{Appendix}
\appendix

\section{Implementing Prior Work Via Random Gaussian Sketches}\label{app:implement}

We now outline implementations of existing works on graph sketching in our model.  

\subsection{$\ell_0$-samplers and connectivity sketches}\label{sec:l0-samplers}

Recall that in the $\ell_0$-sampling problem one needs to design a sketching matrix $A$ such that for every $x\in \R^n$ one can recover a uniformly random element of $x$ from $Ax$ (to within total variation distance $\delta$) or output FAIL (with failure probability bounded by $\delta$)\footnote{Note that these two parameters appear differently in the space complexity of $\ell_0$-sampling, and are therefore treated separately in works that obtain optimal space bounds for $\ell_0$-samplers~\cite{AhnGM12a}. We set both parameters to $\delta$ for simplicity.}. We outline a construction of an $\ell_0$-sampler in our model, i.e. where every rows of the sketch $A$ is of the form $g\cdot S$, where $S$ is an arbitrary matrix with zeros and ones on the diagonal and zeros on off-diagonal entries ($S$ is known to the decoder) and $g$ is a vector with i.i.d. unit variance Gaussian entries ($g$ is not known to the decoder).  Note that our $\ell_0$-sampler only needs to work for vectors $x$ whose entries are in $\{-1, 0, +1\}$, as this is the case in all applications of graph sketching.

We first recall the construction of a basic $\ell_0$ sampler (see~\cite{JowhariST11} for a space-optimal construction). For integer $j$ between $0$ and $\lceil \log_2 n\rceil$ let $x^j\in \R^n$ denote the restriction of $x^j$ to elements of a subset of the universe $[n]$ that includes every element independently with probability $2^{-j}$.  There exists $j^*$ such that with constant probability $x^j$ contains exactly one nonzero. To determine the value of $j^*$ or conclude that such an index does not exist, it suffices to estimate the $\ell_2^2$ norm of $x$ to within a $1\pm 1/3$ factor, for example (since nonzero entries of $x$ equal $1$ in absolute value). The latter can be achieved (with at most inverse polynomial failure probability) by averaging squared dot products of $O(\log n)$ independent Gaussian vectors with $x^j$, which is allowed by our model. Note that here the decoder indeed does not need to know the Gaussian vectors, as required.  If $j^*$ exists, one must recover the identity of the nonzero element. The typical way to do it is to compute the dot product of $x^{j^*}$ with the vector whose $i$-th coordinate equals $i$, for every $i\in [n]$. This is not available in our model. To replace this approach, for every $j=0,\ldots, \lceil \log_2 n\rceil$ and $b=0,\ldots, \lceil \log_2 n\rceil$ approximate the $\ell_2^2$ norm of the vector $x^j$ restricted to the set of elements in $[n]$ that have $1$ in the $b$-th position in their binary representation using $O(\log n)$ dot products with i.i.d. Gaussians. This allows one to read off the binary representation of the nonzero in $x^{j^*}$, and therefore yields an $\ell_0$ sampler.

\paragraph{Graph connectivity and spanning trees.} Since an $\ell_0$-sampler is the only sketch used by the connectivity sketch of~\cite{AhnGM12a}, it follows that a spanning forest of the input graph can be recovered by a sketch that fits our model and has a polylogarithmic number of rows. 

\paragraph{Approximate vertex connectivity.} The result of~\cite{GuhaMT15} uses the spanning tree sketch of~\cite{AhnGM12a} black box (the sketch is applied to random vertex induced subgraphs) to approximate vertex connectivity. Since the sketch of~\cite{AhnGM12a} can be implemented in our model, as described above, the result of~\cite{GuhaMT15} also can.

\subsection{$\ell_2$-heavy hitters, spectral sparsifiers and spanners}
Recall that in the $\varphi$-heavy hitters problem in $\ell_2$ one needs to design a sketching matrix $A$ such that for every $x\in \R^n$ one can recover a list of elements $L\subseteq [n]$ such that every $i\in [n]$ satisfying $x_i^2\geq \varphi \|x\|_2^2$ belongs to $L$ and no $i\in [n]$ with $x_i^2<c\varphi \|x\|_2^2$ for a constant $c>0$ belongs to $L$.

A basic $\ell_2$ heavy hitters sketch works by first hashing elements of $[n]$ to $B\approx 1/\varphi$ buckets, i.e. effectively defining $x^b$ for $b\in [B]$ to be the restriction of $x$ to bucket $b$, and computing the sum of elements of $x^b$ with random signs. In our model we can replace the random signs with random Gaussians, so that the resulting dot product is Gaussian with variance $\|x^b\|_2^2$. Fixing any $j\in [n]$ and letting $b$ denote the bucket that $j$ hashes to we get that a single hashing  can be used to obtain an estimate of its absolute value that is correct up to constant factor and an additive $O(1/\sqrt{B}) \|x\|_2$ term with probability\footnote{We use the fact that the dot product of $x^b$ with a random Gaussian vector will be distributed as $g_j x_j+N(0, \|x_{-j}^b\|_2^2)$, and $|g_j|$ is at least a constant with probability at least $9/10$. Here $x^b_j$ stands for the vector obtained from $x^b$ by zeroing out entry $j$.} at least $9/10$. We can now repeat the estimator $O(\log n)$ times and include in $L$ elements that are estimated as larger than a $c' \varphi \|x\|_2$ for a sufficiently small constant $c'>0$ in absolute value. Therefore, setting $B=O(1/\varphi)$ achieves the required bounds. This yields an $\ell_2$-heavy hitters sketch with decoding time nearly linear in the size $n$ of the universe. The decoding time can be improved to $(1/\varphi)\cdot \text{poly}(\log n)$ using a bit-encoding approach similar to the one from Section~\ref{sec:l0-samplers} above.

\paragraph{Spectral sparsifiers and spanners.} Spectral sparsification sketches~\cite{KapralovLMMS14,KNST19,KapralovMMMNST20} require graph connectivity sketches, which we already implemented in Section~\ref{sec:l0-samplers}, as well as $\ell_2$-heavy hitters sketches, and therefore can also be implemented in our model. Non-adaptive sketching algorithms for spanner construction~\cite{FiltserKN21} rely on spectral sparsification sketches that are applied to vertex-induced subgraphs of the input graph. Thus, these sketches can also be implemented in our model with at most a polylogarithmic loss in the number of rows.


\section{Background}\label{sec:background} 

This appendix includes a summary of basic tools from probability and information theory, and spectral graph theory that we use in our paper.  

\subsection{Background in Probability and Information Theory}\label{sec:info} 

The proof of basic facts included in this part can be found in~\cite{CoverT06}. 

\paragraph{KL-divergence.} For continuous distributions $P$ and $Q$, the \emph{Kullback--Leibler divergence} (KL-divergence) of $P$ from $Q$ is 
\[
	\KLD{P}{Q}{}=\expect_{X\sim P}\left[\log \left(\frac{\mathbf{d}P(X)}{\mathbf{d}Q(X)}\right)\right].
\]
We may abuse the notation $\KLD{X}{Y}{}$ for random variables $X$ and $Y$ to denote the KL-divergence of the distribution of $X$ from the distribution of $Y$.

\begin{fact}[Chain rule of KL-divergence]\label{fact:kl-chain-rule}
	Let $P(X,Y)$ and $Q(X,Y)$ be two distributions for a pair of random variables $X$ and $Y$. Then, 
	\[
		\KLD{P(X,Y)}{Q(X,Y)}{} = \KLD{P(X)}{Q(X)}{} + \Exp_{x \sim P} \bracket{\KLD{P(Y \mid X=x)}{Q(Y \mid X=x)}{}}. 
	\]
	In particular, if $X \perp Y$ in both distributions $P$ and $Q$, then, 
	\[
		\KLD{P(X,Y)}{Q(X,Y)}{} =  \KLD{P(X)}{Q(X)}{} +  \KLD{P(Y)}{Q(Y)}{}. 
	\]
\end{fact}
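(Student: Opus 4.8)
The plan is to unwind the definition of KL-divergence and exploit the multiplicative decomposition of the joint Radon--Nikodym derivative into a marginal factor and a conditional factor. First I would write, directly from the definition, $\KLD{P(X,Y)}{Q(X,Y)}{} = \Exp_{(x,y)\sim P}\bracket{\log\frac{\mathbf{d}P(x,y)}{\mathbf{d}Q(x,y)}}$. The key identity is the chain rule for densities, $\frac{\mathbf{d}P(x,y)}{\mathbf{d}Q(x,y)} = \frac{\mathbf{d}P(x)}{\mathbf{d}Q(x)}\cdot\frac{\mathbf{d}P(y\mid x)}{\mathbf{d}Q(y\mid x)}$, valid $P$-almost surely whenever $P \ll Q$ (and if $P \not\ll Q$ both sides of the claimed equality are $+\infty$, so there is nothing to prove). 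Taking logarithms turns this product into a sum, and linearity of expectation splits $\KLD{P(X,Y)}{Q(X,Y)}{}$ into the two pieces $\Exp_{(x,y)\sim P}\bracket{\log\frac{\mathbf{d}P(x)}{\mathbf{d}Q(x)}}$ and $\Exp_{(x,y)\sim P}\bracket{\log\frac{\mathbf{d}P(y\mid x)}{\mathbf{d}Q(y\mid x)}}$.

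Next I would simplify each of the two terms. The first integrand depends only on $x$, so its expectation under the joint law $P$ equals its expectation under the marginal $P(X)$, which is exactly $\KLD{P(X)}{Q(X)}{}$. For the second term I would factor the expectation over the joint law as an outer expectation over $x \sim P(X)$ and an inner expectation over $y \sim P(Y\mid X=x)$; the inner expectation of $\log\frac{\mathbf{d}P(y\mid x)}{\mathbf{d}Q(y\mid x)}$ is by definition $\KLD{P(Y\mid X=x)}{Q(Y\mid X=x)}{}$, which yields the claimed term $\Exp_{x\sim P}\bracket{\KLD{P(Y\mid X=x)}{Q(Y\mid X=x)}{}}$ and hence the first displayed equality. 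For the ``in particular'' clause, I would note that independence of $X$ and $Y$ under both $P$ and $Q$ forces $P(Y\mid X=x)=P(Y)$ and $Q(Y\mid X=x)=Q(Y)$ for $P$-almost every $x$, so the conditional divergence is constant in $x$ and equals $\KLD{P(Y)}{Q(Y)}{}$; taking the now-trivial expectation over $x$ gives the second display.

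The only genuine obstacle here is measure-theoretic bookkeeping: ensuring the density factorization is legitimate and that all the integrals are well-defined, which in full generality is taken care of by the $P \ll Q$ versus $P \not\ll Q$ dichotomy mentioned above. In our applications this is a non-issue, since every distribution to which the fact is applied is either discrete or a finite-dimensional Gaussian, so all the relevant densities exist and the factorization is immediate; everything else is a one-line use of Fubini's theorem together with linearity of expectation.
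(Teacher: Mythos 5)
Your proof is correct: it is the standard density-factorization argument (factor $\frac{\mathbf{d}P(x,y)}{\mathbf{d}Q(x,y)}$ into marginal and conditional parts, take logs, split the expectation), which is exactly the argument in the reference~\cite{CoverT06} that the paper cites for this fact rather than proving it itself. The measure-theoretic caveats you flag are handled appropriately, and in the paper's applications (discrete and Gaussian distributions) they are indeed immaterial.
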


\paragraph{Total variation distance.} Similarly, for continuous distributions $P$ and $Q$ over the same sample space $\Omega$, the \emph{total variation distance} (TVD) between $P$ and $Q$ is
\[
	\tvd{P}{Q} = \sup_{\Omega' \subseteq \Omega}\card{P(\Omega') - Q(\Omega')}. 
\]

\begin{fact}\label{fact:tvd-sample}
	Suppose we are given a single sample $s$ chosen uniformly at random from either distribution $P$ or $Q$. The best probability of success in determining the source of $s$ is
	\[
		\frac{1}{2} + \frac{1}{2} \cdot \tvd{P}{Q}. 
	\]
\end{fact}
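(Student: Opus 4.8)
The plan is to identify the Bayes-optimal decision rule, compute its success probability, and then argue that no rule can do better. First I would observe that any (possibly randomized) strategy for guessing the source of $s$ is without loss of generality deterministic: a randomized strategy is a convex combination of deterministic ones, so its success probability is an average of theirs and cannot exceed the best deterministic value. A deterministic strategy is in turn specified by a measurable set $B \subseteq \Omega$ on which the algorithm answers ``$P$'' (and answers ``$Q$'' on $\Omega \setminus B$).

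For such a rule, since the source is $P$ or $Q$ each with probability $1/2$, the success probability is
\[
\frac{1}{2} P(B) + \frac{1}{2} \big(1 - Q(B)\big) = \frac{1}{2} + \frac{1}{2}\big(P(B) - Q(B)\big).
\]
By the definition $\tvd{P}{Q} = \sup_{\Omega' \subseteq \Omega} |P(\Omega') - Q(\Omega')|$, we have $P(B) - Q(B) \le \tvd{P}{Q}$ for every $B$, so the success probability of any rule is at most $\tfrac12 + \tfrac12\,\tvd{P}{Q}$.

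For the matching lower bound (achievability) I would take $B^\star$ to be the set on which the density of $P$ is at least that of $Q$ with respect to any common dominating measure (e.g.\ $(P+Q)/2$). The standard identity that $\tvd{P}{Q}$ equals the integral of the positive part of $\mathbf{d}P - \mathbf{d}Q$ gives $P(B^\star) - Q(B^\star) = \tvd{P}{Q}$, so the rule ``answer $P$ iff $s \in B^\star$'' attains success probability exactly $\tfrac12 + \tfrac12\,\tvd{P}{Q}$. Combining the upper and lower bounds yields the claim.

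There is essentially no serious obstacle; the only points needing mild care are the reduction from randomized to deterministic rules (handled by averaging over the internal randomness) and, in the continuous setting, phrasing the optimal set $B^\star$ and the identity $\tvd{P}{Q} = \int \max(\mathbf{d}P - \mathbf{d}Q,\,0)$ via Radon--Nikodym derivatives against a common dominating measure, both of which are routine.
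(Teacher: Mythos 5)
Your argument is correct and is exactly the standard Bayes-optimal-test argument that the paper implicitly relies on: it does not prove \Cref{fact:tvd-sample} itself but defers to~\cite{CoverT06}, where the same reduction to deterministic rules, the identity $\tfrac12 + \tfrac12(P(B)-Q(B))$, and achievability via the set where the density of $P$ dominates that of $Q$ appear. Nothing further is needed.
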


\begin{fact}\label{fact:tvd-chain-rule}
	Let $P(X,Y)$ and $Q(X,Y)$ be two distributions for a pair of random variables $X$ and $Y$. Then, 
	\[
		\tvd{P(X,Y)}{Q(X,Y)} \leq \tvd{P(X)}{Q(X)} + \Exp_{x \sim P} \bracket{\tvd{P(Y \mid X=x)}{Q(Y \mid X=x)}}. 
	\]
	In particular, if $X$ has the same marginal distribution in $P$ and $Q$, then
	\[
		\tvd{P(X,Y)}{Q(X,Y)} \leq  \Exp_{x \sim P}\tvd{P(Y \mid X=x)}{Q(Y \mid X=x)}. 
	\]
\end{fact}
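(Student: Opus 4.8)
The final statement is the standard chain‑rule bound for total variation distance (\Cref{fact:tvd-chain-rule}), so what is needed is not a new idea but a clean reduction to a density‑level computation. The plan is to first replace the "supremum over events" definition of $\tvd{\cdot}{\cdot}$ given in the text with its equivalent $L_1$ form: for distributions $P,Q$ over a common sample space one has $\tvd{P}{Q} = \tfrac12\int |\,\mathrm{d}P - \mathrm{d}Q\,|$, the supremum being attained by the event $\{x: \mathrm{d}P(x) > \mathrm{d}Q(x)\}$. (To make sense of this when $P,Q$ are continuous, fix a common dominating measure $\mu$ and work with the densities $\mathrm{d}P/\mathrm{d}\mu$, $\mathrm{d}Q/\mathrm{d}\mu$; all the integrals below are with respect to $\mu$.) Once we are in $L_1$ form, the joint‑versus‑marginal factorization of probability gives the chain rule essentially for free.

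Concretely, I would write $\mathrm{d}P(x,y) = \mathrm{d}P(x)\,\mathrm{d}P(y\mid x)$ and $\mathrm{d}Q(x,y) = \mathrm{d}Q(x)\,\mathrm{d}Q(y\mid x)$ and use the add‑and‑subtract identity
\[
\mathrm{d}P(x,y) - \mathrm{d}Q(x,y) = \mathrm{d}P(x)\bigl(\mathrm{d}P(y\mid x) - \mathrm{d}Q(y\mid x)\bigr) + \bigl(\mathrm{d}P(x) - \mathrm{d}Q(x)\bigr)\mathrm{d}Q(y\mid x).
\]
Taking absolute values, integrating over $(x,y)$, multiplying by $\tfrac12$, and applying the triangle inequality splits the bound into two pieces. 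In the first piece, integrating out $y$ turns $\tfrac12\int|\mathrm{d}P(y\mid x) - \mathrm{d}Q(y\mid x)|$ into $\tvd{P(Y\mid X=x)}{Q(Y\mid X=x)}$, and the leftover weight $\mathrm{d}P(x)$ makes the $x$‑integral exactly $\Exp_{x\sim P}\tvd{P(Y\mid X=x)}{Q(Y\mid X=x)}$. In the second piece, $\int \mathrm{d}Q(y\mid x) = 1$ for every $x$, leaving $\tfrac12\int|\mathrm{d}P(x)-\mathrm{d}Q(x)| = \tvd{P(X)}{Q(X)}$. Adding the two pieces gives the claimed inequality.

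The "in particular" statement then follows immediately: if $X$ has the same marginal law under $P$ and $Q$, then $\tvd{P(X)}{Q(X)} = 0$ and only the conditional term survives. For the fully discrete version one just replaces every integral by a sum, and nothing changes.

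I do not expect a genuine obstacle here; the only things that require a moment of care are (i) justifying the passage from the sup‑definition to the $L_1$ form, which is the one non‑mechanical ingredient, and (ii) the measure‑theoretic bookkeeping — choosing the dominating measure and invoking Fubini to exchange the order of integration when peeling off $y$ — which is entirely routine.
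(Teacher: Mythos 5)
Your argument is correct and is exactly the standard proof: the paper itself does not prove \Cref{fact:tvd-chain-rule} but defers it to~\cite{CoverT06}, and your route via the $L_1$ form of total variation, the add-and-subtract factorization of the joint densities, and the triangle inequality is the canonical argument given there. The "in particular" clause follows as you say since the marginal term vanishes when $P(X)=Q(X)$.
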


Pinsker's inequality relates KL-divergence to total variation distance. 

\begin{fact}[Pinsker's inequality]\label{fact:pinsker}
	For any pairs of distributions $P$ and $Q$ over the same domain, 
	\[
		\tvd{P}{Q} \leq \sqrt{\frac{1}{2} \cdot \KLD{P}{Q}{}}. 
	\]
\end{fact}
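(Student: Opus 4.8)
The plan is to use the classical two-step route: first reduce the inequality to the case of two-point (Bernoulli) distributions via a data-processing argument, and then dispatch that case with a short one-variable calculus estimate.

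For the reduction, I would fix the set $A := \{x : \tfrac{\mathbf{d}P}{\mathbf{d}Q}(x) \ge 1\}$ on which $P$ dominates $Q$ and use the standard identity $\tvd{P}{Q} = P(A) - Q(A)$. Writing $a := P(A)$ and $b := Q(A)$, and letting $\bar P,\bar Q$ be the distributions on $\{0,1\}$ induced by mapping $x$ to the indicator of $x \in A$, the log-sum inequality applied to the two cells $A$ and $\Omega\setminus A$ (equivalently, the data-processing inequality for KL-divergence, i.e.\ convexity of $t\mapsto t\ln t$) yields $\KLD{\bar P}{\bar Q}{} \le \KLD{P}{Q}{}$, while plainly $\tvd{\bar P}{\bar Q} = |a-b| = \tvd{P}{Q}$. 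Hence it suffices to prove $(a-b)^2 \le \tfrac12\bigl(a\ln\tfrac ab + (1-a)\ln\tfrac{1-a}{1-b}\bigr)$ for all $a,b\in[0,1]$.

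For the binary case I would set $\Delta(a,b) := a\ln\tfrac ab + (1-a)\ln\tfrac{1-a}{1-b} - 2(a-b)^2$ and show $\Delta\ge 0$. Fixing $a$ and differentiating in $b$, a short computation gives
\[
\frac{\partial \Delta}{\partial b} \;=\; -\frac ab + \frac{1-a}{1-b} + 4(a-b) \;=\; \frac{b-a}{b(1-b)} + 4(a-b) \;=\; (a-b)\Bigl(4 - \frac{1}{b(1-b)}\Bigr).
\]
Since $b(1-b)\le \tfrac14$ on $[0,1]$, the factor $4 - 1/(b(1-b))$ is nonpositive, so $\partial_b\Delta \le 0$ for $b<a$ and $\partial_b\Delta \ge 0$ for $b>a$; thus $b=a$ is the global minimizer of $\Delta(a,\cdot)$ on $[0,1]$ and $\Delta(a,b)\ge \Delta(a,a)=0$, as needed.

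I do not expect a genuine obstacle here: essentially all of the content is the elementary estimate above. The only points needing a little care are (i) invoking the log-sum inequality rather than just the finite-alphabet data-processing inequality, so the reduction is valid for arbitrary, possibly continuous, $P$ and $Q$, and (ii) the harmless degenerate cases — $\KLD{P}{Q}{}=+\infty$ whenever $P\not\ll Q$ (which subsumes $b\in\{0,1\}$ with $a\ne b$ in the binary step), and both sides vanishing when $P=Q$.
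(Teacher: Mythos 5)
Your proof is correct: the reduction to the two-point case via the set $A=\{x:\tfrac{\mathbf{d}P}{\mathbf{d}Q}(x)\ge 1\}$ together with the log-sum/data-processing inequality is valid for general (possibly continuous) $P,Q$, your derivative computation $\partial_b\Delta=(a-b)\bigl(4-\tfrac{1}{b(1-b)}\bigr)$ checks out, and the degenerate cases are handled. The paper itself offers no proof of this fact, deferring to~\cite{CoverT06}, and your argument is essentially the standard one given there, so there is nothing to reconcile.
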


\subsection{Background in Spectral Graph Theory}\label{sec:spectral}

	Let $G=(V,E)$ be an undirected graph on $n$ vertices.
	Let $d_1,\ldots,d_n$ be the degrees. 
	For a vertex set $S\subseteq V$, its \emph{volume} in $G$ is 
	\[
	\vol_G(S) := \sum_{u\in S}d_u.
	\]
	When there is no ambiguity, we may omit the subscript $G$, and denote it by $\vol(S)$.

	For $S,T\subseteq V$, $E(S,T)$ is the set of edges between $S$ and $T$, i.e., $E(S,T):=\{(u,v)\in E:u\in S,v\in T\}$.
	The \emph{conductance} of $G$ is 
	\[
	\varphi(G) := \min_{S\subseteq V}\frac{\left|E(S,V\setminus S)\right|}{\min\{\vol(S), \vol(V\setminus S)\}}.
	\]
	We say that $G$ is a \emph{$\varphi$-expander} if its conductance is at least $\varphi$.

	We associate the following matrices with a graph $G=(V,E)$: 
	\begin{itemize}
	\item The \emph{adjacency matrix} $\bA$ of $G$ the $n \times n$ matrix such that $\bA_{u,v} = 1$ iff $(u,v)$ is an edge in $E$. 
	\item The \emph{degree diagonal matrix} $\bD$ of $G$ is the $n\times n$ matrix such that $\bD_{v,v}=d_v$ and $0$ elsewhere.
	\item The \emph{signed edge-incidence matrix} $\bB$ of $G$ is the ${{n}\choose{2}} \times n$ matrix such that $\bB_{(u,v),w}$ is $1$ if $w=u$ and $(u,v)$ is an edge in $E$, $-1$ if $w=v$ and $(u,v)$ is an edge in $E$, and $0$ otherwise. 
	\item The \emph{Laplacian matrix} $\bL$ is the $n\times n$ matrix 
	\[
	\bI-\bA=\sum_{e=(u,v)\in E}b_eb_e^\top,
	\]
	where $b_e$ is the vector with value $1$ in coordinate $u$, $-1$ in coordinate $v$, and $0$ in all other coordinates.
	\item The \emph{normalized Laplacian matrix} $\widetilde{\mathbf{L}}$ is the $n\times n$ matrix such that
	\[
		\widetilde{\mathbf{L}}_{i,j}=\begin{cases}
			1 & i=j, \\
			-\frac{1}{(d_id_j)^{1/2}} & (i,j)\in E, \\
			0 & \textrm{o.w.}
		\end{cases}
	\]
	Equivalently, 
	\[
	\widetilde{\mathbf{L}}=\bI-\bD^{-1/2}\bA\bD^{-1/2}.
	\]
	\end{itemize}
\paragraph{Spectral gap:} 	For any graph $G$, both $\bL$ and $\widetilde{\mathbf{L}}$ are positive semidefinite.
	For $\widetilde{\mathbf{L}}$, the smallest eigenvalue is zero, with eigenvector $\bD^{1/2}\bOne$, where $\bOne$ is the all-one vector.
	Its largest eigenvalue is at most $2$.
	The \emph{spectral gap} of $G$ is the second smallest eigenvalue of $\widetilde{\mathbf{L}}$.
	Cheeger's inequality relates the conductance and the spectral gap.

	\begin{proposition}[Cheeger's inequality,~\cite{Chung96} Section 3]\label{lem:conduct_spgap}
		For any graph $G$, let $\lambda$ be its spectral gap, and $\varphi$ be its conductance, we have
		\[
			2\varphi\geq \lambda\geq \frac{\varphi^2}{2}.
		\]
	\end{proposition}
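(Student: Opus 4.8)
The plan is to prove the two inequalities separately from the variational (Rayleigh quotient) characterization of the spectral gap. Writing $x=\bD^{1/2}y$ in the Courant--Fischer formula for the second smallest eigenvalue of $\widetilde{\mathbf{L}}=\bI-\bD^{-1/2}\bA\bD^{-1/2}$, and recalling that the bottom eigenvector is $\bD^{1/2}\bOne$, one obtains
\[
\lambda=\min_{y\text{ non-constant}}\frac{\sum_{(u,v)\in E}(y_u-y_v)^2}{\min_{c\in\R}\sum_{v}d_v(y_v-c)^2},
\]
since $x^\top\widetilde{\mathbf{L}}x=y^\top\bL y=\sum_{(u,v)\in E}(y_u-y_v)^2$, $\|x\|_2^2=y^\top\bD y=\sum_v d_v y_v^2$, and minimizing over the shift $c$ exactly realizes the orthogonality constraint $\sum_v d_v y_v=0$ (the shifted vector has $\sum_v d_v(y_v-c)^2=\sum_v d_vy_v^2+c^2\vol(V)$ when the constraint holds).

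For the easy direction $\lambda\le 2\varphi$, I would take a set $S$ attaining the conductance, assume without loss of generality $\vol(S)\le\vol(V)/2$, and plug $y=\bOne_S$ into the formula: the numerator is $|E(S,V\setminus S)|=\varphi\cdot\vol(S)$, while the optimal shift $c=\vol(S)/\vol(V)$ makes the denominator $\vol(S)\vol(V\setminus S)/\vol(V)\ge\vol(S)/2$; dividing gives $\lambda\le 2\varphi$.

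The hard direction $\lambda\ge\varphi^2/2$ needs a rounding argument, and this is the main obstacle. Starting from an optimal eigenvector $y$ (so $\sum_v d_v y_v=0$), I would shift it by a volume-weighted median value $\mu$ of its coordinates — one for which both $\{v:y_v>\mu\}$ and $\{v:y_v<\mu\}$ have volume at most $\vol(V)/2$; because $\bL\bOne=0$ the edge differences are unchanged while the shift only increases $\sum_v d_v(\cdot)^2$, so $z=y-\mu\bOne$ still has Rayleigh quotient at most $\lambda$. Splitting $z=z^+-z^-$ into positive and negative parts (disjoint supports) and using the edgewise bound $(z_u-z_v)^2\ge(z^+_u-z^+_v)^2+(z^-_u-z^-_v)^2$, one of the two parts — call it $w$ — is a nonnegative vector with $\vol(\mathrm{supp}(w))\le\vol(V)/2$ and $\sum_{(u,v)\in E}(w_u-w_v)^2\le\lambda\sum_v d_vw_v^2$.

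Finally I would run the sweep cut on $w$: for $t$ uniform in $[0,\max_v w_v^2]$ set $S_t=\{v:w_v^2>t\}$. Then $\Exp[\vol(S_t)]$ is proportional to $\sum_v d_vw_v^2$, while $\Exp[|E(S_t,V\setminus S_t)|]$ is proportional to $\sum_{(u,v)\in E}|w_u^2-w_v^2|=\sum_{(u,v)\in E}|w_u-w_v|(w_u+w_v)$, which by Cauchy--Schwarz together with $(w_u+w_v)^2\le 2(w_u^2+w_v^2)$ is at most $\sqrt{2\lambda}\cdot\sum_v d_vw_v^2$. Hence $\Exp[\,|E(S_t,V\setminus S_t)|-\sqrt{2\lambda}\,\vol(S_t)\,]\le 0$, so some threshold yields a nonempty $S_t$ with $\vol(S_t)\le\vol(V)/2$ and $|E(S_t,V\setminus S_t)|\le\sqrt{2\lambda}\,\vol(S_t)$; by the definition of conductance $\varphi\le\sqrt{2\lambda}$, i.e. $\lambda\ge\varphi^2/2$. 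The delicate point to get right is the median/positive-part step, which is what guarantees $w$ is genuinely one-sided so that $\min\{\vol(S_t),\vol(V\setminus S_t)\}=\vol(S_t)$ holds throughout the sweep.
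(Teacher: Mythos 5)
The paper does not prove this proposition at all: it is imported as a known result, cited to Chung's treatment of Cheeger's inequality, and used as a black box. Your argument is correct and is essentially the standard proof from that literature (the variational characterization with the shift realizing the orthogonality constraint, the test vector $\bOne_S$ for $\lambda\leq 2\varphi$, and the median shift, positive/negative split, and sweep cut with $|w_u^2-w_v^2|=|w_u-w_v|(w_u+w_v)$ plus Cauchy--Schwarz for $\lambda\geq\varphi^2/2$), so it matches the intended source rather than offering a different route.
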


\paragraph{Effective resistance.} By treating each edge of $G$ as a resistor with unit resistance, we denote the \emph{effective resistance} between $u, v$ by $\eff{G}{u,v}$.
	For any pair of vertices $(u, v)$, we have
	\[
		\eff{G}{u,v}=b_{(u, v)}^{\top}\bL^{+}b_{(u, v)},
	\]
	where $\bL^+$ is the \emph{pseudoinverse} of $\bL$.

\paragraph{Expander decomposition.} We also use the following (variant of) expander decomposition that bounds the minimum degree of resulting expanders. The proof is a simple modification of standard decompositions, e.g. in~\cite{VempalaV00,SaranurakW19}, and is provided only for completeness.

\begin{proposition*}[Restatement of~\Cref{prop:expander-decomposition}]
	Let $G=(V, E)$ be any arbitrary graph on $n$ vertices and $m$ edges, and $\eps \in (0,1/2)$ and $\dmin \geq 1$ be parameters. The vertices of $G$ can be partitioned into subgraphs $H_1,\ldots,H_k$ such that: 
	\begin{enumerate}[label=($\roman*)$]
		\item Each $H_i$ is an $\eps$-expander with minimum degree $\dmin$; 
		\item At most $8\eps \cdot m\log{n} + n \cdot \dmin$ edges $E_0$ of $G$ do not belong to any subgraph $\set{H_i}_{i \in [k]}$. 
	\end{enumerate}
\end{proposition*}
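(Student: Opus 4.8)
The plan is to run the standard recursive sparse-cut decomposition, modified by a \emph{low-degree peeling} step that enforces the minimum-degree requirement. I would define a recursive procedure on an induced subgraph $H$ of $G$: (1) repeatedly delete every vertex of $H$ whose current degree inside $H$ is less than $\dmin$, putting all edges incident to deleted vertices into a global leftover set $E_0$, until the remaining graph $H'$ is either empty (then return nothing) or has minimum degree at least $\dmin$; (2) if $\varphi(H')\ge\eps$, output $H'$ as one of the $H_i$'s; (3) otherwise pick a cut $(S,V(H')\setminus S)$ witnessing $\varphi(H')<\eps$, add its edges to $E_0$, and recurse on the induced subgraphs $H'[S]$ and $H'[V(H')\setminus S]$. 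Applying this to $G$ produces vertex-disjoint induced subgraphs $H_1,\dots,H_k$: each $H_i$ was output only after peeling had terminated, so it has minimum degree $\ge\dmin$, and it passed the conductance test, so it is an $\eps$-expander; disjointness holds because the recursion splits the vertex set at each sparse cut and never touches a deleted vertex again; and every edge of $G$ not lying inside some $H_i$ was placed in $E_0$ (it is either incident to a deleted vertex or is a cut edge). Termination is clear since a sparse cut has both sides nonempty, so each recursive call strictly shrinks the vertex count; note also that with $\dmin\ge1$ isolated vertices are always peeled, so the minimum-degree claim is non-vacuous.

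It remains to bound $|E_0|$, which splits into the edges removed during peeling and the edges of the sparse cuts. For the first part, every vertex of $G$ is deleted at most once over the whole run, and when deleted it has fewer than $\dmin$ incident edges in its current subgraph; charging each such edge to whichever of its two endpoints is deleted first shows these contribute at most $n\cdot\dmin$ edges. For the second part, I would use the classical ``small side'' charging argument: when a sparse cut $(S,\bar S)$ is taken, say with $\vol_{H'}(S)\le\vol_{H'}(\bar S)$, the number of cut edges is less than $\eps\,\vol_{H'}(S)=\eps\sum_{v\in S}d_{H'}(v)$, so the total number of cut edges is at most $\eps\sum_{v}\sum_{\text{cuts with }v\text{ on the small side}}d_{H'}(v)$. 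Since the volume of the subgraph currently containing $v$ drops by a factor of at least two each time $v$ lands on the small side --- and never increases in between, because every intervening operation only deletes vertices or removes edges --- $v$ can be on the small side at most $\log_2(2m)+1$ times, each contributing at most $d_G(v)$. Summing over $v$ bounds the cut edges by $\eps\cdot(\log_2(2m)+1)\cdot 2m\le 8\eps\,m\log n$ using $m\le\binom{n}{2}$, and adding the two parts gives $|E_0|\le 8\eps\,m\log n+n\cdot\dmin$.

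The one step I would write out carefully is the halving claim underpinning the $O(\log m)$ bound: one has to verify that deleting low-degree vertices, and recursing on the larger side of other sparse cuts, can only decrease the volume of the component holding $v$, so that between two consecutive cuts placing $v$ on the small side the relevant volume genuinely halves; a lower bound of $2$ on this volume at the last such cut (any vertex of positive degree sits in a subgraph of volume $\ge 2$) then caps the number of occurrences. Everything else --- termination, vertex-disjointness, the expander and minimum-degree guarantees of the output, and the peeling charge --- is routine and essentially identical to the arguments in the decompositions of~\cite{VempalaV00,SaranurakW19}; I would also remark that the $H_i$ need not cover all of $V$, so the ``partition'' in the statement should be read as: vertex-disjoint induced $\eps$-expanders of minimum degree $\dmin$ together with a leftover set whose incident edges all lie in $E_0$, which is exactly what the downstream applications (e.g.~\Cref{lem:hierarchical}) use.
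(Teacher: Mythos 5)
Your proposal is correct and follows essentially the same route as the paper: recursive sparse-cut splitting interleaved with low-degree peeling, the $n\cdot\dmin$ bound from charging peeled edges to the deleted endpoint, and the $8\eps m\log n$ bound from the standard small-side charging argument with the factor-two volume drop. The only (immaterial) difference is bookkeeping: you charge each sparse cut to the degrees of vertices on the small side, while the paper charges it to the edges internal to the small side via $|E(S,V\setminus S)|\le 4\eps|E(S,S)|$; both yield the same bound, and your closing remark that the $H_i$ need not cover all of $V$ is a fair reading of what the downstream applications actually use.
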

\begin{proof}
	We use a standard sparse cut pruning (contributing the first term in extra edges) plus low degree pruning for vertices with degree $< \dmin$ (contributing the second term). 	
	
	Firstly, remove all \emph{vertices} with degree $<\dmin$ from $G$ and include their edges in $E_0$.
	If $G$ at this point is an $\eps$-expander, we terminate.
	Otherwise find a set $S\subseteq V$ such that $\vol(S)\leq \frac{1}{2}\vol(V)$ and $\left|E(S,V\setminus S)\right| < \eps \cdot \vol(S)$. 
	Partition the graph along the cut $(S, V\setminus S)$ and insert all edges in $E(S,V\setminus S)$ to $E_0$. 
	Recursively repeat both steps on the subgraphs induced by $S$ and $V\setminus S$ respectively. At the end, return the resulting expanders as the collection $H_1,\ldots,H_k$ in 
	part $(i)$ and the removed edges $E_0$ as the extra edges in part $(ii)$. 
	
	Firstly, we will only include at most $n \cdot \dmin$ edges in $E_0$ when removing vertices with degree $<\dmin$ in total. Moreover, whenever we remove edges $E(S,V\setminus S)$, we can ``charge'' 
	the removed edges to the edges with both endpoints in $S$.
	Since $\left|E(S,V\setminus S)\right| < \eps \cdot \vol(S)$, every edge in $S$ is charged at most $2\varepsilon/(1-\varepsilon)\leq 4\varepsilon$.
	Since $\vol(S)\leq \frac{1}{2}\vol(V)$, which is the smaller side of the cut, each edge can be charged at most $\log\!{{n}\choose{2}}<2\log n$ times, summing up to at most $8\varepsilon\log n$.
	This proves the proposition.
\end{proof}

\end{document}